\documentclass[12pt]{article}
\usepackage{amsmath}
\usepackage{amsthm}
\usepackage{graphicx}
\usepackage{enumerate}
\usepackage[numbers]{natbib}
\usepackage{bibunits}
\usepackage{url} 
\usepackage{multirow}
\newcommand{\blind}{1}

\usepackage{amssymb }
\usepackage{enumitem}
\usepackage{float}
\usepackage[ruled,vlined,linesnumbered]{algorithm2e}
\usepackage{chapterbib}
\usepackage[colorlinks=true, allcolors=blue]{hyperref}
\usepackage{titletoc}

\DeclareUnicodeCharacter{008D}{}
\DeclareUnicodeCharacter{0080}{}
\DeclareUnicodeCharacter{0093}{}

\newtheorem{theorem}{Theorem}[section]

\newtheorem{example}[theorem]{Example}

\newtheorem{remark}[theorem]{Remark}

\newtheorem{assumption}{Assumption}
\newtheorem{assumptionprime}{Assumption}

\newtheorem{lemma}[theorem]{Lemma}

\usepackage{setspace}
\usepackage{booktabs} 
\usepackage{caption} 
\usepackage{siunitx} 
\usepackage{threeparttable}
\usepackage{bbm}
\usepackage{xcolor}
\usepackage{sectsty}
\usepackage{listings}
\lstdefinestyle{promptstyle}{
  basicstyle=\ttfamily\small,
  breaklines=true,
  columns=fullflexible,
  frame=single,
  keepspaces=true,
  showstringspaces=false
}
\begin{document}


\def\spacingset#1{\renewcommand{\baselinestretch}%
{#1}\small\normalsize} \spacingset{1}


\if1\blind
{
  \title{\sf Empirical Likelihood with Generative AI}
  \author{ Jiguang Li \footnote{Jiguang Li is a doctoral student at the Booth School of Business, University of Chicago.},\,\, Sid Kankanala \footnote{Sid Kankanala is an Assistant Professor at the Booth School of Business, University of Chicago.} \,\,and Veronika Ro\v{c}kov\'a \footnote{Veronika Ročková is the Bruce Lindsay Professor in the Wallman Society of Fellows at the Booth School of Business, University of Chicago.}}
  \maketitle
} \fi

\if0\blind
{
   \title{\sf Empirical Likelihood with Generative AI}
  \maketitle
  \medskip
} \fi


\begin{abstract}
Moment conditions are widely used to identify parameters in models where the full likelihood is either unknown or intentionally left unspecified. Empirical likelihood methods address this problem by assigning probability weights to the observed data so that the sample moment conditions hold exactly. Building on this idea, we propose a nonparametric Bayesian framework based on exponentially tilted empirical likelihood. This Bayesian formulation is particularly appealing in settings where prior information is more naturally specified on the observables rather than on the underlying parameters. Such settings arise in the presence of auxiliary data sources or synthetic data generated by modern generative AI models.
Inference proceeds by projecting posterior draws from a Dirichlet process onto the moment-restricted model, yielding a computationally efficient procedure that is naturally amenable to parallelization.
We establish new Bernstein--von Mises and consistency theorems for the resulting projection posterior under both vanishing-prior and persistent-prior regimes. In an application to return prediction using overnight news headlines, we show that AI-generated auxiliary data can provide a useful source of indirect regularization when informative priors on the parameter itself are unavailable.
\end{abstract}

\noindent%
{\it Keywords:}  Bayesian nonparametrics, Bernstein-von Mises theorem, Bootstrap, Exponentially tilted empirical likelihood, Moment restrictions.  
\vfill
\vspace{-0.3cm}
\spacingset{1.8} 

\section{Introduction} \label{sec:intro}

In a large class of models, the parameter of interest is a finite-dimensional vector 
$\theta_0$ that satisfies a collection of moment restrictions:
\begin{equation} \label{eq:mc}
\mathbb{E}\!\left[g(x, \theta_0)\right] = 0.
\end{equation}
Here, $x \in \mathbb{R}^{d_x}$ denotes a random vector of observables, and 
$g(\cdot) = [g_1(\cdot), \dots, g_q(\cdot)]^\top$ is an $\mathbb{R}^q$-valued vector of known moment functions and $\theta_0 \in \Theta \subseteq \mathbb{R}^{d_{\theta}}$ is an unknown parameter of interest.  Such models are attractive because they do not require the researcher to fully specify the distribution of the data. Instead, identification is achieved through informative moment restrictions, often motivated by the structure of the application and the available sources of exogenous variation. In economics, common applications include models of consumer demand \cite{berry1995automobile,banks1997quadratic,blundell1994consumer}, firm productivity \cite{doraszelski2013r,boler2015r}, production functions \cite{ackerberg2015identification,levinsohn2003estimating}, dynamic panel data \cite{blundell1998initial,honore2025moment}, treatment effects \cite{chernozhukov2005iv,angrist1998children} and asset pricing \cite{hansen1982large,bansal1993no}.

In applications based on restrictions of the form in \eqref{eq:mc}, estimation has traditionally relied on two-step efficient generalized method of moments \cite{hansen1982large}, empirical likelihood \cite{10.1093/biomet/75.2.237,owen2001empirical}, and exponential tilting \cite{efron1981nonparametric,kitamura1997information}. While all three estimators are consistent and asymptotically efficient, empirical likelihood and exponential tilting are one-step procedures with distinct advantages: empirical likelihood has favorable higher-order asymptotic properties \cite{newey2004higher}, whereas exponential tilting is better behaved under model misspecification \cite{imbens2002generalized,imbens1995information}. Building on this complementarity, \cite{Schennach2007ETEL} introduced the exponentially tilted empirical likelihood (ETEL), which combines the higher-order refinements of empirical likelihood with the robustness of exponential tilting under misspecification. Additionally, it is shown that ETEL admits a nonparametric Bayesian interpretation in which the unknown data-generating distribution is integrated out as an infinite-dimensional nuisance parameter under a prior that favors entropy-maximizing weights \cite{f18ac6f4-7aa8-3e52-bc21-6455fbe9a264}.

Motivated by the attractive frequentist properties of ETEL, we develop a Bayesian framework in which posterior uncertainty about $\theta_0$ is obtained by filtering uncertainty about the data distribution through moment-restricted ETEL projections. This perspective is related in spirit to posterior-projection methods for constrained Bayesian inference, which project posterior draws onto a restricted space satisfying structural constraints \cite{LinDunson2014,ChakrabortyGhosal2022,AstfalckEtAl2026}. Here the projected object is different: we treat the unknown sampling distribution $F$ as the primitive object of inference, endow it with a nonparametric prior, and map each posterior draw of $F$ to $\theta^*(F)$ via a moment-restricted Kullback--Leibler (KL) projection induced by ETEL. Unlike classical ETEL, which uses the empirical distribution as its baseline, our approach defines the projection relative to a general discrete posterior draw of $F$, thereby allowing non-uniform weights over its support. This framework is especially appealing when the observed sample is limited but credible auxiliary information about the data-generating process is available.

To fix ideas, consider a Bayesian decision maker who is unable to specify a
tractable prior directly on $\theta_0$ but has access to a synthetic
auxiliary sample $\{x_j^*\}_{j=1}^m$, for example
obtained by repeatedly querying a large language model. We view the synthetic data as an approximate sample from a distribution $F_{\mathrm{AI}}$, which encodes indirect information about $\theta_0$ in the moment condition \eqref{eq:mc}. We formalize this using a Dirichlet process prior
$
F \sim \mathrm{DP}(\alpha, F_{\mathrm{AI}}),
$
where $\alpha > 0$ governs the strength of prior belief. Given an observed sample $\mathcal{D}_n = \{x_i\}_{i=1}^n$, the
posterior distribution (c.f. \cite{ferguson1974prior}) is
\begin{equation*} \label{eq:dp-posterior}
F \mid \mathcal{D}_n \sim \mathrm{DP}(\alpha + n,\, H_n), \qquad
H_n = \frac{\alpha}{\alpha + n}\,F_{\mathrm{AI}} +
\frac{n}{\alpha + n}\,\mathbb{P}_n,    
\end{equation*}
where $\mathbb{P}_n$ denotes the empirical distribution of
$\mathcal{D}_n$. The projection-based ETEL posterior for $\theta$ is then
obtained by pushing forward the posterior $F \mid \mathcal{D}_n$, that is, $\theta \mid \mathcal{D}_n \overset{d}{=} \theta^*(F)$.
Intuitively, auxiliary data from $F_{\mathrm{AI}}$ may regularize $\theta$ indirectly by encoding features of the data-generating process that are difficult to incorporate directly into a prior on $\theta$. In our applications, such information may include semantic patterns in news text, sector-specific language, or other distributional regularities that are informative about returns but difficult to encode directly in a prior on the structural parameter.

Our construction differs from the classical Bayesian ETEL (BETEL) approach of \cite{f18ac6f4-7aa8-3e52-bc21-6455fbe9a264,chib2018bayesian}, which treats the ETEL criterion as a likelihood-type object and combines it with a direct prior on $\theta$. This formulation is appealing when reliable prior information about $\theta$ is available, or when direct regularization of $\theta$ is important, and posterior computation is tractable. By contrast, we place a prior directly on $F$ and regularization of $\theta$ arises indirectly through the ETEL projection map $F \mapsto \theta^*(F)$. Part of our motivation for this approach is that, in many settings, prior
information may be more naturally formulated on the distribution of observables
than on the structural parameter $\theta$ itself; see, e.g.,
\cite{10.1214/aos/1176344611, ce5a0d81-15b4-3f2b-b252-6413a091240d}. This construction is also computationally attractive: Dirichlet process conjugacy makes posterior simulation for $F$ straightforward, while the ETEL projection can be computed independently across posterior draws.

Our approach to inference is closely related to recent work on parameters characterized as minimizers of expected loss functions \cite{10.1093/biomet/asz006,ohagan2025aipoweredbayesianinference}. The key distinction is that our parameter of interest is identified through general, possibly over-identified, nonlinear moment restrictions rather than through empirical risk minimization. This distinction is important because it allows our framework to accommodate a broad class of moment condition models that do not admit a natural empirical risk formulation. Our focus on ETEL is motivated by its interpretation as an efficient one-step estimator and by its favorable higher-order properties \cite{newey2004higher}, both of which are especially valuable when sample sizes are modest.

The main theoretical contributions of this paper are as follows. We
develop the inferential limit theory for the ETEL projection posterior of $\theta$. To obtain asymptotic approximations
that reflect the role of prior information in finite samples, we
consider two regimes: a vanishing-prior regime in which prior
influence is negligible relative to sampling uncertainty, and a persistent-prior
regime in which prior influence remains non-negligible. Under standard regularity conditions, we establish Bernstein-von Mises
(BvM) results for the posterior in both regimes. Our framework applies to the general over-identified setting in which
the number of moment conditions exceeds the dimension of the
parameter, and, to the best of our knowledge, provides the first BvM
result for the ETEL projection posterior induced by a nonparametric prior on $F$ in this generality. We further show that, in the exactly identified case, our procedure
nests the classical GMM Bayesian bootstrap of
\cite{Chamberlain01012003}, with the two procedures coinciding in
the limit as $\alpha \to 0$.  

We illustrate the practical viability of our procedure across a broad class of models, including asset pricing, average treatment effect estimation, demand estimation, and return prediction. Our generative AI experiments follow the increasingly popular trend of incorporating AI information in econometrics. Recent work has used generative AI to design in-silico experiments \cite{manning2024automated}, form economic expectations from historical news \cite{bybee2025ghost}, predict stock price reactions to news \cite{LopezLiraTang2023ChatGPTStocks, chen2022llmreturns}, and elicit investment preferences \cite{fedyk2024ai}, among other applications. Our results show that, when only moment restrictions are available, the Bayesian framework provides a natural way to incorporate AI-generated synthetic information as prior anchors within a model-free setting.

The paper is organized as follows. Section \ref{sec:etel} introduces the ETEL framework. Section \ref{sec:framework} describes our nonparametric Bayesian ETEL framework, with its asymptotic properties developed in Section \ref{sec:theory}. The empirical analysis has two parts. Section \ref{sec:betel-bootstrap} studies the baseline implementation, in which posterior uncertainty is driven by the observed sample and moment restrictions alone, using simulations and applications to asset pricing and average treatment effect estimation. Section \ref{sec:betel-ai} then turns to the AI-augmented implementation, in which large language models are used to construct a distributional prior for applications to financial-news return prediction and structural function recovery. Code for implementing our proposed algorithm and reproducing the experiments is available at
\url{https://github.com/JiguangLi/Empirical-Likelihood-with-Generative-AI}.

\section{Exponentially Tilted Empirical Likelihood} \label{sec:etel}

Moment restrictions in \eqref{eq:mc} provide a flexible way to learn about parameters of interest without specifying a full likelihood for the data-generating process. They encompass a broad class of econometric and statistical models in which identification is driven by structural moment conditions. Canonical examples include loss-based parameters, instrumental variables models, quantile regression, and dynamic panel models; see Appendix \ref{sec:motivating-examples}. We consider multiple contemporary examples later in Sections \ref{sec:betel-bootstrap}-\ref{sec:betel-ai}.

Exponentially tilted empirical likelihood (ETEL) procedures are widely used for inference in models identified by moment conditions; see, for example,
\cite{Schennach2007ETEL,kitamura2011bayesian,
f18ac6f4-7aa8-3e52-bc21-6455fbe9a264,chib2018bayesian,10.1111/rssb.12484, kim2026regularizedexponentiallytiltedempirical, 10.1093/biomet/asaa028,10.1111/rssb.12510,10.1111/j.1467-9868.2009.00715.x}. They provide a minimally invasive way to incorporate moment conditions into the empirical distribution by allowing unequal weights for individual observations. Unlike empirical likelihood, which selects weights by maximizing the nonparametric likelihood under moment restrictions \cite{10.1093/biomet/75.2.237,owen2001empirical, fccf7eeb-a52f-3454-95a0-7f88c15d64ea}, ETEL selects the feasible reweighting that is closest to the empirical distribution in KL divergence. Formally, given an observed sample $\mathcal{D}_n = \{x_i\}_{i=1}^n$, let $\mathbb{P}_n= n^{-1}\sum_{i=1}^n \delta_{x_i}$ denote the empirical distribution. For a fixed $\theta$, define:
\begin{equation*} \label{eq:prob-set}
    \mathcal{M}_{\theta}(\mathbb{P}_n)
:=
\Bigl\{
P:\;
P \ll \mathbb{P}_n,
\;
\int g(x,\theta)\,dP(x)=0
\Bigr\},
\end{equation*}
the class of probability measures defined on the same support as $\mathbb{P}_n$ that satisfy the sample moment restrictions. Here $P \ll \mathbb{P}_n$ means that $P$ may assign mass only to points in the support of $\mathbb{P}_n$. Hence $Q_{\theta} \in \mathcal{M}_{\theta}(\mathbb{P}_n)$ must take the form $Q_{\theta}=\sum_{i=1}^{n} w_i \delta_{x_i}$. ETEL finds the optimal reweighting $Q_{\theta}^*:= \sum_{i=1}^n w_{i}^*(\theta) \delta_{x_i}$ by projecting $\mathbb{P}_n$ to the moment restriction set $\mathcal{M}_{\theta}(\mathbb{P}_n)$:
\begin{equation} \label{eq:etel-inner}
    Q_\theta^*
\in
\arg\min_{Q_\theta \in \mathcal{M}_{\theta}(\mathbb{P}_n)}
D_{\mathrm{KL}}(Q_\theta \,\|\, \mathbb{P}_n).
\end{equation}
Since
\[
D_{\mathrm{KL}}(Q_\theta \,\|\, \mathbb{P}_n)
=
\sum_{i=1}^n w_i \log\!\left(\frac{w_i}{1/n}\right)
=
\log n + \sum_{i=1}^n w_i \log w_i,
\]
minimizing KL divergence is equivalent to maximizing the entropy of a discrete distribution defined through $\{w_i\}_{i=1}^n$. ETEL may thus be viewed as selecting the least informative reweighting of the empirical distribution that satisfies the sample moment restrictions. When $0$ lies in the interior of the convex hull of $\{g(x_i,\theta)\}_{i=1}^n$, standard Lagrange multiplier arguments following \cite{Schennach2007ETEL} yield exponentially tilted weights. The generalized weighted form used in our framework is given later in \eqref{eq:AI-SOL}.


Given the projection $Q_{\theta}^*:= \sum_{i=1}^n w_{i}^*(\theta) \delta_{x_i}$, one can construct the ETEL estimator using a ``maximum likelihood'' strategy, following \cite{Schennach2007ETEL}, by profiling these implied weights over $\theta$: 
\begin{equation}
\hat{\theta}_{\mathrm{ETEL}} = \arg\min_{\theta \in \Theta} \frac{1}{n} \sum_{i=1}^n - \log\left[n\,w_i^*(\theta)\right],
\label{eq:etel-estimator}
\end{equation}
which is equivalent to maximizing $\prod_{i=1}^n w_i^*(\theta)$. For our purpose, it is useful to rewrite it as
\begin{equation} \label{eq:etel-estimator-v2}
    \hat{\theta}_{\mathrm{ETEL}} = \arg \min_{\theta \in \Theta} \sum_{i=1}^n \frac{1}{n} \log\left[ \frac{1/n}{w_i^*(\theta)} \right]=\arg \min_{\theta \in \Theta} D_{\mathrm{KL}}(\mathbb{P}_n \,\|\, Q_\theta^* ).
\end{equation}
Thus, the standard ETEL estimator can also be viewed as minimizing another KL criterion. This re-interpretation of the ETEL estimator reveals that the ETEL procedure applies a KL projection twice, once for obtaining the weights $w_i^*(\theta)$ in \eqref{eq:etel-inner} and the second time for finding the estimator $\hat{\theta}_{\mathrm{ETEL}}$ in \eqref{eq:etel-estimator}. 

The ETEL framework is attractive for several reasons. First, it works
directly with the restrictions in \eqref{eq:mc}, without requiring a
fully specified likelihood for the data-generating process. Second, compared to two-step GMM, ETEL is an efficient one-step estimator and retains the favorable higher-order behavior of empirical likelihood
\cite{newey2004higher,Schennach2007ETEL}, making it particularly
appealing in small to moderate samples. Finally, ETEL behaves more robustly under model misspecification, in the sense that the associated estimator continues to target an
interpretable pseudo-true value and converges at the usual
$\sqrt{n}$ rate.\footnote{More precisely,
\cite{Schennach2007ETEL} shows that standard empirical likelihood can
fail to remain root-$n$ convergent under misspecification when the
moment functions are unbounded, whereas ETEL avoids this failure.}

Beyond strong frequentist properties, it is noteworthy that ETEL admits a specialized Bayesian interpretation. Specifically, \cite{f18ac6f4-7aa8-3e52-bc21-6455fbe9a264} show that
ETEL arises as the limit of a marginal posterior distribution obtained by integrating out an infinite-dimensional nuisance parameter under a particular conditionally i.i.d. likelihood representation and a carefully designed nonparametric prior. While this Bayesian motivation is intriguing, we do not rely on it here. Instead, we use the ETEL criterion as a likelihood-free link between the sampling distribution $F$ and the target parameter, and proceed with a nonparametric Bayesian analysis by treating the sampling distribution $F$ as random with a prior.

\section{Nonparametric Bayesian ETEL} \label{sec:framework}

We develop a Bayesian nonparametric framework by placing a prior on the sampling distribution rather than directly on $\theta$. The resulting posterior over sampling distributions induces a push-forward posterior on \(\theta\), naturally extending the KL projection and profiling steps that define the standard ETEL estimator.

\subsection{DP Posterior Computation} \label{pos-comp}

Perhaps the simplest way to place a nonparametric prior on the unknown
sampling distribution $F$ is through a Dirichlet process prior
\cite{ferguson1974prior}. Let $H_0$ be a baseline probability measure
and let $\alpha>0$ denote the concentration parameter. We model the
sampling distribution as $F \sim \mathrm{DP}(\alpha,H_0).$
Given an observed sample $\mathcal{D}_n=\{x_i\}_{i=1}^n$, conjugacy
gives the posterior $\mathrm{DP}(\alpha+n,H_n)$, where $H_n = \frac{\alpha}{\alpha+n} H_0 + \frac{n}{\alpha + n} \mathbb{P}_n$. If $H_0$ is continuous, we approximate the DP posterior by refreshing $m$ auxiliary atoms $\{x_j^*\}_{j=1}^m\sim H_0$ at each iteration and then applying Dirichlet reweighting. This implementation was deployed, for example, in \cite{fong2019scalablenonparametricsamplingmultimodal} as an alternative to the stick-breaking approximation \cite{Sethuraman1994}. The corresponding DP posterior computation can be found in lines $3$-$5$ of Algorithm \ref{alg:ai-betel}. If $H_0$ is atomic with finite support $\{x_j^*\}_{j=1}^m$, then the DP posterior draw can be represented exactly as a Dirichlet reweighting of the union of atoms $\{x_i\}_{i=1}^n$ and $\{x_j^*\}_{j=1}^m$ \cite{4dbd8cb8-8a01-32c4-b1eb-dd3774a40419}.

\begingroup
\setstretch{1.0}
\RestyleAlgo{ruled}
\SetKwInput{KwData}{Data}
\SetKwInput{KwHyper}{Hyperparameters}
\begin{algorithm}[!t]
\SetAlgoLined
\DontPrintSemicolon
\LinesNumbered
\caption{ETEL Posterior Projection Sampler}
\label{alg:ai-betel}

\KwData{Observed data $\{x_i\}_{i=1}^n$, continuous prior base measure $H_0$.}
\KwHyper{$\alpha \geq 0$ (DP concentration parameter), $B$ (number of posterior draws), $m$ (truncation level).}

Form the posterior base measure
$H_n=\frac{n}{n+\alpha}\mathbb{P}_n+\frac{\alpha}{n+\alpha}H_0$.

\For{$b=1,\ldots,B$}{
    Draw auxiliary atoms
    $x_1^{*(b)},\ldots,x_m^{*(b)}
    \stackrel{\mathrm{iid}}{\sim} H_0$.\;

    Draw weights
    $(v_1^{(b)}, \ldots, v_n^{(b)}, v_{n+1}^{*(b)},\ldots,v_{n+m}^{*(b)})
    \sim
    \mathrm{Dirichlet}\!\left(
    \underbrace{1,\ldots,1}_{n},
    \underbrace{\alpha/m,\ldots,\alpha/m}_{m}
    \right)$.\;

    Approximate the posterior by
    \[
    F^{(b)}
    =
    \sum_{i=1}^n v_i^{(b)}\,\delta_{x_i}
    +
    \sum_{j=1}^m v_{n+j}^{*(b)}\,\delta_{x_j^{*(b)}}.
    \]

    Solve the weighted ETEL problem \eqref{eq:AIEL}--\eqref{eq:AI-LOSS}
    with $F=F^{(b)}$ to obtain
    $\theta^{*(b)}=\theta^*(F^{(b)})$.\;
}
\KwRet{$\{\theta^{*(b)}\}_{b=1}^B$}
\end{algorithm}
\endgroup

\subsection{ETEL Projection Functional} \label{subsec:projection-functional}

Let $F^{(b)}$ be the $b^{th}$ DP posterior draw as described in Algorithm \ref{alg:ai-betel}. We may write  $$F^{(b)}=  \sum_{i=1}^n v_i^{(b)}\,\delta_{x_i}+ \sum_{j=1}^m v_{n+j}^{*(b)}\,\delta_{x_j^{*(b)}} = \sum_{k=1}^{K_b} \tilde{v}_k^{(b)} \delta_{\tilde{x}_k^{(b)}},$$
where $\{\tilde{x}_k^{(b)}\}_{k=1}^{K_b}$ are distinct support points of $F^{(b)}$. For simplicity, we suppress the dependence on $b$ and write $F^{(b)}=\sum_{k=1}^K v_k\,\delta_{x_k}$ below. To derive the induced ETEL projection functional $\theta^*(F^{(b)})$, we follow the same principles used to construct the standard ETEL estimator. Observe that the standard ETEL estimator can be viewed as a nested optimization problem: for a fixed $\theta$, the inner step finds the KL projection in \eqref{eq:etel-inner}, and the outer optimization finds $\hat{\theta}_{\mathrm{ETEL}}$ by minimizing the KL criterion derived in \eqref{eq:etel-estimator-v2}. 

We generalize this construction by defining $\theta^*(F^{(b)})$ through an analogous nested optimization problem. For fixed $\theta\in\Theta$, let:
$$\mathcal{M}_{\theta}(F^{(b)}) :=
\Bigl\{ P:\; P \ll F^{(b)}, \; \int g(x,\theta)\,dP(x)=0 \Bigr\},$$
the class of probability measures supported on $F^{(b)}$ that satisfy the moment restriction. The DP framework is particularly convenient here because the DP posterior draw $F^{(b)}$ is almost surely discrete, yielding a tractable constrained optimization. Next, define the KL projection of $F^{(b)}$ onto $\mathcal{M}_{\theta}(F^{(b)})$:
\begin{equation}\label{eq:projection}
P_{\theta}^{*}(F^{(b)})
\in
\arg\min_{P\in \mathcal{M}_{\theta}(F^{(b)})}
D_{\mathrm{KL}}(P\|F^{(b)}).
\end{equation}
Thus, the projection  amounts to finding the least informative reweighting of the posterior draw $F^{(b)}$ that satisfies the moment restrictions. This is the key conceptual departure from classical ETEL. In standard ETEL, the projection is taken relative to the fixed empirical distribution $\mathbb{P}_n$. Here, the reference distribution becomes $F^{(b)}$, which fluctuates around $H_n$ and carries non-uniform weights over its support.

Solving \eqref{eq:projection} follows from the same Lagrange multiplier
argument used in \cite{Schennach2007ETEL}. Any $P \in \mathcal{M}_{\theta}(F^{(b)})$ must take the form $P:= \sum_{k=1}^K p_k \delta_{x_k}$. Since $P \ll F^{(b)}$, we may equivalently write $p_k=v_k w_k$, where $w_k$ is the unknown multiplicative tilt applied to the baseline weight $v_k$. Given that
$$
D_{\mathrm{KL}}(P\|F^{(b)})
=
\sum_{k=1}^K p_k\log\frac{p_k}{v_k}
=
\sum_{k=1}^K v_k w_k\log w_k,
$$
$P_{\theta}^*(F^{(b)})$ can be characterized as the solution to a weighted entropy program
\begin{equation}\label{eq:AIEL}
\begin{aligned}
\max_{(w_1,\ldots,w_K)} \quad
&-\sum_{k=1}^K v_k w_k\log w_k \\
\text{subject to}\quad
&\sum_{k=1}^K v_k w_k = 1,
\qquad
\sum_{k=1}^K v_k w_k\,g(x_k,\theta)=0.
\end{aligned}
\end{equation}
If $0$ lies in the interior of the convex hull of the moment vectors evaluated at the support points of $F^{(b)}$, the optimal weights for each $k$ retain the exponential tilting form
\begin{equation}\label{eq:AI-SOL}
w_k^*(\theta;F^{(b)})
=
\frac{\mathrm{e}^{\,\lambda^*(\theta;F^{(b)})^\top g(x_k,\theta)}}
{\sum_{j=1}^{K} v_j\,\mathrm{e}^{\,\lambda^*(\theta;F^{(b)})^\top g(x_j,\theta)}},
\quad
\lambda^*(\theta;F^{(b)})
=
\arg\min_{\eta\in\mathbb{R}^q}
\log\!\Biggl[
\sum_{j=1}^{K} v_j\,\mathrm{e}^{\,\eta^\top g(x_j,\theta)}
\Biggr].
\end{equation}
It is worth noting that when $K=n$ and $v_k=1/n$, \eqref{eq:AI-SOL} reduces to the standard ETEL tilting weights \cite{Schennach2007ETEL}. The vector $\lambda^*(\theta;F^{(b)})$ is obtained by solving the convex dual problem, which can be done numerically using standard Newton methods.

After solving the inner problem, we proceed to formulate the outer problem that defines the ETEL functional. Recall from \eqref{eq:etel-estimator-v2} that the standard ETEL estimator can be written as $\hat{\theta}_{\mathrm{ETEL}} = \arg \min_{\theta \in \Theta} D_{\mathrm{KL}}(\mathbb{P}_n \,\|\, Q_\theta^* )$, where $Q_\theta^*$ is the solution to the inner ETEL problem. In our framework, the solution to the inner problem is $P_{\theta}^*(F^{(b)})$ and the reference measure is $F^{(b)}$ rather than $\mathbb{P}_n$. This leads to the corresponding ETEL criterion 
\[
\ell(\theta;F^{(b)})
=
D_{\mathrm{KL}}(F^{(b)} \,\|\, P_\theta^*(F^{(b)}))
=
\sum_{k=1}^K v_k \log \left[ \frac{v_k}{v_k w_k^*(\theta;F^{(b)})}\right]
=
-\sum_{k=1}^{K} v_k \log w_k^*(\theta;F^{(b)}).
\]
We then define the induced ETEL functional by
\begin{equation}\label{eq:AI-LOSS}
\theta^*(F^{(b)})
=
\arg\min_{\theta\in\Theta}
\ell(\theta;F^{(b)}).
\end{equation}

Relative to standard ETEL, which uses the fixed empirical distribution as its baseline, our procedure uses each DP posterior draw $F^{(b)}$ as the reference measure in the nested ETEL projection and then maps it to $\theta^{*(b)}=\theta^*(F^{(b)})$, as detailed in Algorithm \ref{alg:ai-betel}. It is precisely the variation of posterior draws $F^{(b)}$ that gives rise to the posterior variation of the parameter of interest $\theta$ through the nested KL projection represented by the push-forward map \eqref{eq:AI-LOSS}. Unlike standard BETEL procedures, Algorithm \ref{alg:ai-betel} allows for straightforward parallelization, since posterior draws and the associated nested optimization problems can be computed independently.

\subsection{ETEL bootstrap ($\alpha=0$) and AI-Powered ETEL ($\alpha >0$)}

When $\alpha =0$, Algorithm \ref{alg:ai-betel} reduces to an ETEL bootstrap, where we draw $(v_1, \ldots, v_n) \sim \mathrm{Dirichlet}(1, \ldots, 1)$ to form $F^{(b)}:= \sum_{i=1}^n v_i \delta_{x_i}$, analogous to the Bayesian bootstrap \cite{10.1214/aos/1176345338}. The push-forward draws $\theta^*(F^{(b)})$ are then obtained by solving the nested optimization problems \eqref{eq:AIEL}--\eqref{eq:AI-LOSS} as illustrated in Section \ref{subsec:projection-functional}. This setting is especially useful when no credible auxiliary information is available. In Section \ref{sec:betel-bootstrap}, we show that the resulting ETEL bootstrap procedure yields informative posterior inference in simulations, and in applications to asset pricing and average treatment effect estimation. 

When $\alpha>0$, treating $F$ as unknown with a $\mathrm{DP}(\alpha, H_0)$ prior invites the possibility of constructing an informative prior based on past data realizations or even generative AI data simulations. Related AI-induced nonparametric priors have been studied in loss-based Bayesian nonparametric inference \cite{ohagan2025aipoweredbayesianinference}. In our setting, the ETEL criterion in \eqref{eq:AI-LOSS} plays the role of the link between data and the
parameter of interest, but it is induced by moment restrictions rather than by a loss function. In Section \ref{sec:betel-ai}, we use large language models (LLMs) in two separate applications. In these applications, we set $H_0:= F_{\mathrm{AI}}$, and treat it as a continuous distribution, so the auxiliary atoms $\{x_j^{*(b)}\}_{j=1}^m$ are refreshed at each posterior draw through repeated prompting. 

The DP concentration parameter $\alpha>0$ has a natural interpretation as an effective prior sample size and is the primary tuning parameter. By contrast, $m$ is only the truncation level used to approximate a continuous DP posterior draw, so it should be chosen sufficiently large in practice. Following the $\alpha$-calibration discussion in \cite{ohagan2025aipoweredbayesianinference}, one option is coverage matching: for each candidate $\alpha$, bootstrap datasets from the empirical distribution, recompute the posterior credible region under Algorithm \ref{alg:ai-betel}, and select the largest $\alpha$ such that the nominal $(1-\delta)$ credible region contains the corresponding standard ETEL estimator computed on the same bootstrap sample with frequency at least $(1-\delta)$. A second option is asymptotic covariance matching: one may choose $\alpha$ so that a scalar summary of the empirical posterior covariance under Algorithm \ref{alg:ai-betel}, such as its trace or average marginal variance, remains close to the standard ETEL benchmark $J_0^{-1}/n$ from Theorem \ref{thm:bootstrap}.

\subsection{Regularization Induced by $F_{\rm{AI}}$} \label{subsec:reg}
Although the prior in our framework is placed on the sampling distribution, it induces a push-forward prior on $\theta$. Even a diffuse prior on $F$ will ultimately result in a somewhat informative prior on $\theta$, since the push-forward map is  driven by the underlying moment restrictions. Let $\Pi_F(\cdot)$ denote a prior on $F$, then the induced prior on $\theta$ is $\Pi_{\theta}(A) = \Pi_{F}(\{F: \theta^*(F) \in A\})$. This prior is generally not available in closed form. In linear regression, however, the induced regularization has a transparent representation.

Consider the moment condition $g(z,\beta)=x(y-x^\top\beta)$, where $z=(y,x)$, $x\in\mathbb{R}^{p}$, and $\beta\in\mathbb{R}^{p}$. Under the prior $\mathrm{DP}(\alpha,F_{\rm AI})$,  the DP posterior follows $\mathrm{DP}(\alpha+n, H_n)$, where $H_n= \frac{n}{n+\alpha} \mathbb{P}_n + \frac{\alpha}{n+\alpha}F_{\rm AI}$. In this exactly identified setting, the ETEL projection $\beta(F)$ coincides with the unique moment root:
\begin{equation*} \label{eq:beta-f}
    \beta(F):= S(F)^{-1}t(F), \qquad S(F):= \mathbb{E}_{F}[xx^\top], \qquad t(F):= \mathbb{E}_{F}[xy],
\end{equation*}
provided these moments exist and $S(F)$ is invertible almost surely.

The functional $\beta(H_n)$ provides a central summary of the push-forward posterior. Under squared prediction loss, it minimizes the posterior prediction risk. Locally, $\beta(H_n)$ only differs from the posterior mean $\mathbb E\{\beta(F)\mid\mathcal{D}_n\}$ by second-order terms.\footnote{A first-order Taylor expansion around $\beta(H_n)$ yields $\beta(F)-\beta(H_n) \approx S(H_n)^{-1}\Delta_t \;-\; S(H_n)^{-1}\Delta_S \,\beta(H_n),$ where $\Delta_{S}:= S(F) - S(H_n)$ and $\Delta_{t}:= t(F) - t(H_n)$.
Since $\mathbb{E}[\Delta_S \mid \mathcal{D}_n]=0$ and $\mathbb{E}[\Delta_t \mid \mathcal{D}_n]=0$, the right-hand side vanishes after conditioning on $\mathcal{D}_n$. } The moment condition $\mathbb{E}_{H_n}[g(z,\beta)]=0$ admits an explicit finite-sample form:
\[
\sum_{i=1}^n x_i(y_i-x_i^\top\beta) \;+\; \alpha\,\mathbb{E}_{F_{\rm AI}}\!\left[x(y-x^\top\beta)\right]=0.
\]
Define $S_{\rm AI}:=\mathbb{E}_{F_{\rm AI}}[xx^\top]$, $t_{\rm AI}:=\mathbb{E}_{F_{\rm AI}}[xy]$, and $\beta_{\rm AI}:=S_{\rm AI}^{-1}t_{\rm AI}$. By definition,  $\mathbb{E}_{F_{\rm AI}} \left[x(y-x^\top\beta)\right]$ can be rewritten as $S_{\rm AI}(\beta_{\rm AI}-\beta)$. Hence we can express $\beta(H_n)$ in matrix form as
\begin{equation*} \label{eq:gr-estimator}
\beta(H_n) = \bigl(X^\top X + \alpha S_{\rm AI}\bigr)^{-1}\bigl(X^\top Y + \alpha S_{\rm AI}\beta_{\rm AI}\bigr),
\end{equation*}
which can be viewed as the minimizer of a generalized ridge regularization problem \cite{69896607-a049-30fc-992a-9822d3c5f921}:
\begin{equation} \label{eq:generalized-ridge-obj}
\beta(H_n) = \arg\min_{\beta\in\mathbb{R}^p}
\left\{
\|Y-X\beta\|^2 + \alpha (\beta-\beta_{\rm AI})^\top S_{\rm AI}(\beta-\beta_{\rm AI})
\right\}.
\end{equation}
Equation \eqref{eq:generalized-ridge-obj} makes the regularization mechanism explicit: the concentration parameter $\alpha$ controls the strength of shrinkage, $S_{\rm AI}$ determines the geometry of the penalty, and $\beta_{\rm AI}$ is the shrinkage target induced by $F_{\rm AI}$ through its best linear predictor. If $F_{\rm AI}$ is close to the true sampling distribution, then $\beta_{\rm AI}$ is close to the true regression coefficient, so the regularization shrinks toward a scientifically meaningful target rather than toward zero. Standard ridge regression is recovered as the special case $S_{\rm AI}=I_p$ and $t_{\rm AI}=0$.

This calculation also clarifies the relation to imaginary-data and catalytic priors \cite{doi:10.1073/pnas.1920913117}. These approaches
regularize parameters through synthetic observations or an explicit
parametric prior, such as Zellner's $g$-prior
\cite{Zellner1986GPrior}. Here the regularization arises instead from a
nonparametric prior on the sampling distribution. We emphasize that the induced prior on $\beta$ need not be normal, since the generalized ridge form describes the posterior predictive center $\beta(H_n)$ induced by $F_{\rm AI}$.

\section{Asymptotic Normality} \label{sec:theory}
We study the frequentist properties of the posterior distribution induced by our ETEL projection sampler. The central question is whether replacing the uniform empirical reference measure by a DP posterior draw changes the first-order behavior of ETEL, and, if so, how that change depends on the strength of the AI prior. We provide answers in two regimes. When the prior is asymptotically negligible, the induced posterior is asymptotically first-order equivalent to the standard Bayesian ETEL posterior \cite{10.1111/rssb.12484}. When the prior carries non-vanishing mass, the posterior remains asymptotically Gaussian, but is centered at the pseudo-true solution associated with the mixture law of $F_{\rm{AI}}$ and $F_0$.

It is tempting to interpret our ETEL criterion as a weighted likelihood bootstrap (WLB) objective \cite{10.1111/j.2517-6161.1994.tb01956.x}, which maximizes a weighted log-likelihood $\arg \max_{\theta} \sum_{i=1}^n v_i \log p(x_i|\theta)$ in a parametric model. Although WLB draws are first-order equivalent to the Bayesian posterior in a well-specified parametric model, \footnote{Let $\hat{\theta}_n$ be the maximum likelihood estimator, $\tilde{\theta}_n$ a random WLB draw. First-order equivalence means that, conditional on the data, $\sqrt{n}(\theta-\hat{\theta}_n)$ and $\sqrt{n}(\tilde{\theta}-\hat{\theta}_n)$ converge to the same limit.}  our criterion is not the log-likelihood of any fixed parametric model, so the standard WLB theory does not directly apply. Our approach is closer in spirit to loss-likelihood bootstrap (LLB) \cite{10.1093/biomet/asz006}, which also puts a prior on the DGP directly. In LLB, the functional $\theta(F)$ is defined through a loss function $l(\theta, x)$ such that $\theta(F) = \arg \min_{\theta \in \Theta} \int l(\theta, x) dF(x)$. Their conditional CLT result relies on an additive empirical risk structure, where the objective is a sum of i.i.d. terms with a fixed loss $l(\theta,x)$ that does not itself depend on $F$. By contrast, our ETEL criterion in \eqref{eq:AI-LOSS} is not additive in this sense, since the integrand $-\log \frac{dP_{\theta}^*(F)}{dF}$ itself depends on $F$. 

Let $F_0$ denote the true sampling distribution, and $\Pi_n(\cdot\mid \mathcal{D}_n)$ denote the push-forward posterior distribution of $\theta^*$ generated by Algorithm \ref{alg:ai-betel}. Our asymptotic results assume a continuous base measure $F_{\rm{AI}}$ and rely on the following assumptions.

\begin{assumption}[Identification]\label{A1}
The parameter space $\Theta$ is compact. The mapping $\bar{g}(\theta):= \mathbb{E}_{F_0}[g(x,\theta)]$ is continuous on $\Theta$ and has a unique zero at $\theta_0$. Define
$$
\Omega(\theta):= \mathbb E_{F_0}[g(x,\theta)g(x,\theta)^\top] -\bar g(\theta)\bar g(\theta)^\top,
\qquad
G(\theta):=\mathbb E_{F_0}[\nabla_\theta g(x,\theta)].
$$
The matrix $\Omega_0 := \Omega(\theta_0)$ is positive definite, and $G_0 := G(\theta_0)$ has full column rank $d_{\theta}$. There exists a compact convex neighborhood $\mathcal{N} \subset \mathrm{int}(\Theta)$ of $\theta_0$ such that $\inf_{\theta\in\mathcal N}\lambda_{\min}\{\Omega(\theta)\}>0$ and $\inf_{\theta\in\mathcal N}
\lambda_{\min}\{G(\theta)^\top\Omega(\theta)^{-1}G(\theta)\}>0$.
\end{assumption}

\begin{assumption}[Smoothness and integrability]\label{A2}
There exists $\delta >0$ such that:
\begin{enumerate}
\item[(i)] For every $x$, $g(x,\theta)$ is continuous on $\Theta$ and twice continuously differentiable on $\mathcal N$. There exists an envelope $M_\Theta(x):=\sup_{\theta\in\Theta}\|g(x,\theta)\|$ such that $\mathbb E_{F_0}[M_\Theta(x)^{2+\delta}]<\infty.$

\item[(ii)] There exist local derivative envelopes $L_1(x):=\sup_{\theta\in\mathcal N}\|\nabla_\theta g(x,\theta)\|$ and $L_2(x):=\sup_{\theta\in\mathcal N}\|\nabla_\theta^2 g(x,\theta)\|$ that satisfy $\mathbb E_{F_0}[L_1(x)^2]<\infty$, and $\mathbb E_{F_0}[L_2(x)]<\infty$.

\item[(iii)] There exists a compact convex set $\Lambda \subset \mathbb{R}^q$ with $0 \in \mathrm{int}(\Lambda)$ such that $$\mathbb E_{F_0} \left[
\sup_{\theta\in\Theta,\ \eta\in\Lambda} e^{\eta^\top g(x,\theta)} (1+\|g(x,\theta)\|^3)\right]<\infty.$$

\item[(iv)] When $\alpha_n >0$, the same bounds in parts (i)-(iii) hold with $F_0$ replaced by $F_{\mathrm{AI}}$.
\end{enumerate}
\end{assumption}

\begin{assumption}[Feasibility]\label{A3}
With $\psi_0(\eta, \theta):= \log \mathbb{E}_{F_0} \left [e^{\eta^\top g(x,\theta)}\right]$, $\lambda_0(\theta):= \arg \min _{\eta \in \Lambda} \psi_0(\eta, \theta)$ exists, is unique, and lies in $\mathrm{int}(\Lambda)$ for every $\theta \in \Theta$. With probability $1-o(1)$, the sample dual minimizer  $\arg\min_{\eta\in\Lambda} \log\!\left [
\sum_{j=1}^{n+m} v_j\,\mathrm{e}^{\,\eta^\top g(x_j,\theta)} \right]$ exists, is unique, and lies in $\mathrm{int}(\Lambda)$ for every $\theta \in \Theta$.
\end{assumption}

\begin{assumption}[AI prior]\label{A4}
We consider either (i) (ETEL bootstrap): $\alpha_n=0$ and $m=0$, or (ii) (AI-augmented prior) $\alpha_n=o(\sqrt n)$. 
\end{assumption}

We first show that posterior consistency holds under relatively weak conditions.

\begin{theorem}[Posterior consistency]\label{thm:posterior-consistency}
Suppose the compactness and unique-identification conditions in Assumption \ref{A1} hold, and Assumption \ref{A3} holds. Suppose also that $g(x,\theta)$ is continuous in $\theta$ for every $x$, and that $\sup_{\theta\in\Theta}\|g(x,\theta)\|$ and $\sup_{\theta\in\Theta,\eta\in\Lambda} \left [e^{\eta^\top g(X,\theta)}\right]$
are integrable under $F=F_0$, and $F=F_{\rm AI}$ when $\alpha_n>0$.  If
$\alpha_n/n\to0$, then, for every $\varepsilon>0$,
$$ \Pi_n \bigl( \|\theta^*-\theta_0\|>\varepsilon \mid \mathcal{D}_n
\bigr) \xrightarrow{\mathbb P}0 .
$$
\end{theorem}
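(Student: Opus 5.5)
We prove consistency with the standard argmin argument for extremum estimators, applied draw by draw, conditional on the data. The aim is a uniform law of large numbers for the criterion $\ell(\theta;F)$ under the posterior draw $F$. We do not try to show that $\theta^*(F)$ approaches $\theta_0$ pointwise. It suffices to find a deterministic limit criterion $\ell_0(\theta)$, uniquely minimized at $\theta_0$, such that $\sup_{\theta\in\Theta}|\ell(\theta;F)-\ell_0(\theta)|\to 0$ in probability under the joint law of $(\mathcal D_n,F)$. Markov's inequality then converts joint convergence in probability into $\Pi_n(\cdot\mid\mathcal D_n)\to 0$ in $\mathbb P$-probability.

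\textbf{Step 1: profile the criterion.} By \eqref{eq:AI-SOL},
\[
\ell(\theta;F)=\log\Bigl[\sum_k v_k e^{\lambda^{*\top} g(x_k,\theta)}\Bigr]-\lambda^{*\top}\sum_k v_k g(x_k,\theta).
\]
Writing $\psi_F(\eta,\theta)=\log\int e^{\eta^\top g(x,\theta)}\,dF$, this is
\[
\ell(\theta;F)=\psi_F(\lambda_F(\theta),\theta)-\lambda_F(\theta)^\top \mathbb E_F g(x,\theta),
\]
where $\lambda_F(\theta)$ minimizes $\psi_F(\cdot,\theta)$ over $\Lambda$ (Assumption \ref{A3}). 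The population analogue is
\[
\ell_0(\theta)=\psi_0(\lambda_0(\theta),\theta)-\lambda_0(\theta)^\top\bar g(\theta).
\]
\emph{Identification.} At $\theta_0$ we have $\bar g(\theta_0)=0$, so $\lambda_0(\theta_0)=0$ and $\ell_0(\theta_0)=0$. For $\theta\ne\theta_0$, $\bar g(\theta)\ne0$, so $\lambda_0(\theta)\neq 0$. By Jensen, $\psi_0(\eta,\theta)\ge \eta^\top\bar g(\theta)$, and strict convexity makes this strict for $\eta\ne0$ when $g$ is nondegenerate, so $\ell_0(\theta)>0$. If $g(x,\theta)$ is a.s. constant, then $\lambda_0$ would not exist in the interior of $\Lambda$, which Assumption \ref{A3} excludes. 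Continuity of $\ell_0$ on compact $\Theta$ follows from Berge's maximum theorem together with uniqueness of $\lambda_0$. Hence $\inf_{\|\theta-\theta_0\|>\varepsilon}\ell_0(\theta)>0$.

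\textbf{Step 2: uniform LLN for posterior draws.} We need
\[
\sup_{\theta\in\Theta,\eta\in\Lambda}\Bigl|\int e^{\eta^\top g}\,dF-\mathbb E_{F_0}e^{\eta^\top g}\Bigr|\to0
\quad\text{and}\quad
\sup_\theta\|\mathbb E_F g-\bar g\|\to0
\]
in probability. Decompose
\[
F-F_0=(F-H_n)+\tfrac{\alpha_n}{\alpha_n+n}(F_{\rm AI}-\mathbb P_n)+(\mathbb P_n-F_0).
\]
\begin{itemize}
\item The last term is handled by the classical uniform LLN (Jennrich/Glivenko--Cantelli) under continuity in $(\eta,\theta)$ and the integrable envelopes.
\item The middle term is bounded by $\frac{\alpha_n}{n}$ times integrable envelopes, so it is $o_p(1)$.
\item The first term is the DP fluctuation, which is the main obstacle. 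We intend to handle it by bracketing. Continuity and envelope integrability give finitely many $\varepsilon$-brackets $[l_j,u_j]$ under $F_0$ and under $F_{\rm AI}$. For each fixed function $f$, $\int f\,dF$ given the data has mean $\int f\,dH_n$ and variance $\le \int f^2 dH_n/(\alpha_n+n+1)$.
\end{itemize}
Only first moments are assumed, so we apply this to truncations $f\mathbf 1\{M\le C\}$ and bound the tails by $\int M\mathbf 1\{M>C\}\,dH_n$. The tail bound is small uniformly by the LLN and the integrability of the envelope, and it is controlled in conditional expectation since $\mathbb E[F\mid\mathcal D_n]=H_n$. The finite-$m$ truncation of Algorithm \ref{alg:ai-betel} is treated the same way, because the auxiliary atoms are i.i.d. from $F_{\rm AI}$ with total weight of order $\alpha_n/(n+\alpha_n)\to0$.

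\textbf{Step 3: transfer to $\ell$ and conclude.} The map $(\psi,m)\mapsto\min_{\eta\in\Lambda}\{\psi(\eta)-\eta^\top m\}$ is Lipschitz in sup-norm on $\Lambda\times\Theta$, because $\Lambda$ is compact. Also $\ell(\theta;F)=\min_{\eta\in\Lambda}[\psi_F(\eta,\theta)-\eta^\top\mathbb E_Fg]$, since the interior dual minimizer coincides with the minimizer of this convex program by Assumption \ref{A3}. Using the fact that $\psi_F$ is the log of a quantity bounded away from zero on compact sets, Step 2 therefore yields
\[
\sup_\theta|\ell(\theta;F)-\ell_0(\theta)|=o_p(1).
\]
On the event $\|\theta^*(F)-\theta_0\|>\varepsilon$ we have
\[
\ell(\theta^*;F)\le\ell(\theta_0;F)\quad\text{while}\quad \ell_0(\theta^*)\ge c_\varepsilon>0=\ell_0(\theta_0),
\]
which forces $2\sup_\theta|\ell-\ell_0|\ge c_\varepsilon$, an event of vanishing joint probability. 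Taking the conditional expectation over $F$ and applying Markov's inequality gives the claim.
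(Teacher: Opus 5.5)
Your architecture is the same as the paper's: a uniform weighted LLN for the posterior draw, then uniform convergence of the criterion, then the $c_\varepsilon$ argmin comparison and Markov's inequality. Steps 1 and 2 are essentially sound. Your bracketing/truncation argument using the Dirichlet mean and variance is more explicit than the paper's appeal to Newton's exponential-weight LLN plus Glivenko--Cantelli. In Step 1, strict Jensen needs $\lambda_0(\theta)^\top g$ to be non-constant, which Assumption~\ref{A3} does imply.

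The gap is in Step 3. The identity $\ell(\theta;F)=\min_{\eta\in\Lambda}\{\psi_F(\eta,\theta)-\eta^\top\mathbb E_F g(x,\theta)\}$ is false. The map $\eta\mapsto\psi_F(\eta,\theta)-\eta^\top\mathbb E_F g(x,\theta)$ is convex with gradient $\nabla_\eta\psi_F(\eta,\theta)-\mathbb E_F g(x,\theta)$. That gradient vanishes at $\eta=0$, so the minimum over $\Lambda$ equals $\psi_F(0,\theta)=0$ for every $\theta$ and every $F$. By contrast, $\lambda_F(\theta)$ solves $\nabla_\eta\psi_F(\eta,\theta)=0$. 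Your own Jensen inequality in Step 1 exposes this: if the identity held, you would get $\ell_0\equiv 0$, contradicting $\ell_0(\theta)>0$ for $\theta\neq\theta_0$. The ETEL criterion evaluates $\psi_F(\eta,\theta)-\eta^\top\mathbb E_F g$ at the minimizer of a \emph{different} function, $\psi_F(\cdot,\theta)$. So it is not a value function. Sup-norm closeness of $(\psi_F,\mathbb E_F g)$ to $(\psi_0,\bar g)$ does not by itself control $\ell-\ell_0$, and your Lipschitz argument does not apply to $\ell$.

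The missing ingredient is uniform convergence of the dual multiplier, $\sup_{\theta\in\Theta}\|\lambda_F(\theta)-\lambda_0(\theta)\|=o_p(1)$. This is the central step of the paper's Lemma~\ref{lem:global-consistency}, obtained via a uniform argmin theorem. You already have the ingredients. Your Berge argument makes $\lambda_0$ continuous. Then compactness of $\Lambda\times\Theta$, joint continuity of $\psi_0$, and uniqueness in Assumption~\ref{A3} give, for each $\epsilon>0$,
\[
\inf_{\theta\in\Theta}\ \inf_{\eta\in\Lambda:\,\|\eta-\lambda_0(\theta)\|\ge\epsilon}\bigl\{\psi_0(\eta,\theta)-\psi_0(\lambda_0(\theta),\theta)\bigr\}>0 .
\]
Combined with $\sup_{\eta,\theta}|\psi_F-\psi_0|=o_p(1)$ from Step 2, this forces $\lambda_F$ to be uniformly close to $\lambda_0$. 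Then decompose
\[
\ell(\theta;F)-\ell_0(\theta)=\Bigl[\min_{\eta\in\Lambda}\psi_F(\eta,\theta)-\min_{\eta\in\Lambda}\psi_0(\eta,\theta)\Bigr]-\lambda_F(\theta)^\top\bigl\{\mathbb E_F g(x,\theta)-\bar g(\theta)\bigr\}-\bigl\{\lambda_F(\theta)-\lambda_0(\theta)\bigr\}^\top\bar g(\theta).
\]
Your Lipschitz observation is valid for the first bracket, which is a genuine value function. The second term is controlled by boundedness of $\Lambda$ and Step 2. The third term needs exactly the multiplier convergence above, together with boundedness of $\bar g$ on $\Theta$. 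With this repair, the rest of your argument goes through as written.
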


Let $\overset{\mathbb{P}}{\rightsquigarrow} $ denote weak convergence of the conditional posterior distribution in probability. The following theorem shows that, in the absence of $F_{\rm{AI}}$ or when it is asymptotically negligible, our projection-based ETEL posterior draw concentrates around the standard ETEL estimator $\hat{\theta}_n$ at the nominal $\sqrt{n}$ rate with the standard ETEL information matrix. 

\begin{theorem}[Gaussian limit under a vanishing AI prior]\label{thm:bootstrap}
Let $\hat{\theta}_n$ be the standard ETEL estimator defined in (\ref{eq:etel-estimator}), and  $\theta^*$ be a generic posterior draw from Algorithm \ref{alg:ai-betel}. Under Assumptions \ref{A1}-\ref{A4}, 
\begin{equation*} \label{eq:b-clt}
  \sqrt{n} (\theta^* - \hat{\theta}_n) \mid \mathcal{D}_n \overset{\mathbb{P}}{\rightsquigarrow} \mathcal{N}(0, J_0^{-1}), \quad J_0:= G_0^{\top} \Omega_0^{-1} G_0.  
\end{equation*}
\end{theorem}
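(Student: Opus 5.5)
The plan is to treat the projection draw $\theta^*=\theta^*(F)$ as a Z-estimator whose criterion is the weighted ETEL objective $\ell(\theta;F)$. I will expand its first-order condition around $\theta_0$ and show that $\sqrt n(\theta^*-\theta_0)$ and $\sqrt n(\hat\theta_n-\theta_0)$ admit the same linear representation. The only difference is that the empirical measure $\mathbb{P}_n$ is replaced by $F$. The conditional limit then comes from a CLT for Dirichlet-weighted sums. Consistency, $\theta^*\to\theta_0$ in conditional probability, is given by Theorem \ref{thm:posterior-consistency} (since $\alpha_n=o(\sqrt n)$ implies $\alpha_n/n\to0$). So I may restrict attention to the event where $\theta^*\in\mathcal N$. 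By Assumption \ref{A3}, the dual solution $\lambda^*(\theta;F)$ is interior and characterized by $\sum_k v_k e^{\lambda^\top g(x_k,\theta)}g(x_k,\theta)=0$.

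\textbf{Step 1: linearizing the first-order condition.} Write $\ell(\theta;F)=\log\sum_k v_k e^{\lambda^{*\top}g_k}-\lambda^{*\top}\sum_k v_k g_k$, where $g_k=g(x_k,\theta)$. By the envelope theorem applied to the dual first-order condition, the gradient is
\[
\nabla_\theta \ell(\theta;F) = -\nabla_\theta\Big(\textstyle\sum_k v_k g_k\Big)^{\!\top}\lambda^* + (\text{terms involving } w_k^*\nabla_\theta g_k^\top \lambda^*),
\]
in the spirit of Schennach's derivation. Linearizing the dual condition gives $\lambda^*(\theta;F)=-\Omega(\theta)^{-1}\bar g_F(\theta)+O_p(\|\bar g_F\|^2)$, where $\bar g_F(\theta)=\int g\,dF$. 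The quadratic remainder is controlled using the exponential moment bounds of Assumption \ref{A2}(iii), together with uniform laws of large numbers for $F$-weighted averages; these ULLNs hold because Dirichlet weights are exchangeable with $\max_k v_k=O_p(\log n/n)$.

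Substituting, and expanding $\bar g_F(\theta^*)=\bar g_F(\theta_0)+G_0(\theta^*-\theta_0)+o_p(\|\theta^*-\theta_0\|)$ with the help of the envelopes $L_1,L_2$, I obtain
\[
\sqrt n(\theta^*-\theta_0)=-J_0^{-1}G_0^\top\Omega_0^{-1}\sqrt n\,\bar g_F(\theta_0)+o_p(1)
\]
in conditional probability. The same argument with $F=\mathbb{P}_n$ is Schennach's expansion:
\[
\sqrt n(\hat\theta_n-\theta_0)=-J_0^{-1}G_0^\top\Omega_0^{-1}\sqrt n\,\bar g_{\mathbb{P}_n}(\theta_0)+o_p(1).
\]
Subtracting the two representations gives
\[
\sqrt n(\theta^*-\hat\theta_n)=-J_0^{-1}G_0^\top\Omega_0^{-1}\sqrt n\,(F-\mathbb{P}_n)g(\cdot,\theta_0)+o_p(1).
\]

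\textbf{Step 2: the conditional CLT and the prior contribution.} In case (i), $v\sim\mathrm{Dirichlet}(1,\dots,1)$. Writing $v_i=E_i/\sum_j E_j$ with $E_i$ i.i.d. standard exponential, the conditional Lindeberg CLT (the Bayesian bootstrap CLT), using the $2+\delta$ moments of Assumption \ref{A2}(i), gives
\[
\sqrt n(F-\mathbb{P}_n)g_0\mid\mathcal D_n\overset{\mathbb P}{\rightsquigarrow}\mathcal N(0,\Omega_0).
\]
The sandwich then yields the variance $J_0^{-1}G_0^\top\Omega_0^{-1}\Omega_0\Omega_0^{-1}G_0J_0^{-1}=J_0^{-1}$.

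In case (ii), the total Dirichlet mass on the auxiliary atoms is $V^*\sim\mathrm{Beta}(\alpha_n,n)$, which has mean $\alpha_n/(n+\alpha_n)=o(n^{-1/2})$. Hence $\sqrt n\,V^*\big(\int g_0\,dF^*-\int g_0\,dF^{\rm obs}\big)=o_p(1)$, using the $F_{\rm AI}$ moment bounds of Assumption \ref{A2}(iv). Here $F^*$ denotes the normalized auxiliary part and $F^{\rm obs}$ the normalized observed part, which is again a Bayesian-bootstrap draw. The same bounds make the prior's perturbation of the ULLNs in Step 1 negligible.

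\textbf{Main obstacle.} The main difficulty is Step 1's uniformity. I need the remainder of the $\lambda^*$ expansion and the Hessian of $\ell(\cdot;F)$ to converge uniformly over $\theta\in\mathcal N$ under the random measure $F$, conditionally on the data. This is needed because, unlike the loss-likelihood bootstrap, the integrand itself depends on $F$ through $\lambda^*(\theta;F)$. My first approach will be to prove $\sup_{\theta\in\mathcal N}\|\lambda^*(\theta;F)\|=O_p(n^{-1/2})$ via strict convexity of the dual on $\Lambda$. I will then bound the third-order terms using the envelope $e^{\eta^\top g}(1+\|g\|^3)$. That envelope is integrable by Assumption \ref{A2}(iii), and its $F$-average is $O_p(1)$ by Markov's inequality, since $\mathbb E[F\mid\mathcal D_n]=H_n$.
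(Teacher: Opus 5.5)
Your overall strategy is sound and differs from the paper's, but two steps are wrong as written and need repair.

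\emph{The score.} The envelope theorem does not apply to $\ell(\theta;F)=\psi(\lambda^*,\theta)-\lambda^{*\top}\bar g_F(\theta)$. The reason is that $\lambda^*$ minimizes $\psi(\cdot,\theta)$, not $\eta\mapsto\psi(\eta,\theta)-\eta^\top\bar g_F(\theta)$, and the $\eta$-gradient of the latter at $\lambda^*$ is $-\bar g_F(\theta)\neq0$. The correct score is
\[
\nabla_\theta\ell(\theta;F)=\sum_k v_k(w_k^*-1)(\nabla_\theta g_k)^\top\lambda^*-(\nabla_\theta\lambda^*)^\top\bar g_F(\theta).
\]
The two pieces you wrote down combine into the first sum, which is $O(\|\lambda^*\|^2)$. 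The leading term $G_F^\top\Omega_F^{-1}\bar g_F$ comes entirely from the piece you omitted, via implicit differentiation of the dual condition, $\nabla_\theta\lambda^*\approx-\Omega_F^{-1}G_F$. Taken literally, your formula has no first-order term. You recover the right answer only if you also discard the $w_k^*$ term, but that term is not small (it equals $G_F^\top\lambda^*+O(\|\lambda^*\|^2)$), so two errors would be cancelling.

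\emph{The dual bound.} The claim $\sup_{\theta\in\mathcal N}\|\lambda^*(\theta;F)\|=O_p(n^{-1/2})$ is false on a fixed neighborhood. Indeed $\lambda^*(\theta;F)\to\lambda_0(\theta)$ uniformly, and $\lambda_0(\theta)\neq0$ whenever $\bar g(\theta)\neq0$, i.e.\ for every $\theta\neq\theta_0$. What you need instead is the proportional bound $\|\lambda^*(\theta;F)\|\le C\|\bar g_F(\theta)\|$, uniformly on a small enough $\mathcal N$. This is the paper's Lemma~\ref{lem:error}(ii), obtained from three ingredients: the dual first-order condition expanded at $\eta=0$, a uniform eigenvalue lower bound on $\Omega_F$, and uniform smallness of $\lambda_0$ on $\mathcal N$. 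It makes the score remainder $O(\|\bar g_F(\theta)\|^2)$ at the random point $\theta^*$. From there, $\theta^*-\theta_0=O_p(n^{-1/2})$ and your linear representation follow.

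With these fixes your route works, and it is genuinely different from the paper's. The paper expands the weighted score around $\hat\theta_n$, which requires three things (Lemmas~\ref{lem:dir-clt}--\ref{lem:etel-score}):
\begin{enumerate}
\item uniform control of $\nabla^2_\theta\ell(\cdot;F)$ on shrinking balls;
\item a Dirichlet CLT for $S(\hat\theta_n)$ at the data-dependent point $\hat\theta_n$;
\item the ETEL first-order condition, to show that the nonzero centering $\mathbb P_n g(\cdot,\hat\theta_n)$ present when $q>d_\theta$ is annihilated by $\hat G_n^\top\hat\Omega_n^{-1}$.
\end{enumerate}
Your subtraction avoids all three. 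The common term $\sqrt n\,\mathbb P_n g(\cdot,\theta_0)$ cancels exactly, and the Bayesian bootstrap CLT is needed only for the fixed function $g(\cdot,\theta_0)$. Moreover, because you linearize $\bar g_F$ inside the score rather than the score itself, you do not actually need the Hessian of $\ell(\cdot;F)$ that you list as part of the obstacle. The price is twofold: you must import Schennach's asymptotic linear representation for $\hat\theta_n$, and you must verify that both remainders are $o_p(1)$ jointly in data and weights, so that Markov's inequality turns them into conditional negligibility in probability.
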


\begin{remark}[Frequentist validity]
Let $\hat J_n$ be a consistent estimator of $J_0$, and $q_{n,1-\tau}^*$ be the conditional $(1-\tau)$-quantile of $n(\theta^*-\hat\theta_n)^\top \hat J_n(\theta^*-\hat\theta_n)$ given $\mathcal{D}_n$. Define
$$C_{n,1-\tau} := \left\{\theta\in\Theta: n(\theta-\hat\theta_n)^\top \hat J_n(\theta - \hat\theta_n) \le q_{n,1-\tau}^*\right\}.$$
Theorem \ref{thm:bootstrap} shows that $C_{n,1-\tau}$ is a valid confidence set with $\mathbb P_{F_0}\{\theta_0\in C_{n,1-\tau}\}\to 1-\tau.$
\end{remark}

\subsection{Connection to Bayesian GMM Bootstrap} \label{sec:cbb}

In the exactly identified case, $q=d_\theta$, Theorem \ref{thm:bootstrap} also provides theoretical justification for the Bayesian bootstrap procedure of \cite{Chamberlain01012003}. To see this, consider a given bootstrap draw $v \sim \text{Dirichlet}(1, \ldots, 1)$ and define the weighted sample moments $\bar{g}_v(\theta):= \sum_{i=1}^n v_i g(x_i, \theta)$. \citet{Chamberlain01012003} compute $\hat\theta_{\rm CI}(v)$ for each
Dirichlet draw by solving:
\begin{equation} \label{eq:ci}
   \sum_{i=1}^n v_i g(x_i, \hat{\theta}_{\rm{CI}}(v)) =0.
\end{equation}
Theorem \ref{thm:cbb} shows that, in this exactly identified case without $F_{\rm{AI}}$, our projection ETEL sampler for a fixed Dirichlet draw $v$ coincides with $\hat\theta_{\rm CI}(v)$.

\begin{theorem}[Connection to Bayesian bootstrap] \label{thm:cbb}
Consider an exactly identified case and fix a given Dirichlet draw $v \sim \text{Dirichlet} (1, \ldots, 1)$. Suppose there exists a unique $\theta^\dagger \in \Theta$ such that $\bar{g}_v(\theta^\dagger):= \sum_{i=1}^n v_i g(x_i, \theta^\dagger)=0$, and the weighted entropy program in (\ref{eq:AIEL})  is feasible. If $\theta^*(v)$ is the minimizer of the ETEL criterion from Algorithm \ref{alg:ai-betel} with $\alpha=0$,
$$\hat{\theta}_{\rm{CI}}(v) = \theta^{\dagger}= \theta^*(v).$$
\end{theorem}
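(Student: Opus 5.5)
The argument rests on a lower bound: the ETEL criterion $\ell(\theta;F)$, with $F=\sum_i v_i\delta_{x_i}$, is always nonnegative, and it equals zero exactly at parameter values where the weighted moment condition already holds. Once this is shown, the unique root $\theta^\dagger$ of $\bar g_v$ is the unique minimizer of $\ell(\cdot;F)$. The first equality $\hat\theta_{\rm CI}(v)=\theta^\dagger$ holds by definition, since \eqref{eq:ci} is the equation $\bar g_v(\theta)=0$ and that equation has the unique solution $\theta^\dagger$. All the work therefore goes into showing $\theta^*(v)=\theta^\dagger$.

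The first step is nonnegativity. For any $\theta$ at which the inner program \eqref{eq:AIEL} is feasible, we have $\ell(\theta;F)=D_{\mathrm{KL}}(F\,\|\,P_\theta^*(F))\ge 0$ by Gibbs' inequality. The same conclusion follows directly from Jensen's inequality: because $\sum_i v_i w_i^*=1$,
\[
-\sum_i v_i\log w_i^*\ \ge\ -\log\sum_i v_i w_i^* = 0 .
\]
Equality in this bound holds if and only if $w_i^*\equiv 1$ on the support, which means $P_\theta^*(F)=F$. At parameter values where \eqref{eq:AIEL} is infeasible, we take the criterion to be $+\infty$, following the usual convention. Such values cannot be minimizers.

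The second step is to show that the criterion vanishes at $\theta^\dagger$. Since $\bar g_v(\theta^\dagger)=0$, the trivial tilt $w\equiv 1$ is feasible in \eqref{eq:AIEL}. It attains $D_{\mathrm{KL}}(F\|F)=0$, which is the smallest value the inner objective can take, so $P^*_{\theta^\dagger}(F)=F$. In dual form this is the statement that $\lambda^*(\theta^\dagger;F)=0$: the function $\eta\mapsto\log\sum_j v_j e^{\eta^\top g(x_j,\theta^\dagger)}$ is convex, its gradient at $0$ equals $\bar g_v(\theta^\dagger)=0$, and hence \eqref{eq:AI-SOL} gives $w^*\equiv1$. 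Therefore $\ell(\theta^\dagger;F)=0$, and $\theta^\dagger$ is a minimizer.

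The third step, uniqueness, is the only place that needs care. Suppose $\ell(\tilde\theta;F)=0$ for some $\tilde\theta$. The equality case of Jensen's inequality then forces $w_i^*(\tilde\theta)=1$ for every $i$ with $v_i>0$. Dirichlet weights are positive almost surely, so this covers every $i$. Substituting into the constraint of \eqref{eq:AIEL} gives $\bar g_v(\tilde\theta)=\sum_i v_i g(x_i,\tilde\theta)=0$, and the assumed uniqueness of the root yields $\tilde\theta=\theta^\dagger$. Hence $\theta^*(v)=\theta^\dagger$.

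The potential obstacle is the treatment of parameter values where \eqref{eq:AIEL} is infeasible, or where the tilted solution does not have the interior exponential form. The argument above avoids this difficulty because it uses only the primal KL representation of $\ell$, and needs feasibility only at $\theta^\dagger$. Exact identification ($q=d_\theta$) is not used explicitly in these steps. Its role is to make the existence of a unique weighted root a natural assumption, whereas in an over-identified model such a root generically fails to exist.
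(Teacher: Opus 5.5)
Your proposal is correct and takes essentially the same approach as the paper. Both arguments use nonnegativity of the KL criterion via Gibbs' inequality (your Jensen bound is the same inequality), both observe that the trivial tilt $w\equiv 1$ is feasible at $\theta^\dagger$ and gives $\ell(\theta^\dagger;F_v)=0$, and both get uniqueness from the assumed unique weighted root. Your write-up also makes explicit two steps the paper compresses into one sentence: the equality-case step forcing $w^*_i(\tilde\theta)=1$ and hence $\bar g_v(\tilde\theta)=0$ at any zero of $\ell$, and the convention that $\ell=+\infty$ where the inner program is infeasible.
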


The proof of Theorem \ref{thm:cbb} is provided in Appendix \ref{subsec:e-proof}. This theorem provides further intuition in the exact-identification case: the ETEL projection step $P_{\theta}^*(F^{(b)})$ is simply $F^{(b)}$, since we can find corresponding $\theta^*(F^{(b)})$ by solving the moment equations without tilting the weights. This is practically useful, since we can avoid solving the nested optimization in Algorithm \ref{alg:ai-betel} by solving a simpler weighted moment equation instead. In general, however, and in particular in the over-identified case ($q>d_{\theta}$), there typically does not exist any $\theta$ such that $\sum_i v_i g(x_i, \theta)=0$ holds exactly. In that regime, even without AI augmentation, our method enforces the moment restrictions by exponentially tilting the weights $w_i^*(\theta)$ and does not coincide with the GMM Bayesian bootstrap procedure.

\subsection{Non-vanishing AI Prior}

In many applications, however, one may wish to assign non-vanishing prior
mass to the auxiliary distribution. We formalize this regime as follows.

\begin{assumptionprime}[Persistent prior]\label{A4prime}
For a fixed $\gamma \in (0, \infty)$, let $\alpha_n = \gamma n$ and $m_n \rightarrow \infty$. 
\end{assumptionprime}
Define the mixed law $F_{\gamma}:= (1-\delta_{\gamma}) F_0 + \delta_{\gamma} F_{\rm{AI}}$ with $\delta_{\gamma}:= \frac{\gamma}{1+\gamma}$. Let $\theta_{\gamma}$ be the target parameter under $F_{\gamma}$ with the corresponding estimator defined as $\hat{\theta}_{n,\gamma}:= \arg \min_{\theta \in \Theta} \ell(\theta; F_{n,\gamma})$, where $F_{n,\gamma}:= \frac{n}{n+\alpha_n} \mathbb{P}_n + \frac{\alpha_n}{n+\alpha_n} \mathbb{P}_{m_n}^*$ with $\mathbb{P}_{m_n}^* := \frac{1}{m_n} \sum_{j=1}^{m_n} \delta_{x_j^*}$. Additionally, we define 
$$\mu_{\gamma}(\theta):= \mathbb{E}_{F_{\gamma}}[g(x, \theta)],\quad \Omega_{\gamma}(\theta):= \mathbb{E}_{F_{\gamma}}[g(x,\theta)g(x,\theta)^\top]-\mu_{\gamma}(\theta) \mu_{\gamma}(\theta)^\top, \quad G_{\gamma}(\theta):= \mathbb{E}_{F_{\gamma}}[\nabla_{\theta} g(x,\theta)].$$

\begin{theorem}[Gaussian limit under a persistent AI prior]\label{thm:b-non-vanishing}
Suppose Assumption \ref{A4prime} holds, and the analogues of Assumptions \ref{A1}--\ref{A3} hold with $(F_0,\theta_0,\bar g,\Omega,G)$ replaced by
$(F_\gamma,\theta_\gamma,\mu_\gamma,\Omega_\gamma,G_\gamma)$. Then, conditional on the augmented data $\mathcal{D}_{n,m_n}:= \{x_1, \ldots, x_n, x_1^*, \ldots, x_{m_n}^*\}$,
\begin{equation*} \label{eq:ai-clt}
    \sqrt{n+\alpha_n} (\theta^*(V_n) - \hat{\theta}_{n,\gamma}) \mid \mathcal{D}_{n,m_n} \overset{\mathbb{P}}{\rightsquigarrow} \mathcal{N}(0, J_\gamma^{-1}), \quad J_\gamma:= G_{\gamma,0}^{\top} \Omega_{\gamma,0}^{-1} G_{\gamma,0},
\end{equation*}
where $G_{\gamma,0}:=G_\gamma(\theta_\gamma)$ and $\Omega_{\gamma,0}:=\Omega_\gamma(\theta_\gamma)$.
\end{theorem}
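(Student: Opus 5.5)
The plan is to reduce Theorem \ref{thm:b-non-vanishing} to the same argument that underlies Theorem \ref{thm:bootstrap}. The reference measure is now the augmented empirical law $F_{n,\gamma}$ on $N:=n+m_n$ atoms, rather than $\mathbb{P}_n$. Conditional on $\mathcal{D}_{n,m_n}$, Algorithm \ref{alg:ai-betel} draws Dirichlet weights $V_n=(v_1,\dots,v_{n+m_n})$ with parameters $(1,\dots,1,\alpha_n/m_n,\dots,\alpha_n/m_n)$, whose total mass is $n+\alpha_n$. Using the gamma representation $v_k=\xi_k/\sum_l\xi_l$, with $\xi_i\sim\mathrm{Gamma}(1)$ for data atoms and $\xi_{n+j}\sim\mathrm{Gamma}(\alpha_n/m_n)$ for AI atoms, we get $\mathbb{E}[v_k\mid\mathcal D_{n,m_n}]$ equal to the $F_{n,\gamma}$ weights. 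For any fixed-dimensional function $h$, the conditional covariance of $\sum_k v_k h(x_k)-\int h\,dF_{n,\gamma}$ is $(n+\alpha_n+1)^{-1}\mathrm{Var}_{F_{n,\gamma}}(h)$. Hence the scaling $\sqrt{n+\alpha_n}$ is the natural one. The first step is a conditional multiplier CLT: for the function class $h\in\{g(\cdot,\theta),\ \nabla_\theta g(\cdot,\theta),\ e^{\eta^\top g(\cdot,\theta)}g(\cdot,\theta),\dots\}$,
\[
\sqrt{n+\alpha_n}\,\Bigl(\textstyle\sum_k v_k h(x_k)-\int h\,dF_{n,\gamma}\Bigr)\mid\mathcal D_{n,m_n}\overset{\mathbb P}{\rightsquigarrow}\mathcal N\bigl(0,\mathrm{Var}_{F_\gamma}(h)\bigr).
\]
I would prove this by Lindeberg on the centered gammas. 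The data part contributes a fraction $1/(1+\gamma)$ of the variance. The AI part has $m_n\to\infty$ gamma variables of shape $\alpha_n/m_n$, each with variance $\alpha_n/m_n$ times $h^2$, and it contributes a fraction $\gamma/(1+\gamma)$. Combining the two parts gives exactly $\mathrm{Var}_{F_\gamma}$, because the $2+\delta$ envelopes in Assumption \ref{A2}(i),(iv) give Lindeberg and $\mathbb{P}_n\to F_0$, $\mathbb{P}^*_{m_n}\to F_{\rm AI}$ by the LLN. Glivenko--Cantelli versions, via the envelope in Assumption \ref{A2}(iii), give uniform-in-$(\theta,\eta)$ LLNs for both $\sum_k v_k h(x_k)$ and $F_{n,\gamma}$ toward $F_\gamma$.

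The second step is consistency. With the uniform LLNs in hand, $\psi_V(\eta,\theta):=\log\sum_k v_k e^{\eta^\top g(x_k,\theta)}\to\psi_\gamma(\eta,\theta)$ uniformly on $\Lambda\times\Theta$. By Assumption \ref{A3} (analogue) and strict convexity, $\lambda^*(\theta;V_n)\to\lambda_\gamma(\theta)$ uniformly. The criterion can be written
\[
\ell(\theta;V)=\psi_V(\lambda^*,\theta)-\lambda^{*\top}\textstyle\sum_k v_k g(x_k,\theta).
\]
It therefore converges uniformly to a limit criterion, and that limit is uniquely minimized at $\theta_\gamma$ by the analogue of Assumption \ref{A1}. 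This mirrors the argument for Theorem \ref{thm:posterior-consistency}, and yields $\theta^*(V_n)\to\theta_\gamma$ and $\hat\theta_{n,\gamma}\to\theta_\gamma$ in conditional probability.

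The third step is the local expansion. Schennach's first-order conditions for the ETEL criterion, applied to a general weighted reference measure $F$, can be linearized jointly in $(\theta,\lambda)$ around $(\theta_\gamma,0)$. This step uses $\mu_\gamma(\theta_\gamma)=0$ and therefore $\lambda_\gamma(\theta_\gamma)=0$. It gives, for $F=F_{n,\gamma}$ and for $F=\sum_k v_k\delta_{x_k}$ alike,
\[
\theta^*(F)-\theta_\gamma=-(G_{\gamma,0}^\top\Omega_{\gamma,0}^{-1}G_{\gamma,0})^{-1}G_{\gamma,0}^\top\Omega_{\gamma,0}^{-1}\textstyle\int g(x,\theta_\gamma)\,dF+R(F).
\]
The remainder $R(F)$ is controlled by the local envelopes $L_1,L_2$ and the cubic moment in Assumption \ref{A2}(iii). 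One point needs care. At $\hat\theta_{n,\gamma}$ the sample moment $\int g\,dF_{n,\gamma}$ need not vanish, so $\lambda$ is $O_P(n^{-1/2})$ rather than zero. Both expansions must therefore be carried out around $\theta_\gamma$, and then differenced. The linear terms difference to $-J_\gamma^{-1}G_{\gamma,0}^\top\Omega_{\gamma,0}^{-1}(\sum_k v_k g(x_k,\theta_\gamma)-\int g(\cdot,\theta_\gamma)\,dF_{n,\gamma})$. Scaled by $\sqrt{n+\alpha_n}$ and combined with step one, this has limiting covariance $J_\gamma^{-1}G_{\gamma,0}^\top\Omega_{\gamma,0}^{-1}\Omega_{\gamma,0}\Omega_{\gamma,0}^{-1}G_{\gamma,0}J_\gamma^{-1}=J_\gamma^{-1}$. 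Here $\Omega_\gamma(\theta_\gamma)=\mathrm{Var}_{F_\gamma}(g(\cdot,\theta_\gamma))$ since $\mu_\gamma(\theta_\gamma)=0$.

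The main obstacle is showing $\sqrt{n+\alpha_n}\,(R(V_n)-R(F_{n,\gamma}))=o_P(1)$ conditionally. The remainders are quadratic in $(\hat\theta-\theta_\gamma,\hat\lambda)$, so I first need a $\sqrt{n}$-rate, not just consistency, for both $\theta^*(V_n)$ and $\lambda^*(\theta^*;V_n)$. I would obtain it from a stochastic-equicontinuity bound: under the Lipschitz envelopes, the weighted empirical process $\sqrt{n+\alpha_n}\sum_k(v_k-\bar v_k)[h(x_k,\theta)-h(x_k,\theta_\gamma)]$ is $o_P(1)$ uniformly over shrinking neighborhoods of $\theta_\gamma$. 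The AI atoms require a separate check, because individual gamma weights of shape $\alpha_n/m_n\to0$ are highly skewed. I would verify that their maximum weight is $o_P(1/\sqrt n)$ times the envelope, using $m_n\to\infty$ and the $2+\delta$ moments. With that in place, the standard Pakes--Pollard argument applied to the joint first-order conditions delivers the rate, and the rest follows as in Theorem \ref{thm:bootstrap}.
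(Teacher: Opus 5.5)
Your route is different from the paper's. The paper never linearises around $(\theta_\gamma,0)$. Instead it takes a mean-value expansion of the random score $\nabla_\theta l_{n,m,V}$ between $\hat\theta_{n,\gamma}$ and $\theta^*(V_n)$. It then obtains the conditional CLT for $\sqrt{A_n}\,\nabla_\theta l_{n,m,V}(\hat\theta_{n,\gamma})$, with $A_n:=n+\alpha_n$, by differencing against the deterministic criterion $\ell(\cdot;F_{n,\gamma})$, whose gradient vanishes at $\hat\theta_{n,\gamma}$ (Lemmas \ref{lem:clt-prime} and \ref{lem:etel-score-mixed}). Your gamma/Lindeberg step is correct and is the same ingredient as Lemma \ref{lem:clt-prime}: the conditional covariance $(A_n+1)^{-1}\mathrm{Var}_{F_{n,\gamma}}$ and the $1/(1+\gamma)$ versus $\gamma/(1+\gamma)$ split are right.

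\emph{The gap.} Your rate bookkeeping silently assumes $m_n\gtrsim n$, whereas Assumption \ref{A4prime} only gives $m_n\to\infty$. Since $\alpha_n=\gamma n$ and $\mu_\gamma(\theta_\gamma)=0$,
$$\int g(\cdot,\theta_\gamma)\,dF_{n,\gamma}=\frac{1}{1+\gamma}(\mathbb P_n-F_0)g(\cdot,\theta_\gamma)+\frac{\gamma}{1+\gamma}(\mathbb P^*_{m_n}-F_{\rm AI})g(\cdot,\theta_\gamma),$$
which is of order $n^{-1/2}+m_n^{-1/2}$ and in general no smaller. This has several consequences.
\begin{itemize}
\item The quantities $\hat\theta_{n,\gamma}-\theta_\gamma$, $\theta^*(V_n)-\theta_\gamma$ and the multiplier at $\hat\theta_{n,\gamma}$ are only $O_P(n^{-1/2}+m_n^{-1/2})$. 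Your claim that $\lambda$ is $O_P(n^{-1/2})$ at $\hat\theta_{n,\gamma}$ is false when $m_n=o(n)$.
\item The $\sqrt n$-rate relative to $\theta_\gamma$ that you plan to extract via Pakes--Pollard is not available in general.
\item Bounding the two quadratic remainders separately gives only $R(F)=O_P(m_n^{-1})$, which is $o_P(n^{-1/2})$ only if $m_n/\sqrt n\to\infty$. Note the internal tension: if your $\sqrt n$-rate held, each remainder would already be $O_P(n^{-1})$ and no differencing would be needed.
\item Your maximal-weight check also fails in this regime. For $m_n\lesssim\sqrt n$ the shape $\alpha_n/m_n$ is large, and every AI atom carries weight close to $\delta_\gamma/m_n\gg n^{-1/2}$.
\end{itemize}
What the Lindeberg and equicontinuity arguments actually use is negligibility of the \emph{centred} weights $v_k-\mathbb E[v_k\mid\mathcal D_{n,m_n}]$ on the $A_n^{-1/2}$ scale. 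That follows from $\max_k\mathbb E[v_k\mid\mathcal D_{n,m_n}]\to0$, i.e.\ from $m_n\to\infty$, exactly as in Lemma \ref{lem:clt-prime}.

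\emph{The repair.} Difference the estimating equations, not two separately derived expansions. Write the joint first-order conditions as $\Psi_V(\beta)=0$ and $\Psi_{n,\gamma}(\beta)=0$, with $\beta$ collecting $(\theta,\lambda)$ and any auxiliary parameters. Let the roots be $\hat\beta_V$ and $\hat\beta_{n,\gamma}$, and the population root $\beta_\gamma=(\theta_\gamma,0,\dots)$. Then
$$0=[\Psi_V-\Psi_{n,\gamma}](\hat\beta_V)+\Psi_{n,\gamma}(\hat\beta_V)-\Psi_{n,\gamma}(\hat\beta_{n,\gamma}).$$
\begin{itemize}
\item Your equicontinuity bound turns the first term into $[\Psi_V-\Psi_{n,\gamma}](\beta_\gamma)+o_P(A_n^{-1/2})$, using only $\hat\beta_V\to\beta_\gamma$.
\item A mean-value step with $\dot\Psi_{n,\gamma}\to\dot\Psi_\gamma$ uniformly near $\beta_\gamma$ handles the second term, using only consistency of $\hat\beta_{n,\gamma}$.
\end{itemize}
The conditional rate $\theta^*(V_n)-\hat\theta_{n,\gamma}=O_P(A_n^{-1/2})$ then comes out as an output, and no rate relative to $\theta_\gamma$ is ever used.

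Be aware that the paper's sketch of Lemma \ref{lem:etel-score-mixed} passes over the same point. In over-identified models $\bar S(\hat\theta_{n,\gamma})=\int g(\cdot,\hat\theta_{n,\gamma})\,dF_{n,\gamma}$ is also of order $n^{-1/2}+m_n^{-1/2}$. Its $O(\|S\|^2)$ score remainders must therefore likewise be differenced, exploiting their smoothness, rather than bounded separately.
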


Theorem \ref{thm:b-non-vanishing} characterizes the first-order behavior under a persistent AI prior. When $\alpha_n=\gamma n$, the posterior is centered at the ETEL solution based on the empirical mixture, which converges to the mixture law $F_\gamma$. Thus the population target is $\theta_\gamma$ rather than, in general, $\theta_0$. If $F_{\rm AI}$ is substantially different from $F_0$, then $\theta_\gamma$ may differ from the original target $\theta_0$, so the posterior targets a prior-shifted parameter. When $F_{\rm AI}$ is well aligned with $F_0$, the additional prior mass can reduce posterior dispersion through the effective sample size $n+\alpha_n$. Taken together, Theorems \ref{thm:bootstrap} and \ref{thm:b-non-vanishing} show how the induced ETEL posterior changes as the prior mass ranges from asymptotically negligible to persistent. The proofs for these two theorems are provided in Appendix \ref{sec:bvm-proofs}.

\section{ETEL bootstrap $(\alpha=0)$} \label{sec:betel-bootstrap}

For all simulation and real-data experiments in this section, we set $\alpha=0$ and do not incorporate any generative-model information. The parametric examples considered here show that sensible Bayesian inference can arise through our projection-based ETEL posterior. We illustrate this by revisiting several examples from \cite{10.1111/rssb.12484}, which conducts inference using a Bayesian exponentially tilted empirical likelihood with a direct prior on the parameter. Computation also remains naturally parallelizable: as summarized in Algorithm \ref{alg:ai-betel}, the ETEL projection can be carried out independently for each posterior draw.

\subsection{Over-identified linear IV} \label{subsec:oiv}

We consider a simple over-identified linear IV design. Let $z_i=(z_{1i},z_{2i})^\top$ denote a vector of instruments, independent of the errors $(u_i,v_i)$. The data are generated by
$$
y_i = \beta_0 x_i + u_i,  \qquad  x_i = \pi_1 z_{1i} + \pi_2 z_{2i} + v_i,
$$
with
\[
z_i \sim \mathcal N(0,I_2), 
\qquad 
(u_i,v_i)^\top \sim \mathcal N(0,\Sigma_{uv}), 
\qquad
\Sigma_{uv} =
\begin{pmatrix}
1 & \rho_{uv} \\
\rho_{uv} & 1
\end{pmatrix}.
\]
In the simulation, we generate $n=500$ observations, with $\beta_0=1$, $\pi_1=0.8$, $\pi_2=0.6$, and $\rho_{uv}=0.4$. Since $v_i$ is correlated with $u_i$, the regressor $x_i$ is endogenous. The population OLS estimand is $\beta_0+\frac{\operatorname{Cov}(x_i,u_i)}{\operatorname{Var}(x_i)} =1.2,$ so the sample OLS estimate is expected to be biased away from $\beta_0$. In this setting, $\beta_0$ is identified by the moment restrictions
\[
\mathbb{E}[z_{1i}(y_i-x_i\beta_0)]=0,
\qquad
\mathbb{E}[z_{2i}(y_i-x_i\beta_0)]=0.
\]

We fit the baseline ETEL projection sampler on this simulated dataset with $10,000$ draws. Table \ref{tab:oiv-summary} reports posterior summaries for $\beta$. The posterior is centered at the true value $\beta_0=1$, whereas the sample OLS estimate is close to $1.2$ as visualized in Figure \ref{fig:o-iv}. This simple example illustrates that the projection sampler recovers the moment-identified IV coefficient even when the naive regression target is distorted by endogeneity.

\begin{table}[t]
\centering
\caption{ETEL bootstrap posterior summary in the over-identified linear IV model.}
\label{tab:oiv-summary}
\begin{tabular}{lcrrrrr}
\toprule
Method & Parameter & Mean & SD & Median & Lower & Upper \\
\midrule
ETEL bootstrap & $\beta$ & 1.000 & 0.043 & 1.000 & 0.928 & 1.069 \\
\bottomrule
\end{tabular}
\vspace{0.3em}

\begin{minipage}{0.95\linewidth}
\footnotesize
\emph{Notes:} Results are based on $10{,}000$ posterior draws with
$\alpha=0$. ``Lower'' and ``Upper'' denote the $0.05$ and $0.95$
posterior quantiles. The true value is $\beta_0=1$.
\end{minipage}
\end{table}

\begin{figure}[t]          
  \centering
  \includegraphics[width=0.7\textwidth]{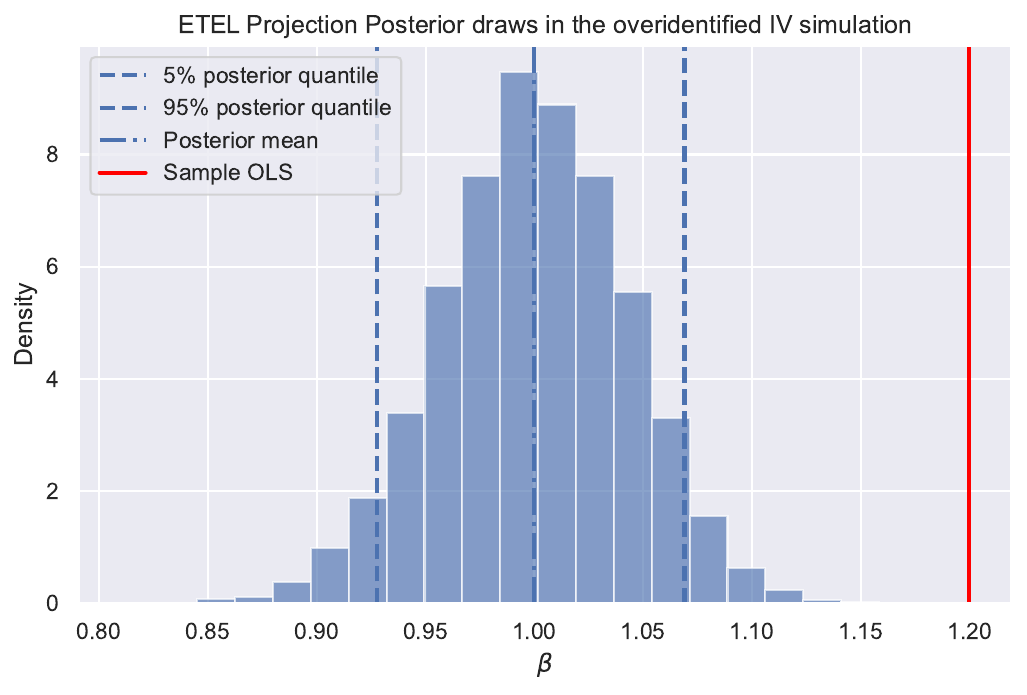} 
  \caption{Posterior draws visualization in the over-identified linear IV simulation.}
  \label{fig:o-iv}
\end{figure}

\subsection{Conditional Heteroscedasticity}

We consider the simulation design in Example 1 of \cite{10.1111/rssb.12484}. 
We generate $n=250$ synthetic observations $\{(y_i,x_i)\}_{i=1}^n$. The covariate $x_i$ is drawn independently from a uniform distribution on $[-1,\,2.5]$, and the true
regression coefficients are set to $\theta_0=1$ and $\theta_1=1$. The
outcome variable is generated according to
\[
y_i = \theta_0 + \theta_1 x_i + \varepsilon_i, \quad \varepsilon_i \sim \mathrm{SN}\bigl(m(x_i),\,h(x_i),\,s(x_i)\bigr)
\]
where the error term $\varepsilon_i$ follows a skew-normal distribution, with location, scale, and shape parameters depending on $x_i$. The location function is defined as $m(x) = -\,h(x)\sqrt{\tfrac{2}{\pi}}\,
\frac{s(x)}{\sqrt{1+s(x)^2}}$, which ensures that the conditional mean restriction
$\mathbb{E}[\varepsilon_i \mid x]=0$ holds. The scale and shape functions
are specified as $h(x) = \sqrt{\exp(1 + 0.7x + 0.2x^2)}$ and $s(x) = 1 + x^2$.

Inference is based exclusively on the conditional moment
restriction $\mathbb{E}[\varepsilon_i \mid w_i]=0$, where $w_i=(1,x_i)^\top$. Following \cite{CHIB2010322}, the conditional moment restriction is transformed into a set of unconditional moment conditions by interacting the residual $\varepsilon_i$ with a $K$-dimensional sieve basis constructed from natural cubic splines. We follow their construction of the basis matrix, including column-differencing to avoid redundancy. Posterior inference is conducted for 
$K \in \{3, 5,\,10,\,15,\,20\}$.

\begin{table}[htbp]
\centering
\caption{ETEL bootstrap posterior summaries under different sieve dimensions $K$.}
\label{tab:posterior-summaries-K}
\begin{tabular}{ccrrrrr}
\toprule
$K$ & Parameter & Mean & SD & Median & Lower & Upper \\
\midrule
\multirow{2}{*}{$3$}
& $\theta_0$ & 1.011 & 0.095 & 1.010 & 0.858 & 1.169 \\
& $\theta_1$ & 1.043 & 0.130 & 1.040 & 0.837 & 1.258 \\
\midrule
\multirow{2}{*}{$5$}
& $\theta_0$ & 1.010 & 0.094 & 1.010 & 0.858 & 1.166 \\
& $\theta_1$ & 1.023 & 0.118 & 1.023 & 0.832 & 1.221 \\
\midrule
\multirow{2}{*}{$10$}
& $\theta_0$ & 0.992 & 0.093 & 0.992 & 0.841 & 1.144 \\
& $\theta_1$ & 1.001 & 0.113 & 1.001 & 0.816 & 1.187 \\
\midrule
\multirow{2}{*}{$15$}
& $\theta_0$ & 0.986 & 0.092 & 0.986 & 0.836 & 1.136 \\
& $\theta_1$ & 0.993 & 0.114 & 0.993 & 0.806 & 1.179 \\
\midrule
\multirow{2}{*}{$20$}
& $\theta_0$ & 0.954 & 0.091 & 0.954 & 0.803 & 1.103 \\
& $\theta_1$ & 0.920 & 0.108 & 0.920 & 0.744 & 1.100 \\
\bottomrule
\end{tabular}

\vspace{0.5em}
\begin{minipage}{0.95\linewidth}
\footnotesize
\emph{Notes:} Results are based on 20{,}000 posterior draws.  “Lower” and “Upper” denote the $0.05$ and $0.95$ posterior quantiles, respectively.
\end{minipage}
\end{table}

For each $K$, we report the ETEL bootstrap posterior mean, standard deviation, median, and the 5th and 95th quantiles for both $\theta_0$ and $\theta_1$ in Table \ref{tab:posterior-summaries-K}. Although our projection-based posterior does not leverage the truncated Student-$t$ prior adopted in \cite{10.1111/rssb.12484}, its posterior means and credible intervals closely match the reported results across sieve dimensions. The pattern is also consistent with the observation in \cite{10.1111/rssb.12484}: increasing $K$ introduces more unconditional restrictions and therefore produces more concentrated posteriors. Across the sieve dimensions considered, the 90\% posterior credible intervals consistently contain the true parameter values, indicating that inference remains robust for various values of $K$.

\subsection{Asset Pricing} \label{subsec:ap}

We consider the asset-pricing application in \cite{10.1111/rssb.12484}, which uses monthly excess returns from January 1974 to December 2018 $(T=540)$. The data include $12$ candidate risk factors from the \texttt{czfactor} R package. Let $f_t=(x_t',w_t')'\in\mathbb{R}^{12}$, where $x_t$ denotes the market excess return and $w_t$ contains the remaining $11$ factors. The parameters of interest are $(\beta,\mu_x)$, where $\beta$ is the risk premium associated with the market factor and $\mu_x=\mathbb{E}(x_t)$. Following standard stochastic discount factor (SDF) theory, identification is based on a combination of unconditional pricing restrictions and conditional moment conditions,
\begin{align*}
\mathbb{E}\!\left[(1-\beta(x_t-\mu_x))f_t\right] = 0, \quad \mathbb{E}[x_t-\mu_x \mid f_{t-1}] &= 0.
\end{align*}

We convert the conditional restriction into unconditional moments as in \cite{10.1111/rssb.12484}:
\[
\mathbb{E}\!\left[(x_t-\mu_x) \otimes 
\bigl(q^K(f_{1,t-1}), \tilde q^K(f_{2,t-1}), \ldots, \tilde q^K(f_{12,t-1})\bigr)\right] = 0.
\]
Here $q^K(f_{1,t-1})$ consists of $K=3$ natural cubic spline basis functions as in the original analysis. For each factor $j \geq 2$,  $\tilde q^K(f_{j,t-1})$ contains two basis functions formed by subtracting the first column of $q^K(f_{j,t-1})$ from the remaining columns and then removing redundancy. This construction yields $3+(12-1)\cdot(3-1)=25$ expanded moment
conditions. Together with the $12$ pricing restrictions, this gives
$37$ moment conditions in total.

While \cite{10.1111/rssb.12484} employs a training-sample-based Student-$t$ prior on $(\beta,\mu_x)$, our projection-based bootstrap approach does not specify a direct prior on these parameters. Instead, posterior uncertainty is induced by reweighted empirical distributions
passed through the ETEL projection. To ensure comparability, we match their choices of $K$ and the number of posterior draws. Table \ref{tab:asset-pricing-compare} compares posterior summaries from our replication with those reported in \cite{10.1111/rssb.12484}. Although we do not specify a direct prior on $(\beta, \mu_x)$, our posterior estimates are close to their original results. Posterior uncertainty is modestly larger, reflecting the absence of prior regularization, but the $90\%$ credible interval continues to exclude zero by a wide margin. Consequently, both approaches lead to the same economic conclusion that the market excess return is a priced risk factor.

\begin{table}[t]
\centering
\caption{Posterior summaries for $(\beta,\mu_x)$ in the SDF.}
\label{tab:asset-pricing-compare}
\begin{tabular}{llrrrrr}
\toprule
Method & Parameter & Mean & SD & Median & Lower & Upper \\
\midrule
\multirow{2}{*}{ETEL bootstrap}
& $\beta$     & 2.823 & 0.802 & 2.918 & 1.367 & 4.249 \\
& $\mu_x$ & 0.006 & 0.002 & 0.006 & 0.003 & 0.009 \\
\multirow{2}{*}{BETEL with Student-$t$ prior}
& $\beta$     & 2.981 & 0.730 & 2.955 & 1.818 & 4.211 \\
& $\mu_x$ & 0.006 & 0.001 & 0.006 & 0.004 & 0.008 \\
\bottomrule
\end{tabular}

\vspace{0.5em}
\begin{minipage}{0.95\linewidth}
\footnotesize
\emph{Notes:} The row labeled ``BETEL with Student-$t$ prior'' reproduces the published posterior summaries reported in \cite{10.1111/rssb.12484}, with Student-$t$ prior on $(\beta, \mu_x)$. The results for both approaches are based on $50,000$ posterior draws.  ``Lower'' and ``Upper'' correspond to the $0.05$ and $0.95$ posterior quantiles.
\end{minipage}
\end{table}

\subsection{Average Treatment Effect (ATE)} \label{subsec:ate}
Consider the Massachusetts lottery data analyzed in \cite{10.1111/rssb.12294}, where the treatment indicator $w_i\in \{0,1\}$ denotes winning a large prize. Let $x_i \in \mathbb{R}^{13}$ be the covariate vector, $y_i$ be the average labor income over the six years following the lottery, and the propensity score $\eta_i = \Pr(w_i=1\mid x_i)=\exp(\gamma'x_i)/\{1+\exp(\gamma'x_i)\}$. To improve overlap, we trim observations with extreme propensity scores, following \cite{10.1111/rssb.12484}. This yields a filtered sample size of $N=323$. 

The parameter of interest is $\beta = (\gamma, \tau)$, where $\tau$ is the ATE. Define $z_i=(x_i,y_i,w_i)$. Estimation is based on the moment conditions
\[
\mathbb{E}\!\left[g(z_i,\beta)\right]=0, \qquad
g(z_i,\beta)=\left[\bigl\{x_i(w_i-\eta_i)\bigr\}^\prime,\;
\frac{(w_i-\eta_i)y_i}{\eta_i(1-\eta_i)}-\tau\right]^\prime.
\]

Our ETEL bootstrap procedure yields a posterior mean ATE of \(-\$5{,}997\) (SD \(1{,}516\)), with a 90\% posterior credible interval \([-\$8{,}434,\,-\$3{,}696]\). These results are close to those reported by \cite{10.1111/rssb.12294}, who obtain \(-\$5{,}346\) with interval \([-\$8{,}069,\,-\$2{,}720]\). Figure \ref{fig:tau} in the appendix shows that the posterior distribution is unimodal and closely aligned with the benchmark. Both approaches therefore lead to the same conclusion: lottery winnings are associated with a reduction in subsequent earnings. The similarity of the results indicates that the identifying information in the moment conditions is sufficient to recover the main effect, with prior assumptions primarily affecting posterior dispersion.

\section{AI-Powered ETEL} \label{sec:betel-ai}

The ETEL bootstrap in Section \ref{sec:betel-bootstrap} shows that the proposed projection posterior can deliver informative inference using only the observed sample and the moment restrictions. When additional regularization is desirable, our framework can incorporate auxiliary information through the prior base measure. In this section, we construct this base measure using large language models (LLMs), which can encode domain knowledge that may be difficult to express through a direct prior on $\theta$. Although our applications focus on LLMs, our framework applies more broadly to other generative models and auxiliary data sources.

Following \cite{ohagan2025aipoweredbayesianinference}, we generate synthetic data conditionally on observed covariates. For observed data $\mathcal{D}_n=\{x_i:=(z_i,y_i)\}_{i=1}^n$, where $z_i$ denotes covariates and $y_i$ denotes labels, we proceed draw by draw. For each posterior draw $b$, we first sample covariates $\{z_j^{(b)}\}_{j=1}^m$ from the empirical covariate distribution and then query the LLM to generate the corresponding synthetic labels $\{y_j^{*(b)}\}_{j=1}^m$. The augmented sample in draw $b$ is therefore $\{x_j^{*(b)}:=(z_j^{(b)},y_j^{*(b)})\}_{j=1}^m$.  Although one could instead use an LLM to simulate the entire data-generating process unconditionally, our empirical evidence suggests this approach may perform worse, as illustrated in Appendix \ref{subsec:full-dgp-generation}.

\subsection{Equity Return Predictions} \label{subsec:returns}

We study a firm-date level equity return prediction problem based on overnight news headlines. For firm $i$ on date $t$, define 
\vspace{-0.2cm}
$$y_{it} := \mathbbm{1}\left \{P_{i,t}^{\rm{open}} - P_{i,t-1}^{\rm{close}} >0 \right\},$$
the sign of the overnight return. Each observation consists of a bundle of headlines associated with firm $i$ between the previous market close and the next market open, together with the corresponding binary return label. We focus on the top $40$ U.S. firms by market capitalization as of June 30, 2025. The sample spans July 1 to December 31, 2025. We use July--August as the training period, with 1,253 firm-date
observations and 13,036 headlines, and September--December as the
evaluation period, with 2,582 firm-date observations and 25,721
headlines. For each firm-date, we concatenate the associated headlines into a single document and represent it by a $10,000$-dimensional sparse TF-IDF (term frequency–inverse document frequency) vector, using unigram and bigram features \cite{SALTON1988513}. 

Let $v_{it}$ denote the resulting TF-IDF feature vector, and let $\eta(\cdot)$ be the sigmoid function. The logistic score moment condition is $\mathbb{E}\!\left[v_{it}\bigl(y_{it}-\eta(v_{it}^{\top}\beta)\bigr)\right]=0$. In this exactly identified setting, GPT-ETEL can be viewed as a
Bayesian-bootstrap reweighting of the logistic score equations, augmented
by synthetic labels. Concretely, for each posterior draw we set $m=3000$, resample 3,000 firm-date observations from the empirical training distribution, prompt ChatGPT-5.2 using the corresponding raw headline bundles, and convert the returned sentiment scores into synthetic binary labels. Because the model cutoff date for ChatGPT-5.2 is August 31, 2025, and all synthetic labels are generated using only the training sample, this procedure does not introduce look-ahead bias when evaluating September-December data.

We evaluate predictive performance using $500$ Monte Carlo replications. In each replication, the September–December observations are randomly split into equal-sized validation and test sets. For the GPT-ETEL procedure, we select the prior strength parameter $\alpha \in \{0, 1, 10, 100, 200, 350, 500, 750, 1000\}$ by maximizing validation AUC (area under the ROC curve), computed on the validation split. As a baseline, we consider an $\ell_2$-logistic regression, and select the inverse penalty strength $C \in \{5\times 10^{-4}, 10^{-3}, 10^{-2}, 0.05, 0.1, 0.5, 1, 10, 100\}$ using the same criterion. After selecting the tuning parameter, we refit each model on the combined training and validation samples and report performance on the held-out
test set.

\begin{table}[!t]
\centering
\caption{Test-set performance in overnight news prediction.}
\label{tab:finance_main_results}
\small
\setlength{\tabcolsep}{7pt}
\renewcommand{\arraystretch}{1.15}
\begin{tabular}{lcccccc}
\toprule
& \multicolumn{3}{c}{AUC} & \multicolumn{3}{c}{Accuracy} \\
\cmidrule(lr){2-4} \cmidrule(lr){5-7}
Method & Mean & SD & Win rate & Mean & SD & Win rate \\
\midrule
GPT-ETEL
& 0.5743 & 0.0114 & 85.2\%
& 0.5605 & 0.0110 & 79.4\% \\

$\ell_2$-logistic
& 0.5597 & 0.0163 & --
& 0.5476 & 0.0143 & -- \\
\bottomrule
\end{tabular}

\vspace{0.5em}
\begin{minipage}{0.95\linewidth}
\footnotesize
\emph{Notes:} Results are based on $500$ Monte Carlo replications. Win rates report the proportion of Monte Carlo simulations in which GPT-ETEL outperforms the $\ell_2$-logistic benchmark.
\end{minipage}
\end{table}

Table \ref{tab:finance_main_results} shows that GPT-ETEL delivers a clear and consistent improvement over the $\ell_2$-logistic regression benchmark. Averaged over $500$ Monte Carlo replications, GPT-ETEL achieves higher mean test AUC and accuracy, as well as higher pairwise win rates. Figure \ref{fig:val_auc_alpha_l2} reports validation AUC across tuning parameters and shows that the candidate grids contain the empirical performance peaks for both methods. In particular, $\alpha=200$ appears to be the most favorable choice, indicating that the AI prior and synthetic labels can be beneficial in predicting returns. The test-set AUC is higher than the corresponding validation AUC because, after tuning, each method is re-estimated on the enlarged sample formed by combining the training and validation observations before being evaluated on the held-out test set.

\begin{figure}[!t]
\centering
\includegraphics[width=0.99\textwidth]{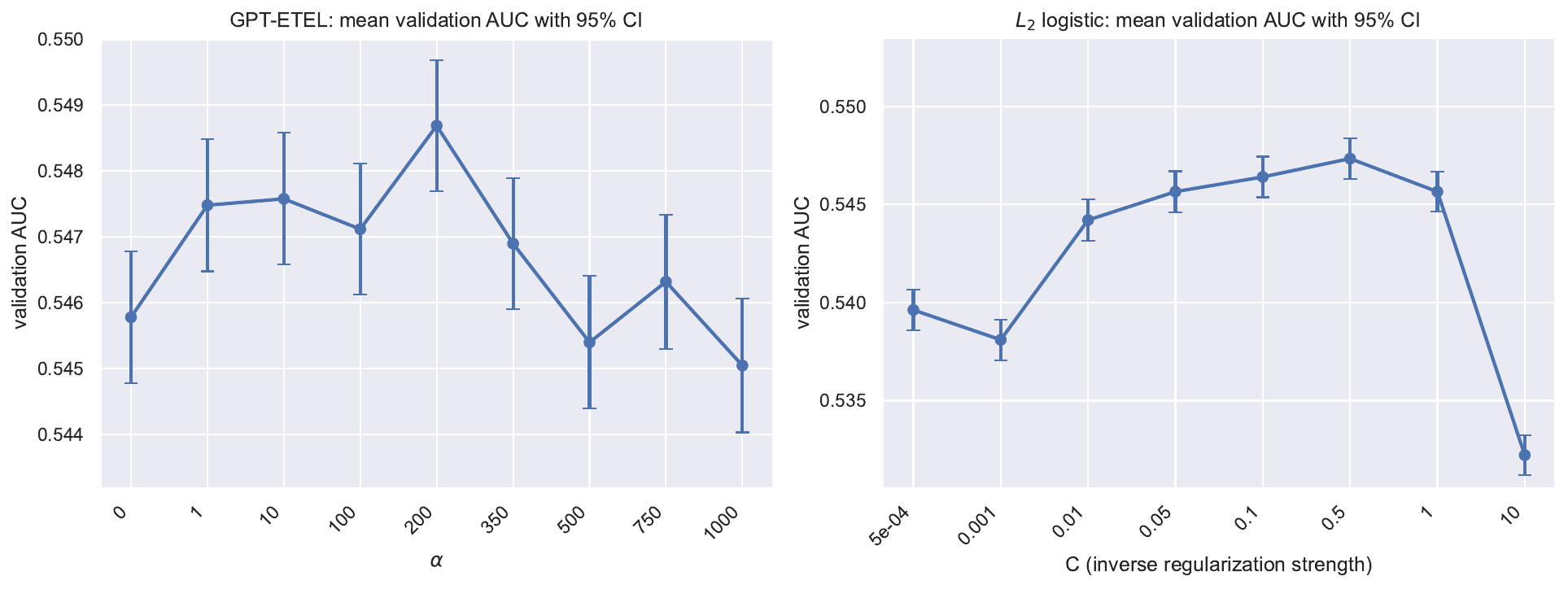}
\caption{Validation AUC across tuning parameters. Both panels report mean validation AUC with 95\% confidence intervals based on 500 Monte Carlo replications.}
\label{fig:val_auc_alpha_l2}
\end{figure}

More broadly, this experiment illustrates a key advantage of GPT-ETEL in prediction problems with unstructured text. In such settings, specifying a meaningful prior is difficult: a conventional Gaussian prior on the regression coefficients is convenient, but it does not meaningfully exploit the semantic content of the text itself. As a result, it often functions mainly as generic shrinkage and can appear somewhat arbitrary from a modeling standpoint. By contrast, our approach leverages an AI prior to extract information from the headline bundles, allowing the prior to be informed by textual content rather than by an abstract penalty on coefficients. In addition to improving predictive performance, this approach can reduce reliance on hand-crafted text-specific modeling choices. Additional details on data collection and alternative prompting strategies, including generating synthetic news directly for this prediction problem, are provided in Appendix \ref{sec:finance-appendix}.

\subsection{Engel Curve Recovery} \label{subsec:Engel}

We study the recovery of a parametric Engel curve calibrated to the application  in \cite{bck}. The data come from the 1995 British Family Expenditure Survey at the household level, where $y_i$ is food budget share, $x_i$ is log total expenditure, and $z_i$ is log gross earnings. Following \cite{bck}, we focus on $628$ working-age couples without children. The DGP is given by 
\begin{equation} \label{eq:Engel-dgp}
    y_i = h_0(x_i) + \varepsilon_i, \quad \varepsilon_i = \mathbb{E}[h_0(x_i) |z_i] - h_0(x_i) + v_i,
\end{equation}
where $h_0(x_i)$ is a nonlinear decreasing Engel curve and $v_i \sim \mathcal{N}(0, 0.01)$. Our goal is to recover $h_0(x_i)$ under the conditional moment restriction 
$\mathbb{E}[\varepsilon_i\mid z_i]=0$. The structural function is assumed to have a decreasing probit form,
\begin{equation*} \label{eq:Engel-curve}
    h_0(x) = a -b \Phi\left (\frac{x-c}{d} \right),
\end{equation*}
where $(a,b,c,d) \approx (0.28, 0.20, 5.34, 0.51)$ is estimated from the data. Since $b >0$, it is consistent with Engel's law that food budget share tends to fall as total expenditure rises. We estimate the joint distribution of $(x_i,z_i)$ from the entire sample using kernel methods, and denote the estimated density as $\hat{f}(x,z)$.

We report results from $100$ Monte Carlo simulations. In each simulation, we draw a training sample of size $N$ from $\hat f(x,z)$, generate outcomes from \eqref{eq:Engel-dgp}, and then generate an independent test sample of the same size. To construct the AI prior, for each posterior draw we sample $m=n/2$ observations of $(x_i,z_i)$ from the training data and use the OpenAI API to generate synthetic outcomes $y_i^*$ conditional on $(x_i,z_i)$. The prompt is designed to include Engel's law and to
encode qualitative shape information: the Engel curve should exhibit the
economically expected decreasing shape. This type of interpretable shape restriction is difficult to impose through a direct prior on the spline coefficients or through a generic frequentist regularization scheme. Additional prompt details are given in Appendix \ref{subsec:prompt-design}. 

For GPT-ETEL, we use $100$ posterior draws for various choices of $\alpha$, including the $\alpha=0$ case corresponding to the ETEL bootstrap without synthetic data. As a benchmark, we report the NPIV estimator of \cite{b9a32a52-d8c7-3448-8a03-b6226cc84dcd}, which regularizes the ill-posed inverse problem via a compactness restriction. For both GPT-ETEL and NPIV, we approximate the structural function using
the same cubic spline basis with $3$ degrees of freedom for $x$ and a richer spline basis with $4$ degrees of freedom for $z$.

Table \ref{tab:Engel-curve-rmse} reports the prediction RMSE for structural recovery. When $N=100$, the ETEL bootstrap performs noticeably worse than the NPIV benchmark. The absence of prior regularization exacerbates the convex-hull problem: the moment restrictions are enforced on a sparse support, leading to highly concentrated implied weights. Once synthetic observations are introduced through the AI prior, RMSE improves substantially. To illustrate, the left panel of Figure \ref{fig:cum-mass-sidebyside} plots the cumulative top-$k$ implied masses $\sum_{j=1}^k p_{(j)}$ for a representative Monte Carlo replication, where $p_{(j)}$ denotes the $j$-th largest ETEL-implied probability mass. Without AI augmentation, the two largest support points can constitute $10\%$ of the total weights. In contrast, GPT-ETEL substantially flattens the curve, especially for larger $\alpha$, indicating a more diffuse weight distribution. The right panel shows the corresponding recovered Engel curves. Relative to ETEL bootstrap, GPT-ETEL yields a more regularized curve that tracks the true shape more closely.

\begin{table}[t]
\centering
\caption{Training and test RMSE for the Engel-curve recovery.}
\label{tab:Engel-curve-rmse}
\begin{tabular}{lcccc}
\toprule
& \multicolumn{2}{c}{$N=100$} & \multicolumn{2}{c}{$N=400$} \\
\cmidrule(lr){2-3}\cmidrule(lr){4-5}
Method & Train RMSE & Test RMSE & Train RMSE & Test RMSE \\
\midrule
NPIV        & 0.02810 & 0.03127 & 0.01620 & 0.01717 \\
ETEL bootstrap ($\alpha=0$)         & 0.03406 & 0.03676 & 0.01609 & 0.01693 \\
GPT-ETEL ($\alpha=0.01N$) & 0.02875 & 0.03269 & 0.01608 & 0.01693 \\
GPT-ETEL ($\alpha=0.05N$) & 0.02867 & 0.03196 & \textbf{0.01607} & \textbf{0.01690} \\
GPT-ETEL ($\alpha=0.10N$) & 0.02510 & 0.02712 & 0.01610 & 0.01692 \\
GPT-ETEL ($\alpha=0.20N$) & 0.02618 & 0.02897 & 0.01617 & 0.01696 \\
GPT-ETEL ($\alpha=0.30N$) & 0.02630 & 0.02855 & 0.01632 & 0.01709 \\
GPT-ETEL ($\alpha=0.50N$) & \textbf{0.02429} & \textbf{0.02668} & 0.01663 & 0.01733 \\
\bottomrule
\end{tabular}

\vspace{0.4em}
\begin{minipage}{0.92\linewidth}
\footnotesize
\emph{Notes:} Boldface indicates the lowest RMSE within each column.
\end{minipage}
\end{table}

\begin{figure}[t]
\centering

\begin{minipage}[t]{0.49\textwidth}
    \centering
    \includegraphics[width=\textwidth]{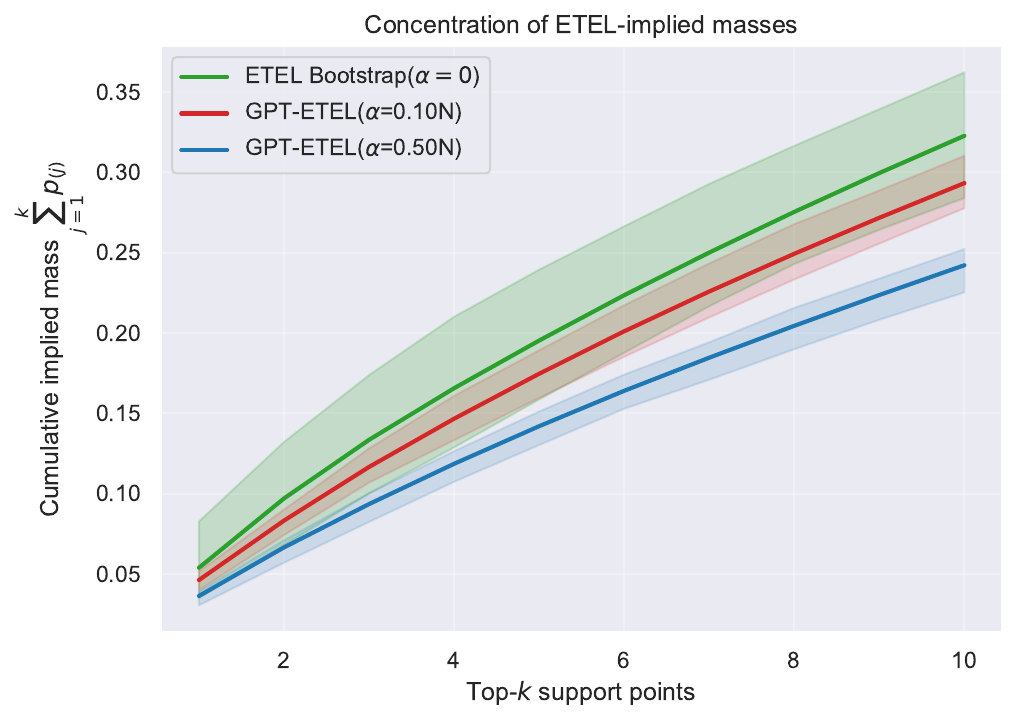}
\end{minipage}
\hfill
\begin{minipage}[t]{0.49\textwidth}
    \centering
    \includegraphics[width=\textwidth]{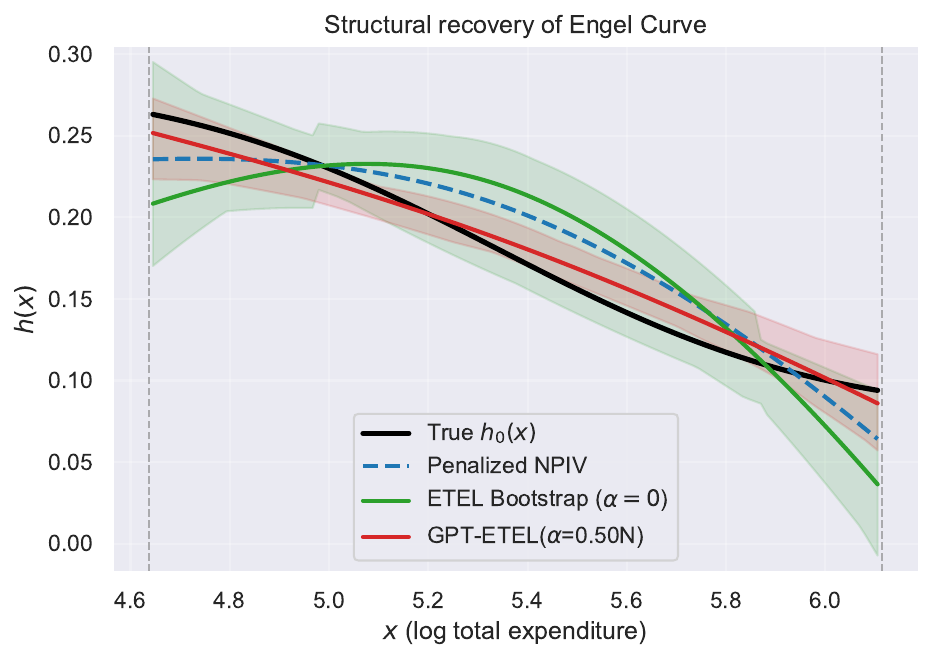}
\end{minipage}

\caption{Cumulative implied mass and structural function recovery ($N=100$). The shaded confidence bands are constructed from $100$ posterior draws.}
\label{fig:cum-mass-sidebyside}
\end{figure}

When $N=400$, the convex-hull issue is less severe. In this regime, ETEL bootstrap can already slightly outperform NPIV, and adding AI-generated prior information yields modest additional gains. More broadly, this experiment highlights the potential of AI-based prior in settings where directly specifying a prior on the parametric function may be difficult to justify. The GPT-based prior uses contextual economic information to generate economically grounded
pseudo-samples that incorporate additional shape information, which in turn improves inference.

\vspace{-0.3cm}
\section{Discussion}

We have proposed a nonparametric Bayesian approach to inference in moment condition models. Our approach places a Dirichlet process prior on the distribution of the observables, allowing for settings
in which prior information can be incorporated through auxiliary data or generative AI. This perspective is attractive when sample sizes are small, and prior domain knowledge is more naturally expressed at the distributional level. The resulting procedure is simple, parallelizable, and applicable both with and without auxiliary information. We established posterior consistency and Bernstein--von Mises theorems under regimes in which prior influence is either asymptotically negligible or asymptotically persistent. The empirical results show that the method can deliver reliable inference from moment restrictions alone, and can also leverage generative AI as a source of auxiliary information when direct priors on the parameter are difficult to specify.

There are several natural directions for future work. First, further theory is
needed to understand the posterior on $\theta$ induced by nonparametric priors
on $F$. The generalized ridge calculation in Section~\ref{subsec:reg} shows that,
in the linear regression setting, a central posterior summary admits an explicit
shrinkage representation. More explicit characterizations of the induced posterior would be useful for
extending the approach to richer semiparametric and nonparametric settings,
including models defined by general conditional moment restrictions
\citep{kankanala2025generalizedbayesconditionalmoment,liao2011posterior}. Second, AI-generated prior information should be made more reliable through calibration or rectification of the AI-induced base measure. Recent work on rectified AI priors shows that adjusting the synthetic data-generating law before using it can reduce the centering bias of AI-informed posteriors while preserving efficiency gains \cite{ChoiOHagan2026}. Developing such bias-reduction tools for moment restriction models is a promising route toward more robust AI-powered inference.

\section*{Data availability statement and acknowledgment}

Data used in Section \ref{subsec:ap} are available through the \texttt{czfactor} R package. The data used in Sections \ref{subsec:ate} and \ref{subsec:Engel} are public and are included with the replication code. The financial news data used in Section \ref{subsec:returns} are proprietary and accessible through a WRDS account; we provide code to query the relevant headlines and stock price data through the WRDS API.

The authors acknowledge the use of ChatGPT (OpenAI, GPT-5.5) to assist with language editing, code clarity, and the generation of synthetic data used in Section \ref{sec:betel-ai}.

\spacingset{1.0}
\bibliographystyle{chicago}
\bibliography{ell_final}
\clearpage
\newpage{}
\appendix
\begin{center}
    \section*{SUPPLEMENTARY MATERIALS}
\end{center}
\spacingset{1.2}

\section{Moment Condition Models: Motivating Examples} \label{sec:motivating-examples}

We provide common examples of statistical models that are naturally
formulated through moment conditions.

\begin{example}[Loss functions]\label{ex1}
Let $\ell(x,\theta)$ denote a loss function, and suppose that
$\theta_0 = \arg\min_{\theta \in \Theta} \mathbb{E}[\ell(x,\theta)]$,
where $\Theta \subseteq \mathbb{R}^{d_{\theta}}$. If $\ell(x,\theta)$ is differentiable in $\theta$ and $\theta_0$ lies in the interior of $\Theta$, then the first-order condition implies
\[
\mathbb{E}\!\left[\nabla_\theta \ell(x,\theta_0)\right] = 0.
\]
In this case, the moment function is  $g(x,\theta) = \nabla_\theta \ell(x,\theta)$.
\end{example}

\begin{example}[Linear instrumental variables]\label{ex2}
Let $x = (y,D,Z^\prime)^\prime$, where $y$ is the outcome variable, $D$ is the regressor of interest, and $Z$ is a vector of instruments. Consider the linear model $
y = D\theta_0 + u $, where $u$ is an unobserved error term and $D$ may be endogenous, meaning that it need not satisfy the usual exogeneity condition $\mathbb{E}[Du] = 0$. Identification relies on a vector of instruments $Z$ that is correlated with $D$ but orthogonal to $u$, so that 
\[
\mathbb{E}\!\left[Z\bigl(y - D\theta_0\bigr)\right] = 0.
\]
\end{example}

\begin{example}[Quantile regression]\label{ex3}
Let $x = (y,z)$, where $y$ is the outcome and $z$ is a vector of covariates, and fix a quantile $\tau \in (0,1)$. In the linear quantile regression model \cite{koenker1978regression}, the parameter $\theta_0$ characterizes the $\tau$-th conditional quantile of $y$ given $z$ through $
Q_\tau(y \mid z) = z^\prime \theta_0.$
Under the usual regularity conditions, $\theta_0$ satisfies the moment restriction
\[
\mathbb{E}\!\left[z\left\{\tau - \mathbbm{1}(y \leq z^\prime \theta_0)\right\}\right] = 0.
\]
\end{example}

\begin{example}[Dynamic panels]\label{ex:dynpanel}
Let $
x_i=\{(y_{it},d_{it}) : t=0,\dots,T\}
$
denote the observed panel for unit $i$, where $y_{it}$ is the outcome and $d_{it}$ is a covariate of interest. Consider the dynamic panel model $
y_{it}=\alpha_0 y_{i,t-1}+\beta_0 d_{it}+\eta_i+u_{it}, $
where $\eta_i$ is an unobserved unit-specific effect and $u_{it}$ is an idiosyncratic error term. In this setting, $y_{i,t-1}$ may be correlated with $\eta_i$, and $d_{it}$ may be correlated with the contemporaneous shock $u_{it}$. A standard approach, following \cite{arellano1991some}, is to first-difference the model to eliminate $\eta_i$:
\[
\Delta y_{it}
=
\alpha_0 \Delta y_{i,t-1}
+
\beta_0 \Delta d_{it}
+
\Delta u_{it},
\qquad
\Delta u_{it}=u_{it}-u_{i,t-1}.
\]
If $u_{it}$ is serially uncorrelated, then lagged levels dated $t-2$ and earlier are orthogonal to $\Delta u_{it}$ and can therefore be used as instruments. For example, taking $
z_{it}=(y_{i,t-2},d_{i,t-2})^\prime$ for $t \geq 2$, the parameter vector $\theta_0=(\alpha_0,\beta_0)^\prime$ satisfies the moment restriction
\[
\mathbb{E}\!\left[
z_{it}
\Bigl(
\Delta y_{it}
-
\alpha_0 \Delta y_{i,t-1}
-
\beta_0 \Delta d_{it}
\Bigr)
\right]
=0.
\]
\end{example}

\section{Connection to Bayesian Bootstrap}

When $\alpha =0$, the ETEL projection posterior is supported only on the observed data, which aligns with the idea of Bayesian bootstrap \cite{10.1214/aos/1176345338}. Pinpointing the exact prior under Bayesian bootstrap is generally challenging, unless in very simple scenarios.  Appendix \ref{subsec:bb-example} demonstrates a simple example where we can explicitly deduce that Bayesian bootstrap induces a Haldane prior on $\theta$. In the weighted likelihood bootstrap (WLB) literature \cite{nthesis}, WLB draws can asymptotically match the parametric Bayesian posteriors in a higher order if the squared Jeffery prior is placed on $\theta$.

\subsection{Example: Prior Derivation Under Bayesian Bootstrap} \label{subsec:bb-example}

\begin{example}
Consider a simple case $\alpha=m=0$. Suppose the data $x \in \{0,1\}$ is binary, and there are $k$ ones and $(n-k)$ zeros. We are interested in the mean, so the moment condition is $g(x,\theta)= x-\theta$. With Bayesian bootstrap and let $v \sim\text{Dir}(1,\ldots,1)$ be the Dirichlet draw, we have the estimator $$\theta^*(V_n) = \sum_{i=1}^n v_i x_i = \sum_{i: x_i=1} v_i \sim \text{Beta}(k, n-k).$$
Given the likelihood is $\theta^{k}(1-\theta)^{n-k}$, and the posterior proportional to $\theta^{k-1}(1-\theta)^{n-k-1}$. Then we can deduce the prior is $\theta^{-1}(1-\theta)^{-1}$, which corresponds to the Haldane prior. In this particular example, our DGP prior is improper and the induced Haldane prior remains improper.
\end{example}

\subsection{Proof of Theorem \ref{thm:bootstrap}: Connection to Chamberlain and Imbens' Bayesian Bootstrap} \label{subsec:e-proof}

\begin{proof}
    When $\alpha=0$, there is no augmented data so that $m=0$.  write $F_v=\sum_{i=1}^n v_i\delta_{x_i}.$ For a fixed $\theta$, let $p_i(\theta)=v_iw_i^*(\theta)$ denote the projected probability mass. It is convenient to rewrite the ETEL criterion as follows:
    $$L(\theta):= - \sum_{i=1}^n v_i \log w_i^*(\theta) = - \sum_{i=1}^n v_i \log \frac{p_i(\theta)}{v_i} = -\sum_{i=1}^n v_i \log p_i(\theta) + \sum_{i=1}^n v_i \log v_i.$$
    
    Since the second term of $L(\theta)$ is constant in $\theta$, we may focus only on the first term which is the classic cross-entropy loss. By Gibbs' inequality, we have
    $$-\sum_{i=1}^n v_i\log p_i \ge -\sum_{i=1}^n v_i\log v_i,$$
    with equality if and only if $p_i=v_i$ for all $i$. By assumption, there exists a unique $\theta^\dagger$ such that $\sum_{i=1}^n v_i g(x_i,\theta^\dagger)=0.$ It follows that, at $\theta^\dagger$, the choice $p_i=v_i$, equivalently $w_i^*(\theta^\dagger)=1$, is feasible for the weighted entropy program. It attains the lower bound above, so $L(\theta^\dagger)=0$ and $\theta^\dagger$ minimizes the ETEL criterion. On the other hand, since $\theta^*(v)$ minimizes the ETEL criterion which is nonnegative (since it is also a KL criterion), we have $\theta^*(v)=\theta^\dagger.$ Since $\theta^\dagger$ is also the unique solution to the weighted moment equation defining $\hat\theta_{\rm CI}(v)$, we have
    $\hat\theta_{\rm CI}(v)=\theta^\dagger=\theta^*(v).$ Moreover, $w_i^*(\theta^\dagger)=1$ implies $P_{\theta^\dagger}^*(F_v)=F_v$ and $\ell(\theta^\dagger;F_v)=0$.
\end{proof}

\section{Proofs for Asymptotic Normality} \label{sec:bvm-proofs}

We provide the proofs of the theoretical results stated in Section \ref{sec:theory}. 
Section \ref{subsec:notation} reviews the notation used throughout the appendix.  We then establish posterior consistency in two steps: Section \ref{subsec:al-consistency} collects the auxiliary lemmas, and Section \ref{subsec:consistency-proof} proves Theorem \ref{thm:posterior-consistency}. Next, Sections \ref{subsec:al-bootstrap} and \ref{subsec:proof-bootstrap} present the auxiliary lemmas and proof of the BvM theorem under an asymptotically negligible prior. 
Finally, Section \ref{subsec:proof-non-vanishing} gives the proof for the non-vanishing prior-strength regime.

\subsection{Notations setup} \label{subsec:notation}
Our theory focuses on a continuous $F_{\rm{AI}}$ base measure. Recall that in Algorithm \ref{alg:ai-betel}, we approximate the DP posterior draw as
$$F^{(b)}= \sum_{i=1}^n v_i^{(b)}\,\delta_{x_i}+ \sum_{j=1}^m v_{n+j}^{*(b)}\,\delta_{x_j^{*(b)}},$$
where $x_1^{*(b)},\ldots,x_m^{*(b)} \stackrel{\mathrm{iid}}{\sim} F_{\rm{AI}}$ and $(v_1^{(b)}, \ldots, v_n^{(b)}, v_{n+1}^{*(b)}\ldots,v_{n+m}^{*(b)}) \sim \mathrm{Dirichlet}\!\left( \underbrace{1,\ldots,1}_{n}, \underbrace{\alpha/m,\ldots,\alpha/m}_{m}
    \right)$.
When $\alpha_n =0$, we set $m=0$.

For notational simplicity, we suppress the dependence on the draw $b$ and relabel the synthetic data to write 
\begin{align*}
    &V_n := (v_1^{(b)}, \ldots, v_n^{(b)}, v_{n+1}^{*(b)}\ldots,v_{n+m}^{*(b)})  = (v_1, \ldots, v_{n+m}), \\
    &(x_1^*, \ldots, x_m^{*}):= (x_{n+1}, \ldots, x_{n+m}).
\end{align*}
Our generic posterior draw is defined as 
\begin{equation*}
    \theta^{*}(V_n) := \arg \min_{\theta} l_{n,m,V}(\theta), \quad l_{n,m,V}(\theta) := - \sum_{k=1}^{n+m} v_k \log w_k^*(\theta),
\end{equation*}
where $w_k^{*}(\theta)$ is defined in (\ref{eq:AI-SOL}). Plugging the expression of $w_k^{*}(\theta)$ into $l_{n,m,V}$ yields
\begin{align*}
    l_{n,m,V}(\theta) = \psi_{n,m,V}(\lambda(\theta), \theta) - \lambda(\theta)^{\top} S_{n,m,V}(\theta),
\end{align*}
where 
\begin{equation*} \label{eq:loss-component}
\begin{aligned}
  S_{n,m,V}(\theta) 
  &:= \sum_{k=1}^{n+m} v_k g(x_k, \theta), \qquad \psi_{n,m,V}(\eta, \theta)
  := \log \sum_{k=1}^{n+m} v_k e^{\eta^{\top} g(x_k, \theta)}, \\
  \lambda(\theta)
  &:= \arg\min_{\eta\in\Lambda} \psi_{n,m,V}(\eta,\theta).
\end{aligned}
\end{equation*}

Finally, we define their population counterparts as
\begin{equation*} \label{eq:population loss-component}
\begin{aligned}
  & \bar g(\theta) := \mathbb E_{F_0}[g(X,\theta)],\qquad \psi_0(\eta,\theta) := \log \mathbb E_{F_0} \left[\exp\{\eta^\top g(X,\theta)\} \right],
  &  \lambda_0(\theta) := \arg\min_{\eta\in\Lambda} \psi_0(\eta,\theta), \\
  &  L_0(\theta) :=  \psi_0\{\lambda_0(\theta),\theta\}
    - \lambda_0(\theta)^\top\bar g(\theta) =
    D_{\mathrm{KL}} \left( F_0\,\|\,P_\theta^*(F_0) \right).
\end{aligned}
\end{equation*}

\subsection{Auxiliary Lemmas for Theorem \ref{thm:posterior-consistency}}
\label{subsec:al-consistency}
\begin{lemma}[Dirichlet Weighted LLN]\label{lem:wlln}
Suppose the following global envelope conditions in Theorem \ref{thm:posterior-consistency} hold
$$\mathbb E_{F} \sup_{\theta\in\Theta}\|g(X,\theta)\| <\infty,
\qquad \mathbb E_{F} \sup_{\theta\in\Theta,\eta\in\Lambda} e^{\eta^\top g(X,\theta)}
<\infty,
$$
for $F=F_0$ and, when $\alpha_n>0$, for $F=F_{\rm AI}$.
If $\alpha_n/n\to0$, then for the classes $\mathcal{H}_1:=\{g(\cdot,\theta):\theta\in\Theta\}$ and $\mathcal{H}_2=\{\exp\{\eta^\top g(\cdot,\theta)\}:\eta\in \Lambda,\ \theta\in \Theta\}$,
$$
\sup_{\theta\in \Theta}\Big\| S_{n,m,V}(\theta)-\bar g(\theta)\|=o_p(1),\quad
\sup_{\eta \in \Lambda,\ \theta\in\Theta}\Big| \psi_{n,m,V}(\eta,\theta)-\psi_0(\eta,\theta)\Big|=o_p(1).
$$

Additionally, under the full Assumptions \ref{A1}-\ref{A4}, the same weighted LLN statements hold, locally over $\mathcal N$, for the derivative and product
classes needed below:  $\mathcal{H}_3 = \{\nabla_{\theta}g(., \theta): \theta \in \mathcal{N}\}$, $\mathcal{H}_4 = \{g(., \theta)g(., \theta)^{\top}: \theta \in \mathcal{N}\}$, $\mathcal{H}_5 = \{\nabla_{\theta}^2 g(., \theta): \theta \in \mathcal{N}\}$, $\mathcal{H}_6 = \{\|g\| \|\nabla_{\theta}g(., \theta)\|: \theta \in \mathcal{N}\}$,  $\mathcal{H}_7 = \{e^{\eta^{\top} g(\cdot, \theta)}\|g(\cdot, \theta)\|^3: \eta \in \Lambda,  \theta \in \mathcal{N}\}$.
\end{lemma}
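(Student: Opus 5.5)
Our plan is to reduce everything to a uniform LLN for Dirichlet-weighted averages of a function class $\mathcal H$ with an integrable envelope.

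\textbf{Representation of the weights.} Write the Dirichlet vector as normalized independent Gamma variables: $v_i=E_i/T$ for $i\le n$ and $v_{n+j}=\Gamma_j/T$ for $j\le m$. Here $E_i\sim\mathrm{Exp}(1)$, $\Gamma_j\sim\mathrm{Gamma}(\alpha_n/m,1)$, all mutually independent and independent of the data, and $T=\sum_i E_i+\sum_j\Gamma_j$. For a generic $h\in\mathcal H$ this gives
\[
\sum_{k=1}^{n+m} v_k h(x_k)=\frac{n}{T}\cdot\frac1n\sum_{i=1}^n E_i h(x_i)+\frac{\alpha_n}{T}\cdot\frac{1}{\alpha_n}\sum_{j=1}^m \Gamma_j h(x_j^*).
\]
Since $T/n\to1$ in probability (the variance of $\sum_j\Gamma_j$ is $\alpha_n=o(n)$), the factor $n/T$ tends to $1$. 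It therefore suffices to prove three things uniformly over $\mathcal H$:
\begin{enumerate}
\item[(a)] $n^{-1}\sum_i E_ih(x_i)\to \mathbb E_{F_0}h$;
\item[(b)] $T^{-1}\sum_j\Gamma_j h(x_j^*)\to0$;
\item[(c)] the uniform convergence transfers through the continuous map $u\mapsto\log u$ for $\psi$.
\end{enumerate}

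\textbf{Step (a).} The pairs $(E_i,x_i)$ are i.i.d. and $\mathbb E[E_ih(x_i)]=\mathbb E_{F_0}h$. The class $\{(e,x)\mapsto e\,h(x):h\in\mathcal H\}$ has envelope $e\cdot H(x)$, which is integrable since $\mathbb E[E]\,\mathbb E_{F_0}H<\infty$. Each $h$ is continuous in the index $(\theta,\eta)$, and the index set ($\Theta\times\Lambda$, or $\mathcal N\times\Lambda$) is compact. The class is therefore Glivenko--Cantelli by the standard bracketing argument for parametric classes continuous in the parameter with an integrable envelope (e.g.\ van der Vaart, Example 19.8). This yields the uniform LLN for the exponential-weighted empirical measure.

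\textbf{Step (b).} The prior part is handled crudely, via $\sup_h|T^{-1}\sum_j\Gamma_jh(x_j^*)|\le T^{-1}\sum_j\Gamma_jH(x_j^*)$. The right-hand side has expectation of order $(\alpha_n/n)\,\mathbb E_{F_{\rm AI}}H\to0$; to make this rigorous, condition on the $x^*$'s and use $T\ge\sum_iE_i\approx n$. Markov's inequality then gives $o_p(1)$. This is the only place integrability under $F_{\rm AI}$ and $\alpha_n/n\to0$ are used.

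\textbf{Step (c) and the local classes.} Combining (a) and (b) with $n/T\to_p1$ gives $\sup_\theta\|S_{n,m,V}-\bar g\|=o_p(1)$. It also gives uniform convergence of $\sum_kv_ke^{\eta^\top g(x_k,\theta)}$ to $\mathbb E_{F_0}e^{\eta^\top g}$. The limit is bounded below on the compact index set by Jensen's inequality: $\mathbb E e^{\eta^\top g}\ge e^{\eta^\top\bar g}$, and $\eta^\top\bar g$ is bounded because $\Lambda$ is compact and $\bar g$ is continuous. Since $\log$ is uniformly continuous on $[c,\infty)$, the statement for $\psi$ follows. The local classes $\mathcal H_3$--$\mathcal H_7$ are treated by the same argument over the compact set $\mathcal N$. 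Their envelopes are $L_1$, $M_\Theta^2$, $L_2$, $M_\Theta L_1$ (integrable by Cauchy--Schwarz from Assumption~\ref{A2}(i)--(ii)) and the envelope of Assumption~\ref{A2}(iii). The corresponding $F_{\rm AI}$ bounds come from Assumption~\ref{A2}(iv), and continuity in $\theta$ comes from the $C^2$ smoothness on $\mathcal N$.

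\textbf{Expected obstacle.} The main difficulty is justifying the Glivenko--Cantelli property for the multiplier class, which needs continuity in the parameter and measurability. A secondary difficulty is that the synthetic atoms change with $n$ (a triangular array), which is why we bound that part by its envelope rather than attempting a LLN for it.
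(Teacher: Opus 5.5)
Your proposal is correct and follows the paper's route. Both use the Gamma representation of the Dirichlet weights and split the weighted sum into an observed-data part, handled by a Glivenko--Cantelli argument, and a synthetic part, which vanishes because its envelope is multiplied by a factor of order $\alpha_n/n$. Your version is somewhat tighter than the paper's in two places: applying Glivenko--Cantelli to the multiplier class on the i.i.d.\ pairs $(E_i,x_i)$ justifies the uniformity that the paper only asserts after citing Newton's pointwise lemma, and your Jensen lower bound supplies the $\log$ transfer for $\psi_{n,m,V}$ that the paper leaves implicit.
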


\begin{proof}
    Recall that 
      $$v_1, \ldots, v_n, v_1^*, \ldots, v_m^{*} \sim \text{Dirichlet}(1, \ldots, 1, \frac{\alpha_n}{m}, \ldots , \frac{\alpha_n}{m}).$$
    It is more convenient to work with the gamma representation of the Dirichlet weights. Let $\gamma_i \overset{\text{i.i.d.}}{\sim} \Gamma(1,1)$ (equivalent to exponential weights) for $i\in \{1, \ldots, n\}$ and $\gamma_{n+j} \overset{\text{i.i.d.}}{\sim} \Gamma(\alpha_n/m,1)$ for $j\in \{1, \ldots, m\}$, then we have $v_k := \frac{\gamma_k}{\sum_{i=1}^{n+m}\gamma_i}$.  The Dirichlet Weighted LLN statement is a consequence of Lemma 3 of Chapter 3 in Newton's dissertation \cite{nthesis}, which asserts that, for any integrable real-valued $h(.)$, $\frac{1}{n}\sum_i Y_i h_i(.)$ converges to $\mathbb{E}[h(.)]$ in probability for exponential weights $Y_i \overset{\text{i.i.d.}}{\sim} \text{Exp}(1)$. 
    
    We prove the Lemma by considering $g(x, \theta) \in \mathcal{H}_1$, and the same argument still holds for other integrable function classes $\mathcal{H}_2-\mathcal{H}_7$ with $g(\cdot, \theta)$ replaced by the corresponding functional forms. Additionally, for the global classes $\mathcal H_1$ and $\mathcal H_2$, the preceding argument is applied over $\Theta$. For the classes $\mathcal H_3$--$\mathcal H_7$, it is applied only locally over $\mathcal N$.
    
    To this end, we write $W_n:=\sum_{i=1}^n \gamma_i$, $W_{\alpha_n}:= \sum_{j=1}^m \gamma_{n+j}$, and $\delta_n:= \frac{W_{\alpha_n}}{W_n + W_{\alpha_n}}$. Recall that in (\ref{eq:loss-component}) we have defined $S_{n,m,V}(\theta) := \sum_{k=1}^{n+m} v_k g(x_k, \theta)$. We can reparametrize it as follows:
    $$S_{n,m,V}(\theta):= (1-\delta_n) \sum_{i=1}^n \omega_i g(x_i, \theta)+ \delta_n \sum_{j=1}^m \pi_j g(x_j^*, \theta),$$
    where $\omega_i := \frac{\gamma_i}{W_n}$ and $\pi_j := \frac{\gamma_{n+j}}{W_{\alpha_n}}$. Since $\frac{\alpha_n}{n} \rightarrow 0$, it follows $\delta_n$ is asymptotically negligible since $\delta_n = O_p(\frac{\alpha_n}{n}) = o_p(1)$. It is then sufficient to show
    $$\sup_{\theta \in \mathcal{N}} \|\sum_{i=1}^n w_i g(x_i, \theta)-\mathbb{E}[g(x,\theta)]\|=o_p(1),\quad \sup_{\theta \in \mathcal{N}}\|\sum_{j=1}^{m} \pi_j g(x_j^*,\theta)\|= O_p(1).$$
    By Lemma 3 in \cite{nthesis}, we have $\sum_{i=1}^n \omega_i g(x_i, \theta) \xrightarrow{p}  \mathbb{E}[g(x,\theta)]$ for each fixed $\theta$. Since the function class $\mathcal{H}_1$ is $P$-Glivenko–Cantelli under assumptions \ref{A2}, we have established uniform convergence over the compact neighborhood $\mathcal{N}$: $\sup_{\theta \in \mathcal{N}} \|\sum_{i=1}^n \omega_i g(x_i, \theta)-\mathbb{E}[g(x,\theta)]\|=o_p(1)$. The synthetic part $\sup_{\theta \in \mathcal{N}}\|\sum_{j=1}^{m} \pi_j g(x_j^*,\theta)\|= O_p(1)$ because the relevant envelope is integrable by assumption \ref{A2}.

    The same argument still holds for the other integrable function classes $\mathcal{H}_2-\mathcal{H}_7$ with $g(\cdot, \theta)$ replaced by the corresponding envelopes.
\end{proof}

\begin{lemma}[Consistency of ETEL Loss]\label{lem:global-consistency} 
Let $L_0(\theta):=D_{\mathrm{KL}}\!\left(F_0\,\|\,P_\theta^*(F_0)\right)$ be the standard ETEL criterion. Suppose the conditions in Theorem \ref{thm:posterior-consistency} hold.
Then,
$$\sup_{\theta\in\Theta} \left| l_{n,m,V}(\theta)- L_0(\theta)\right|=o_p(1).$$
Moreover, $L_0$ is continuous and has the unique minimizer $\theta_0$.
\end{lemma}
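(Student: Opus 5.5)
The plan is to write both criteria in their dual (profiled) form and control each piece uniformly, using Lemma \ref{lem:wlln}. Recall
\[
l_{n,m,V}(\theta)=\min_{\eta\in\Lambda}\psi_{n,m,V}(\eta,\theta)-\lambda(\theta)^\top S_{n,m,V}(\theta),\qquad L_0(\theta)=\min_{\eta\in\Lambda}\psi_0(\eta,\theta)-\lambda_0(\theta)^\top\bar g(\theta).
\]
The value functions are easy. Since $|\min_\eta a(\eta)-\min_\eta b(\eta)|\le\sup_\eta|a-b|$, Lemma \ref{lem:wlln} gives
\[
\sup_{\theta}\Bigl|\min_{\eta\in\Lambda}\psi_{n,m,V}(\eta,\theta)-\min_{\eta\in\Lambda}\psi_0(\eta,\theta)\Bigr|=o_p(1).
\]

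The cross term needs uniform consistency of the argmin, $\sup_\theta\|\lambda(\theta)-\lambda_0(\theta)\|=o_p(1)$. This is the main obstacle. I will argue it by a uniform argmin-continuity argument.

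\begin{itemize}
\item \emph{Continuity of $\psi_0$.} $\psi_0$ is jointly continuous on $\Lambda\times\Theta$, by dominated convergence with the envelope $\sup_{\theta,\eta}e^{\eta^\top g}$.
\item \emph{Continuity of $\lambda_0$.} By Assumption \ref{A3} the minimizer $\lambda_0(\theta)$ is unique, so Berge's maximum theorem makes $\lambda_0$ continuous on the compact set $\Theta$.
\item \emph{Uniform separation.} Compactness of $\Lambda\times\Theta$ then yields, for each $\varepsilon>0$,
\[
\inf_{\theta}\ \inf_{\|\eta-\lambda_0(\theta)\|\ge\varepsilon}\bigl\{\psi_0(\eta,\theta)-\psi_0(\lambda_0(\theta),\theta)\bigr\}=:c_\varepsilon>0 .
\]
Strict positivity follows by contradiction, extracting convergent subsequences and using uniqueness.
\item \emph{Conclusion.} On the event $\sup|\psi_{n,m,V}-\psi_0|<c_\varepsilon/2$, whose probability tends to one, any minimizer $\lambda(\theta)$ must lie within $\varepsilon$ of $\lambda_0(\theta)$, simultaneously for all $\theta$.
\end{itemize}

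Then
\[
\sup_\theta\bigl|\lambda^\top S_{n,m,V}-\lambda_0^\top\bar g\bigr|\le \sup_\theta\|\lambda-\lambda_0\|\,\sup_\theta\|S_{n,m,V}\|+\sup_{\eta\in\Lambda}\|\eta\|\,\sup_\theta\|S_{n,m,V}-\bar g\|=o_p(1).
\]
This uses compactness of $\Lambda$, boundedness of $\sup_\theta\|\bar g\|\le \mathbb E M_\Theta$, and Lemma \ref{lem:wlln}.

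For the second claim, continuity of $L_0$ follows from continuity of $\min_\eta\psi_0$ (again by Berge), of $\lambda_0$, and of $\bar g$. For the unique minimizer:

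\begin{itemize}
\item \emph{$L_0\ge 0$.} By the dual representation $L_0(\theta)=D_{\mathrm{KL}}(F_0\|P_\theta^*(F_0))\ge0$. Equivalently, $-\lambda_0^\top\bar g+\psi_0(\lambda_0)$ is the negative of a sample-free Gibbs gap. The direct argument is Jensen: $\psi_0(\lambda_0,\theta)\ge\lambda_0^\top\bar g(\theta)$.
\item \emph{$L_0(\theta_0)=0$.} At $\theta_0$ we have $\bar g(\theta_0)=0$. The first-order condition $\mathbb E[g e^{\lambda^\top g}]=0$ is solved by $\lambda=0$ and $\psi_0$ is convex, so uniqueness gives $\lambda_0(\theta_0)=0$ and hence $L_0(\theta_0)=0$.
\item \emph{$L_0>0$ elsewhere.} Suppose $\theta\neq\theta_0$, so $\bar g(\theta)\ne0$. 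Jensen is strict unless $\lambda_0^\top g$ is a.s. constant. If it is constant, the first-order condition $\mathbb E[g e^{\lambda_0^\top g}]=0$ forces $\bar g(\theta)=0$, a contradiction. Hence $L_0(\theta)>0$.
\end{itemize}

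Thus $\theta_0$ is the unique minimizer, and the proof is complete.
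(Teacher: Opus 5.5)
Your proof is correct and follows the paper's route: the dual representation of both criteria, the weighted LLN of Lemma \ref{lem:wlln}, uniform separation of the population dual problem to obtain $\sup_\theta\|\lambda(\theta)-\lambda_0(\theta)\|=o_p(1)$, and then a termwise bound on $l_{n,m,V}-L_0$. The differences are minor. You handle the value-function term with $|\min a-\min b|\le\sup|a-b|$ rather than through uniform continuity of $\psi_0$. You prove identification by strict Jensen plus the interior first-order condition, which is the dual form of the paper's observation that $D_{\mathrm{KL}}=0$ forces $P_\theta^*(F_0)=F_0$. You also verify $L_0(\theta_0)=0$ and the continuity of $L_0$, both of which the paper leaves implicit.
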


\begin{proof}
    Recall that $L_0(\theta)$ can be rewritten as
    $$L_0(\theta) = \psi_0(\lambda_0(\theta),\theta) -\lambda_0(\theta)^\top\bar g(\theta),$$
    where $\bar g(\theta) := \mathbb{E}_{F_0}[g(x,\theta)]$ and $\psi_0(\eta, \theta):= \log \mathbb{E}_{F_0} \left [e^{\eta^{\top} g(x,\theta)}\right]$, as defined in Assumptions \ref{A1} and \ref{A3} respectively. By Lemma \ref{lem:wlln}, we have 
    $$\sup_{\theta\in\Theta} \|S_{n,m,V}(\theta)-\bar g(\theta)\|=o_p(1), \qquad \sup_{\theta\in\Theta,\ \eta\in\Lambda} \left| \psi_{n,m,V}(\eta,\theta)-\psi_0(\eta,\theta)\right|=o_p(1).$$
   By compactness of $\Theta$ and $\Lambda$, joint continuity of $\psi_0$, and pointwise uniqueness in Assumption \ref{A3}, the population dual problem is uniformly separated. Therefore, by the global uniform convergence of $\psi_{n,m,V}(\theta)$ to $\psi_0$ and the uniform argmin theorem,
    $$\sup_{\theta\in\Theta} \|\lambda(\theta)-\lambda_0(\theta)\|=o_p(1).$$ 
    Since $\Lambda$ is compact, both $\lambda_{n,m,V}(\theta)$ and $\lambda_0(\theta)$ are uniformly bounded. Therefore, we have
    \begin{align*}
    \sup_{\theta\in\Theta} \left| l_{n,m,V}(\theta)-L_0(\theta)
    \right|  &\leq \sup_{\eta\in\Lambda,\theta\in\Theta} 
    \left| \psi_{n,m,V}(\eta,\theta)-\psi_0(\eta,\theta) \right| +
    \sup_{\theta\in\Theta} \left| \psi_0\{\lambda_{n,m,V}(\theta),\theta\}
    - \psi_0\{\lambda_0(\theta),\theta\} \right| \\
    &\qquad+ \sup_{\theta\in\Theta} \left| \lambda_{n,m,V}(\theta)^\top
    \{S_{n,m,V}(\theta)-\bar g(\theta)\}\right| +
    \sup_{\theta\in\Theta} \left| \{\lambda_{n,m,V}(\theta)-\lambda_0(\theta)\}^\top \bar g(\theta) \right|.
\end{align*}
The first and the third terms are $o_p(1)$ by Lemma \ref{lem:wlln}. 
The second term is also $o_{p}(1)$ by uniform continuity of $\psi_0$ on
$\Lambda\times\Theta$ and the uniform convergence of $\lambda_{n,m,V}$ to $\lambda_0$. Finally, the fourth term is $o_{p}(1)$ because $\bar g$ is continuous on compact $\Theta$ and hence uniformly bounded. It follows that
$$ \sup_{\theta\in\Theta} \left| l_{n,m,V}(\theta)-L_0(\theta) \right|
    = o_{p}(1).$$

    Additionally, it is clear that $\theta_0$ is the unique minimizer of $L_0(\theta)$: since $L_0(\theta)$ is a KL criterion, the smallest value it can obtain is $0$, which implies $P_{\theta}^*(F_0) = F_0$. Since  $P_{\theta}^*(F_0)$ satisfies the moment restriction, this implies $\mathbb{E}_{F_0}[g(x, \theta)]=0$, and hence $\theta=\theta_0$ by Assumption \ref{A1}. 
\end{proof}

\subsection{Proof of Theorem \ref{thm:posterior-consistency}} \label{subsec:consistency-proof}

\begin{proof}
    Fix $\varepsilon>0$ and define 
    $$A_\varepsilon:=
    \{\theta\in\Theta:\|\theta-\theta_0\|\geq\varepsilon\}, \qquad c_\varepsilon := \inf_{\theta\in A_\varepsilon} \{L_0(\theta)-L_0(\theta_0)\}.$$
    Since $L_0(\theta)$ is continuous and has the unique minimizer $\theta_0$, we have $c_{\varepsilon} > 0$. Suppose the bad event  $\{\theta^*(V_n)\in A_\varepsilon\}$ holds, then we have 
    \begin{align*}
    c_\varepsilon &\leq L_0\{\theta^*(V_n)\}-L_0(\theta_0) \\
    &\leq
    \left[L_0\{\theta^*(V_n)\}-l_{n,m,V}\{\theta^*(V_n)\}\right]
    +\left[ l_{n,m,V}(\theta_0)-L_0(\theta_0) \right]  \\
    &\leq 2  \sup_{\theta\in\Theta}|l_{n,m,V}(\theta)-L_0(\theta)| = o_p(1),
\end{align*}
where the first inequality is due to the definition of $c_{\varepsilon}$, and the second inequality is due to the fact that  $\theta^*(V_n)$ minimizes $l_{n,m,V}$ over $\Theta$, and the last $o_p(1)$ statement is due to Lemma \ref{lem:global-consistency}.
It follows that
$$\Pi_n \left(\|\theta^*-\theta_0\|\geq\varepsilon\mid \mathcal{D}_n
    \right) \leq \mathbb{P}(2\sup_{\theta\in\Theta}|l_{n,m,V}(\theta)-L_0(\theta)|  \geq  c_\varepsilon | \mathcal{D}_n) \xrightarrow{\mathbb{P}} 0.$$
    
    \end{proof}

\subsection{Auxiliary Lemmas for Theorem \ref{thm:bootstrap}} \label{subsec:al-bootstrap}

\begin{lemma} \label{lem:error}
Under Assumptions \ref{A1}-\ref{A4}:
\begin{enumerate}
    \item The operator norm $\| \nabla_{\eta}^3 \psi(\bar{\eta}, \theta)\|$ is uniformly bounded for $\eta\in\Lambda$ and $\theta \in \mathcal{N}$.
    \item For $\theta \in \mathcal{N}$, $\|\lambda(\theta)\| \leq C_1 \|S(\theta)\|$ for some constant $C_1$.
\end{enumerate}
\end{lemma}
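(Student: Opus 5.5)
For part 1, I would write the weighted dual objective as a log-partition function, $\psi(\eta,\theta)=\log \sum_k v_k e^{\eta^\top g(x_k,\theta)}$. Its $\eta$-derivatives are then the cumulants of $g(X,\theta)$ under the tilted measure $Q_{\eta,\theta}$, which puts mass $q_k\propto v_k e^{\eta^\top g(x_k,\theta)}$ on $x_k$. In particular, $\nabla^3_\eta\psi$ is the third central moment tensor of $g$ under $Q_{\eta,\theta}$. Its operator norm is bounded by $8\,\mathbb E_{Q}\|g\|^3$, and
\[
\mathbb E_{Q}\|g\|^3=\frac{\sum_k v_k e^{\eta^\top g(x_k,\theta)}\|g(x_k,\theta)\|^3}{\sum_k v_k e^{\eta^\top g(x_k,\theta)}} .
\]
I would then control numerator and denominator separately and uniformly over $\Lambda\times\mathcal N$, with probability $1-o(1)$. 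For the numerator, Lemma~\ref{lem:wlln} applied to the class $\mathcal H_7$ gives uniform convergence to $\mathbb E_{F_0}[e^{\eta^\top g}\|g\|^3]$, and this limit is bounded by Assumption~\ref{A2}(iii). For the denominator, Lemma~\ref{lem:wlln} applied to $\mathcal H_2$ gives uniform convergence to $\mathbb E_{F_0}e^{\eta^\top g}$, which is continuous and positive on the compact set $\Lambda\times\mathcal N$, so it is bounded below by some $c>0$. Hence the ratio is $O_p(1)$ uniformly, and in particular at any intermediate point $\bar\eta$.

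For part 2, I would use the first-order condition $\nabla_\eta\psi(\lambda(\theta),\theta)=0$, which holds because $\lambda(\theta)$ is interior by Assumption~\ref{A3}. Note that $\nabla_\eta\psi(0,\theta)=S(\theta)$, writing $S=S_{n,m,V}$. A mean-value expansion in $\eta$ between $0$ and $\lambda(\theta)$, valid since $\Lambda$ is convex and contains $0$, gives
\[
0=S(\theta)+\bar H(\theta)\lambda(\theta),\qquad \bar H(\theta)=\int_0^1\nabla^2_\eta\psi(t\lambda(\theta),\theta)\,dt ,
\]
so that $\|\lambda(\theta)\|\le \lambda_{\min}(\bar H)^{-1}\|S(\theta)\|$. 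The Hessian $\nabla^2_\eta\psi(\eta,\theta)$ is the tilted covariance
\[
\frac{\sum_k v_k e^{\eta^\top g}gg^\top}{\sum_k v_k e^{\eta^\top g}}-\mu_Q\mu_Q^\top .
\]
By Lemma~\ref{lem:wlln}, this converges uniformly on $\Lambda\times\mathcal N$ to its population analogue $\mathrm{Cov}_{Q^0_{\eta,\theta}}(g)$. Here $Q^0_{\eta,\theta}$ is the $F_0$-tilted law, and the needed integrability comes from Assumption~\ref{A2}(iii).

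The main obstacle is a uniform lower bound on the smallest eigenvalue of this population tilted covariance over $\Lambda\times\mathcal N$. My plan is to argue by contradiction using compactness and continuity. If the infimum were zero, there would be a point $(\eta,\theta)$ and a unit vector $a$ with $a^\top g(X,\theta)$ constant $Q^0$-a.s. Since $Q^0$ and $F_0$ are mutually absolutely continuous, $a^\top g(X,\theta)$ would then be constant $F_0$-a.s., which forces $a^\top\Omega(\theta)a=0$. This contradicts $\inf_{\mathcal N}\lambda_{\min}\Omega(\theta)>0$ from Assumption~\ref{A1}. Continuity of the map $(\eta,\theta)\mapsto\mathrm{Cov}_{Q^0}$ would come from dominated convergence with the envelope in Assumption~\ref{A2}(iii).

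With the lower bound in hand, $\lambda_{\min}(\bar H)\ge c/2$ with probability $1-o(1)$, and part 2 follows with $C_1=2/c$. When $\alpha_n>0$, the synthetic atoms carry weight $\delta_n=o_p(1)$, and Assumption~\ref{A2}(iv) keeps their contributions $O_p(1)$, exactly as in the proof of Lemma~\ref{lem:wlln}. So the same bounds hold, and the constants are understood to hold with probability tending to one.
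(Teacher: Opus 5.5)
Your part 1 matches the paper's argument: you bound the tilted third central moment by $8\sum_k p_k\|g_k\|^3$ and then apply the weighted LLN to $\mathcal H_2$ and $\mathcal H_7$. You also make explicit the uniform lower bound on the denominator, which the paper leaves implicit.

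For part 2 you take a genuinely different route, and it is correct. The paper expands the first-order condition only around $\eta=0$, writing $0=S+\hat\Omega_{n,m,V}\lambda+r_\lambda$ with $\|r_\lambda\|\le C\|\lambda\|^2$; this remainder bound is where part 1 enters. It obtains $\lambda_{\min}(\hat\Omega_{n,m,V})\ge\kappa$ directly from Assumption \ref{A1} via the LLN for $\mathcal H_1$ and $\mathcal H_4$. It then absorbs the cubic term by forcing $\sup_{\mathcal N}\|\lambda(\theta)\|<\kappa/(2C)$, which needs three further ingredients: uniform argmin consistency $\lambda\to\lambda_0$, the fact $\lambda_0(\theta_0)=0$, and shrinking $\mathcal N$.

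You instead use the exact integral form $0=S+\bar H\lambda$. You then show that the tilted covariance $\nabla^2_\eta\psi$ is uniformly positive definite on all of $\Lambda\times\mathcal N$. The key step transfers nondegeneracy of $\Omega(\theta)$ to every tilted law, using mutual absolute continuity of $Q^0_{\eta,\theta}$ and $F_0$, and then continuity and compactness make the bound uniform.

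Your route needs uniform laws for the tilted classes $e^{\eta^\top g}g$ and $e^{\eta^\top g}gg^\top$. These are not among $\mathcal H_1$--$\mathcal H_7$, but they are dominated by the envelope in Assumption \ref{A2}(iii), so the argument of Lemma \ref{lem:wlln} extends to them verbatim. In exchange, your route:
\begin{enumerate}
\item works on the fixed $\mathcal N$ without shrinking it;
\item needs neither consistency nor smallness of $\lambda$;
\item makes part 2 logically independent of part 1;
\item yields, as a byproduct, uniform strong convexity of the sample dual objective on $\Lambda$ with probability tending to one, and hence uniqueness of the dual minimizer that Assumption \ref{A3} otherwise has to postulate.
\end{enumerate}
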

\begin{proof}
  Define the tilted weight $p_k(\eta, \theta):= \frac{v_k e^{\eta^{\top} g_k(\theta)}}{\sum_j v_j e^{\eta^{\top} g_j(\theta)}}$ and let $\mu(\eta, \theta):= \nabla_{\eta} \psi(\eta ,\theta) = \sum_k p_k(\eta,\theta) g_k(\theta)$. By viewing $\psi(\eta,\theta)$ as the log cumulant generating function of $g_k(\theta)$, we can upper bound the operator norm of the third derivative tensor as
  $$\| \nabla^3 \psi(\eta, \theta) \| = \sup_{\|u\| = \|v\| = \|w\| = 1} | \sum_k p_k \prod_{t \in \{u,v,w\}} ((g_k-\mu)^{\top} t) | \leq \sum_k p_k \|g_k -\mu\|^3.$$
  Using $\|a-b\|^3 \leq 4(\|a\|^3 + \|b\|^3)$, the quantity above can be further bounded by $$4 \sum_k p_k \|g_k\|^3 + 4 \| \mu \|^3 \leq 8 \sum_k p_k \|g_k\|^3,$$
  where we have applied Holder's inequality to upper bound $\|\mu\| = \|\sum_k p_k g_k\| \leq \sum_k p_k \|g_k\| \leq (\sum_k p_k \|g_k\|^3)^{1/3}$. It follows that
  $$\| \nabla^3 \psi(\eta, \theta) \| \leq 8 \frac{\sum_k v_k e^{\eta^\top g_k(\theta)}\|g_k(\theta)\|^3} {\sum_j v_j e^{\eta^\top g_j(\theta)}}.$$
  By Lemma \ref{lem:wlln} applied to $\mathcal{H}_2$ and $\mathcal{H}_7$, both $\sum_k v_k e^{\eta^{\top} g_k(\theta)}$ and $\sum_k v_k e^{\eta^{\top} g_k(\theta)} \|g_k(\theta)\|^3$ converge uniformly to their finite population expectations. Hence the operator norm $\| \nabla_{\eta}^3 \psi(\bar{\eta}, \theta)\|$ is uniformly bounded.

  To prove the second statement, by Lemma \ref{lem:wlln}, 
  $$\sup_{\eta\in\Lambda,\theta\in\mathcal N}
    |\psi_{n,m,V}(\eta,\theta)-\psi_0(\eta,\theta)|=o_p(1).$$
By compactness of $\Lambda$ and $\mathcal N$, continuity of $\psi_0$,
and uniqueness of $\lambda_0(\theta)$ in Assumption \ref{A3}, the uniform argmin theorem as illustrated in Lemma \ref{lem:global-consistency} also gives $$\sup_{\theta\in\mathcal N} \|\lambda(\theta)-\lambda_0(\theta)\|=o_p(1).$$
Given that $\lambda_0(\theta_0)=0$ and $\lambda_0(\theta)$ is continuous, we may shrink $\mathcal N$, without changing the previous assumptions, so that $\sup_{\theta\in\mathcal N}\|\lambda_0(\theta)\|$ is arbitrarily small. By the first-order condition that $\nabla_\eta\psi_{n,m,V}(\lambda(\theta),\theta)=0$, a Taylor expansion around $\eta=0$ gives:
$$ 0 = S(\theta) + \hat\Omega_{n,m,V}(\theta)\lambda(\theta)+r_\lambda(\theta),
\qquad \|r_\lambda(\theta)\| \leq
    C\|\lambda(\theta)\|^2.$$
By Assumption \ref{A1} and Lemma
\ref{lem:wlln}, there exists $\kappa>0$ such that, with probability tending to one, $\inf_{\theta\in\mathcal N} \lambda_{\min}\{\hat\Omega_{n,m,V}(\theta)\} \geq \kappa $. Taking the inner product of the Taylor expansion with $\lambda(\theta)$ yields
 $$0 = \lambda(\theta)^\top S(\theta) +
    \lambda(\theta)^\top \hat\Omega_{n,m,V}(\theta)\lambda(\theta)+\lambda(\theta)^\top r_\lambda(\theta),$$
which implies
$$\kappa\|\lambda(\theta)\|^2 \leq \|\lambda(\theta)\|\,\|S(\theta)\|
    + C\|\lambda(\theta)\|^3.$$
Since  $\sup_{\theta\in\mathcal N}\|\lambda(\theta)\|$ can be made
smaller than $\kappa/(2C)$ with probability tending to one, the last term can be upper bounded by $ C\|\lambda(\theta)\|^3 \leq \frac{\kappa}{2}\|\lambda(\theta)\|^2$ uniformly over $\mathcal{N}$. It follows that
$ \frac{\kappa}{2}\|\lambda(\theta)\|^2 \leq  \|\lambda(\theta)\|\,\|S(\theta)\|.$
If $\lambda(\theta)=0$, the second statement is trivial. Otherwise, dividing both sides by $\|\lambda(\theta)\|$ proves the result.
\end{proof}

\begin{lemma} \label{lem:loss-2nd-d}
Under Assumptions \ref{A1} - \ref{A4}, uniformly over $\mathcal{N}$:
$$ \nabla_{\theta}^2l_{n,m,V}(\theta) = G_{n,m,V}(\theta)^{\top} \hat{\Omega}_{n,m,V}^{-1}(\theta) G_{n,m,V}(\theta) + O_p(\|S(\theta)\|).$$
\end{lemma}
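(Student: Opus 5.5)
The plan is to differentiate the profiled criterion $l_{n,m,V}(\theta)=\psi_{n,m,V}(\lambda(\theta),\theta)-\lambda(\theta)^\top S_{n,m,V}(\theta)$ twice, using the implicit function theorem for $\lambda(\theta)$. Every term that is not the GMM-type leading term will be shown to carry a factor of $\lambda(\theta)$ or $S(\theta)$. Lemma~\ref{lem:error}(2), $\|\lambda(\theta)\|\le C_1\|S(\theta)\|$, then converts all remainders into $O_p(\|S(\theta)\|)$. Write $G_{n,m,V}(\theta)=\nabla_\theta S_{n,m,V}(\theta)=\sum_k v_k\nabla_\theta g_k(\theta)$ and $\hat\Omega_{n,m,V}(\theta)=\sum_k v_k g_kg_k^\top-S S^\top$. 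Let $p_k(\eta,\theta)$ be the tilted weights from the proof of Lemma~\ref{lem:error}.

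\textbf{Step 1 (first derivative).} By Assumption~\ref{A3}, with probability tending to one $\lambda(\theta)$ is an interior minimizer. Hence $\nabla_\eta\psi(\lambda(\theta),\theta)=0$, and $\lambda$ is $C^1$ on $\mathcal N$ by the implicit function theorem, because $\nabla^2_\eta\psi$ is the $p$-weighted covariance of the $g_k$ and is positive definite. The chain rule gives
\[
\nabla_\theta l(\theta)=-\lambda'(\theta)^\top S(\theta)+\bigl(\tilde G(\theta)-G(\theta)\bigr)^\top\lambda(\theta),
\qquad \tilde G(\theta):=\sum_k p_k(\lambda(\theta),\theta)\nabla_\theta g_k(\theta),
\]
since $\nabla_\theta\psi(\eta,\theta)=\tilde G^\top\eta$ at $\eta=\lambda(\theta)$.

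\textbf{Step 2 (Jacobian of $\lambda$).} Differentiating the first-order condition gives $\nabla^2_\eta\psi\,\lambda'+\nabla_{\eta\theta}\psi=0$. I will show $\nabla^2_\eta\psi(\lambda,\theta)=\hat\Omega(\theta)+O_p(\|\lambda\|)$ by a mean-value expansion in $\eta$ around $0$, controlled by the third-derivative bound of Lemma~\ref{lem:error}(1). Next, $\nabla_{\eta\theta}\psi(\lambda,\theta)=\sum_kp_k\nabla_\theta g_k+\sum_kp_k(g_k-\mu)\lambda^\top\nabla_\theta g_k=G(\theta)+O_p(\|\lambda\|)$. Here the deviations $p_k-v_k$ are $O(\|\lambda\|)$ in weighted sums against $\|\nabla g_k\|$ and $\|g_k\|\|\nabla g_k\|$, which are controlled through classes $\mathcal H_3$, $\mathcal H_6$ of Lemma~\ref{lem:wlln}. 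Since $\hat\Omega$ has eigenvalues bounded below by $\kappa$ w.p.a.\ one, as in Lemma~\ref{lem:error}, it follows that $\lambda'(\theta)=-\hat\Omega^{-1}G+O_p(\|\lambda\|)$ uniformly on $\mathcal N$. In particular $\lambda'$ is uniformly $O_p(1)$.

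\textbf{Step 3 (second derivative).} Differentiating Step 1 gives
\[
\nabla_\theta^2 l=-\lambda'^\top G-\bigl[\lambda''\bigr][S]+(\tilde G-G)^\top\lambda'+\bigl[\nabla_\theta(\tilde G-G)^\top\bigr][\lambda].
\]
The first term equals $G^\top\hat\Omega^{-1}G+O_p(\|\lambda\|)$ by Step 2. The third term is $O_p(\|\lambda\|)$ because $\tilde G-G=\sum_k(p_k-v_k)\nabla g_k=O_p(\|\lambda\|)$. The last term is linear in $\lambda$ with a coefficient involving $\nabla^2 g$, $\nabla g$ and $\lambda'$, which is bounded using $\mathcal H_3$, $\mathcal H_5$, $\mathcal H_6$. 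The second term is $O_p(\|S\|)$ once $\lambda''$ is uniformly bounded. Collecting terms and applying Lemma~\ref{lem:error}(2) yields the claim.

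\textbf{Main obstacle.} The hard part is the uniform boundedness of the tilted weighted sums, in particular $\lambda''$ and the $\theta$-derivative of $\tilde G$. These involve products such as $e^{\eta^\top g}\|g\|\,\|\nabla g\|^2$ or $e^{\eta^\top g}\|\nabla^2 g\|$, which are not literally among $\mathcal H_2$--$\mathcal H_7$. My first attempt will be to shrink $\Lambda$ and $\mathcal N$ so that $p_k/v_k=e^{\lambda^\top g_k}/\sum_jv_je^{\lambda^\top g_j}$ is controlled. I will then split each weighted sum with Hölder's inequality into a factor from the exponential moment class in Assumption~\ref{A2}(iii), possibly at a slightly smaller $\Lambda$ to free up integrability, and a factor involving $L_1^2$ or $L_2$ from Assumption~\ref{A2}(ii). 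Lemma~\ref{lem:wlln} then gives uniform $O_p(1)$ bounds on these sums. The same reasoning covers the synthetic atoms via Assumption~\ref{A2}(iv).
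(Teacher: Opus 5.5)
Your exact computation is sound. The formula $\nabla_\theta l=(\tilde G-G)^\top\lambda-\lambda'^\top S$, the implicit-function expression for $\lambda'$, and the four-term Hessian in Step 3 are all correct. The tilted sums in Step 2 can also be handled by Cauchy--Schwarz after shrinking $\Lambda$, since polynomial factors in $\|g\|$ are absorbed by the exponential moments. For example, $\mathbb{E}\sup_{\eta\in\Lambda/2}e^{\eta^\top g}\|g\|L_1\le(\mathbb{E}\sup_{\eta\in\Lambda}e^{\eta^\top g}\|g\|^2)^{1/2}(\mathbb{E}L_1^2)^{1/2}<\infty$.

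\textbf{The gap.} It sits exactly at the step you flag, and the proposed remedy does not work. In Step 3 you must bound tilted sums of the second-order envelopes:
\begin{enumerate}
\item[(a)] $\lambda''$ contains the order-one term $-H^{-1}\sum_k p_k\,\partial_{\theta_a\theta_b}g_k$, with $H=\nabla^2_{\eta\eta}\psi(\lambda,\theta)$;
\item[(b)] $\nabla_\theta\tilde G$ contains $\sum_k p_k\nabla^2_\theta g_k$ and terms involving $\sum_k p_k\|\nabla_\theta g_k\|^2$.
\end{enumerate}
So the untilted classes $\mathcal H_3$, $\mathcal H_5$, $\mathcal H_6$ you cite for the last term do not apply.

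Assumption \ref{A2}(ii) gives only $\mathbb{E}L_2<\infty$ and $\mathbb{E}L_1^2<\infty$, with no room to spare. Hölder can therefore pair $L_2$ or $L_1^2$ with the tilt only if $\sup_\eta e^{\eta^\top g}$ is essentially bounded. Shrinking $\Lambda$ improves the integrability of the exponential factor, but never that of $L_2$ or $L_1^2$. In fact Assumption \ref{A2} is compatible with $\mathbb{E}_{F_0}\sup_{\eta\in\Lambda'}e^{\eta^\top g}L_2=\infty$ for every neighborhood $\Lambda'$ of $0$. An example is $L_2(x)$ increasing in $M_\Theta(x)$ with $\mathbb{E}L_2<\infty$ but $\mathbb{E}[e^{\epsilon M_\Theta}L_2]=\infty$ for all $\epsilon>0$. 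In that case no integrable envelope exists, and Lemma \ref{lem:wlln} cannot give the uniform $O_p(1)$ bounds you need.

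Shrinking $\Lambda$ and $\mathcal N$ to fixed smaller sets does not bound $p_k/v_k$ either. For fixed $\theta\neq\theta_0$ one has $\lambda(\theta)\to\lambda_0(\theta)\neq0$, while $\max_k\|g_k\|\to\infty$ whenever $g$ is unbounded.

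\textbf{How to close it.} Your route needs an additional moment condition, for instance
$\mathbb{E}_F\sup_{\theta\in\mathcal N,\eta\in\Lambda}e^{\eta^\top g(x,\theta)}(1+\|g(x,\theta)\|)\{L_1(x)^2+L_2(x)\}<\infty$ for $F=F_0$ and $F=F_{\rm AI}$. With it, the weighted LLN bounds every tilted sum in $\lambda''$ and $\nabla_\theta\tilde G$, and your Steps 1--3 go through as written.

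Alternatively you can localize. Assumption \ref{A2}(iii) gives exponential moments, so $\max_k M_\Theta(x_k)=O_p(\log(n+m))$. On shrinking balls where $\sup_\theta\|\lambda(\theta)\|\log(n+m)=o_p(1)$, the ratios $p_k/v_k$ are uniformly close to one. All tilted sums are then dominated by the untilted classes of Lemma \ref{lem:wlln}. This yields the expansion only on such balls, however, not uniformly over $\mathcal N$ as the statement (and the first part of Lemma \ref{lem:strong-convexity}) requires.

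\textbf{Comparison with the paper.} The paper's proof never meets this obstacle. It differentiates only the leading form $\tfrac12 S^\top\hat\Omega^{-1}S$ term by term, which involves only $v$-weighted sums: $\nabla_\theta G$ through $L_2$, $\nabla_\theta\hat\Omega$ through $\|g\|L_1$, and $\hat\Omega^{-1}$. Everything else is recorded as an $O(\|S(\theta)\|)$ remainder that is not computed. Your exact calculation shows that this remainder is precisely where the tilted second-order sums sit. Switching to the paper's route would hide the issue rather than remove it.
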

\begin{proof}
We have
 \begin{align*}
        \nabla_{\theta}^2l_{n,m,V}(\theta) &= G_{n,m,V}(\theta)^{\top} \hat{\Omega}_{n,m,V}^{-1} G_{n,m,V}(\theta) + \{(\nabla_{\theta}G_{n,m,V}(\theta)^{\top}) \hat{\Omega}_{n,m,V}^{-1} S(\theta)\\ 
        & \quad - G_{n,m,V}(\theta) \hat{\Omega}_{n,m,V}^{-1} (\nabla_{\theta} \hat{\Omega}_{n,m,V}) \hat{\Omega}_{n,m,V}^{-1} S(\theta) \} + O(\|S(\theta)\|). \\
\end{align*}
It is easy to see $\nabla_{\theta} G_{n,m,V}(\theta)$ is uniformly bounded over $\mathcal{N}$ by assumptions \ref{A2} and Lemma \ref{lem:wlln}:
\begin{align*}
    \sup_{\theta \in \mathcal{N}} \|\nabla_{\theta} G_{n,m,V}(\theta)\| &= \sup_{\theta \in \mathcal{N}} \|\mathbb{P}_{n,m,V} \nabla_{\theta}^2 g(., \theta) \|   \\
    & \leq \mathbb{P}_{n,m,V} \| \sup_{\theta \in \mathcal{N}} \nabla_{\theta}^2 g(., \theta) \| = \mathbb E_{F_0}[L_2(X)] + o_p(1) = O_p(1).
\end{align*}
To show that $\nabla_{\theta} \hat{\Omega}(\theta)$ is also $O_p(1)$, note that 
\begin{align*}
\nabla_\theta \hat\Omega_{n,m,V}(\theta)&=
    \mathbb P_{n,m,V} \left[ \nabla_\theta g(\cdot,\theta)g(\cdot,\theta)^\top+g(\cdot,\theta)\nabla_\theta g(\cdot,\theta)^\top\right] \\
    &\quad - G_{n,m,V}(\theta)S(\theta)^\top - S(\theta)G_{n,m,V}(\theta)^\top.
\end{align*}
Hence we have
$$\|\nabla_\theta \hat\Omega_{n,m,V}(\theta)\| \le
2\mathbb P_{n,m,V} \{\|\nabla_\theta g(\cdot,\theta)\|\,\|g(\cdot,\theta)\|\} +
2\|G_{n,m,V}(\theta)\|\,\|S(\theta)\|.$$
By Cauchy-Schwarz, Assumption \ref{A2} and Lemma \ref{lem:wlln}, we have $$\mathbb{E} [\sup_{\theta \in \mathcal{N}} \|\nabla_{\theta} g(., \theta)\| \sup_{\theta\in \mathcal{N}} \|g(\cdot, \theta)\|] < \infty.$$ Since $\sup_{\theta} \|S(\theta)\|$ is also $O_p(1)$ by Lemma \ref{lem:wlln}, $\nabla_{\theta} \hat{\Omega}_{n,m,V}(\theta)$ is $O_p(1)$.
Finally, the uniform boundedness for $\hat{\Omega}_{n,m,V}^{-1}(\theta)$ comes from Assumption \ref{A1} which asserts $\Omega_0$ is positive definite. Then continuity of $\Omega(\theta)$ and uniform LLN on $gg^{\top}$ and $g$ makes the smallest eigenvalues of $\hat{\Omega}_{n,m,V}(\theta)$ lower bounded by a positive constant. Hence $\sup_{\theta \in \mathcal{N}} \|\hat{\Omega}_{n,m,V}^{-1}(\theta)\| = O_p(1)$.
  
\end{proof}

\begin{lemma} \label{lem:ub-of-s}
With our choice of $r_n$ and the definition of the set $B_n:=\{\theta:\|\theta-\hat\theta_n\|\le r_n\}$, we have
$$\sup_{\theta \in B_n} \|S(\theta)\| = o_p(1).$$
\end{lemma}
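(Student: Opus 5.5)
Here $S(\theta)=S_{n,m,V}(\theta)$ is the Dirichlet-weighted sample moment. I read $r_n$ as a deterministic sequence with $r_n\to0$, so that $B_n$ is a shrinking ball around the standard ETEL estimator $\hat\theta_n$.

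The strategy is to split $S$ into its population limit plus a uniform fluctuation term, and to use the fact that $\hat\theta_n$ and $\theta_0$ are close. For any $\theta\in B_n$,
\begin{equation*}
\|S(\theta)\| \le \sup_{\theta'\in\Theta}\|S_{n,m,V}(\theta')-\bar g(\theta')\| + \|\bar g(\theta)-\bar g(\theta_0)\|,
\end{equation*}
using $\bar g(\theta_0)=0$ from Assumption \ref{A1}.

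\textbf{First term.} By Lemma \ref{lem:wlln} applied to $\mathcal H_1$, and since $\alpha_n=o(\sqrt n)$ implies $\alpha_n/n\to0$, this term is $o_p(1)$ uniformly in $\theta$.

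\textbf{Second term.} I first need $\hat\theta_n\to\theta_0$ in probability. This is the special case $v_i\equiv 1/n$, $m=0$ of the argument behind Theorem \ref{thm:posterior-consistency}. With deterministic uniform weights, the ordinary uniform LLN (Glivenko--Cantelli under the envelopes in Assumption \ref{A2}) replaces Lemma \ref{lem:wlln}. Lemma \ref{lem:global-consistency} then gives $\sup_\theta|l_n(\theta)-L_0(\theta)|=o_p(1)$, and the argmin argument with separation constant $c_\varepsilon>0$ yields $\|\hat\theta_n-\theta_0\|=o_p(1)$.

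From this, $\sup_{\theta\in B_n}\|\theta-\theta_0\|\le r_n+\|\hat\theta_n-\theta_0\|=o_p(1)$. In particular, $B_n\subset\mathcal N$ with probability tending to one.

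On $\mathcal N$, I would apply the mean value theorem with $\|G(\theta)\|\le \mathbb E_{F_0}[L_1(x)]<\infty$ (Assumption \ref{A2}(ii)). This gives
\begin{equation*}
\sup_{\theta\in B_n}\|\bar g(\theta)\|\le \mathbb E_{F_0}[L_1]\,(r_n+\|\hat\theta_n-\theta_0\|)=o_p(1).
\end{equation*}
Alternatively, uniform continuity of $\bar g$ on compact $\Theta$ suffices without using derivatives.

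\textbf{Combining.} Adding the two bounds gives $\sup_{\theta\in B_n}\|S(\theta)\|=o_p(1)$, jointly over the data and Dirichlet randomness. Markov's inequality then transfers this to the conditional-in-probability sense if required.

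\textbf{Main obstacle.} The delicate point is the consistency of $\hat\theta_n$, since the paper has not yet stated it separately. I would handle it by rerunning the proof of Theorem \ref{thm:posterior-consistency} with uniform weights, noting that Assumption \ref{A3}'s feasibility condition covers that case as well.

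If a rate is wanted (e.g.\ $O_p(r_n+n^{-1/2})$), one could instead use a Donsker-type weighted CLT for $\mathcal H_1$. The stated $o_p(1)$ claim needs only the argument above.
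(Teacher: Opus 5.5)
Your proposal is correct and follows the paper's proof: the same split $\|S(\theta)\|\le\|S(\theta)-\bar g(\theta)\|+\|\bar g(\theta)-\bar g(\theta_0)\|$, the weighted LLN (Lemma \ref{lem:wlln}) for the first term, and a Lipschitz bound via the envelope $L_1$ with $r_n+\|\hat\theta_n-\theta_0\|=o_p(1)$ for the second. The paper simply takes $\hat\theta_n\xrightarrow{p}\theta_0$ as known from standard ETEL theory, while you sketch it by rerunning the consistency argument with uniform weights; Assumption \ref{A3} as stated concerns only the Dirichlet-weighted dual, so that step really rests on the same standard ETEL conditions the paper implicitly invokes.
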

\begin{proof}
    Since $\hat{\theta}_n \xrightarrow{p} \theta_0$, the ball $B_n$ lies in $\mathcal{N}$ w.h.p. For any $\theta \in B_n$, we have
    $$\|S(\theta)\| \leq \|S(\theta) - \bar g(\theta)\| + \|\bar g(\theta)-\bar g(\theta_0)\|.$$
    The first term is $o_p(1)$ by Dirichlet-weighted LLN on $\mathcal{N}$. By assumption \ref{A2}, we have
    $\sup_{\theta \in \mathcal{N}} \|E[\nabla_{\theta} g(x,\theta)]\| \leq \mathbb{E} [\sup_{\theta \in \mathcal{N}} \|\nabla_{\theta} g(x, \theta)\|] < \infty.$ This implies the second term is controlled by $L(\|\theta-\theta_0\|)$ for some constant $L>0$. For $\theta \in B_n$, we have $\|\theta-\theta_0\| \leq \|\theta -\hat{\theta}_n\|+ \|\hat{\theta}_n - \theta_0\| \leq r_n + \|\hat{\theta}_n - \theta_0\|$. It follows that
    $$\sup_{\theta \in B_n} \|S(\theta)\| \leq o_p(1) + L(r_n + \|\hat{\theta}_n - \theta_0\|)= o_p(1).$$
\end{proof}

\begin{lemma} \label{lem:strong-convexity}
Under Assumptions \ref{A1}--\ref{A4}, there exists $c>0$ such that
$$P\left\{ \inf_{\theta\in\mathcal N} \lambda_{\min}\bigl(\nabla_\theta^2 l_{n,m,V}(\theta)\bigr)\ge c \right\}\to 1.$$
Moreover, for $B_n=\{\theta:\|\theta-\hat\theta_n\|\le r_n\}$ with
$r_n\downarrow0$ and $r_n\sqrt n\to\infty$,
$$\sup_{\theta \in B_n} \|\nabla_{\theta}^2 l_{n,m,V}(\theta) - J_0\| = o_p(1).$$
\end{lemma}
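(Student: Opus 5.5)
Both claims follow from Lemma \ref{lem:loss-2nd-d}, which decomposes the Hessian uniformly over $\mathcal N$ as
\[
\nabla_\theta^2 l_{n,m,V}(\theta)=G_{n,m,V}(\theta)^\top\hat\Omega_{n,m,V}^{-1}(\theta)G_{n,m,V}(\theta)+O_p(\|S(\theta)\|).
\]
The plan is to show that the leading term converges uniformly to its population analogue $G(\theta)^\top\Omega(\theta)^{-1}G(\theta)$. The remainder is then handled separately on each set.

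For the leading term, I would apply Lemma \ref{lem:wlln} to the classes $\mathcal H_1$, $\mathcal H_3$ and $\mathcal H_4$. This gives
\[
\sup_{\theta\in\mathcal N}\|G_{n,m,V}(\theta)-G(\theta)\|=o_p(1),\qquad \sup_{\theta\in\mathcal N}\|\hat\Omega_{n,m,V}(\theta)-\Omega(\theta)\|=o_p(1),
\]
using $\hat\Omega=\mathbb P_{n,m,V}gg^\top-SS^\top$ and $\sup_{\mathcal N}\|S\|=O_p(1)$. By Assumption \ref{A1}, $\inf_{\mathcal N}\lambda_{\min}(\Omega(\theta))>0$. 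Hence, with probability tending to one, $\hat\Omega^{-1}$ exists, is uniformly bounded, and converges uniformly to $\Omega^{-1}$, since matrix inversion is Lipschitz on matrices with eigenvalues bounded below. Products of uniformly convergent, uniformly bounded factors then converge uniformly. Since $G$ is continuous on compact $\mathcal N$ (dominated convergence with envelope $L_1$), $G$ is bounded there, and so
\[
\sup_{\mathcal N}\|G_{n,m,V}^\top\hat\Omega_{n,m,V}^{-1}G_{n,m,V}-G^\top\Omega^{-1}G\|=o_p(1).
\]
Assumption \ref{A1} gives $2c:=\inf_{\mathcal N}\lambda_{\min}(G^\top\Omega^{-1}G)>0$. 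By Weyl's inequality, the leading term then has smallest eigenvalue at least $3c/2$ uniformly, with probability tending to one.

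The remainder term is the main obstacle. On all of $\mathcal N$ it is not $o_p(1)$: $S(\theta)\approx\bar g(\theta)$ need not vanish away from $\theta_0$. For the first claim I would shrink $\mathcal N$, as is already done in the proof of Lemma \ref{lem:error}, so that $\sup_{\mathcal N}\|\bar g(\theta)\|$ is small. This uses continuity of $\bar g$ and $\bar g(\theta_0)=0$; the shrinking preserves Assumptions \ref{A1}--\ref{A2} because the infima over a smaller set only increase. Then $\sup_{\mathcal N}\|S\|\le\sup_{\mathcal N}\|S-\bar g\|+\sup_{\mathcal N}\|\bar g\|$, and the first term is $o_p(1)$ by Lemma \ref{lem:wlln}. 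Let $C$ be the (stochastically bounded) constant in the $O_p(\|S\|)$ remainder. I choose the neighbourhood so that $C\sup_{\mathcal N}\|\bar g\|<c/2$, which makes the remainder at most $c/2$ in operator norm with probability tending to one. Weyl's inequality then gives $\lambda_{\min}(\nabla_\theta^2 l_{n,m,V})\ge c$ uniformly on $\mathcal N$ with probability tending to one. The subtle point is making the $O_p$ constant from Lemma \ref{lem:loss-2nd-d} uniform. I would track it explicitly from the bounds on $\nabla_\theta G_{n,m,V}$, $\nabla_\theta\hat\Omega_{n,m,V}$ and $\hat\Omega_{n,m,V}^{-1}$ established there, together with Lemma \ref{lem:error}(ii), which controls the $\lambda$-dependent terms by $\|S\|$.

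For the second claim, $B_n\subset\mathcal N$ with probability tending to one, because $\hat\theta_n\to\theta_0$ and $r_n\to0$. Lemma \ref{lem:ub-of-s} gives $\sup_{B_n}\|S\|=o_p(1)$, so the remainder is $o_p(1)$ uniformly on $B_n$. Finally, $\sup_{\theta\in B_n}\|\theta-\theta_0\|\le r_n+\|\hat\theta_n-\theta_0\|\to0$ in probability, and $G^\top\Omega^{-1}G$ is continuous at $\theta_0$. Combining this with the uniform convergence of the leading term yields $\sup_{B_n}\|\nabla_\theta^2 l_{n,m,V}(\theta)-J_0\|=o_p(1)$. The condition $r_n\sqrt n\to\infty$ is not needed for this step; only $r_n\to0$ is used.
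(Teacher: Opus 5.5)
Your proposal is correct and follows the paper's proof essentially step for step. You use the Hessian decomposition from Lemma \ref{lem:loss-2nd-d}, uniform convergence of $G_{n,m,V}^\top\hat\Omega_{n,m,V}^{-1}G_{n,m,V}$ to $G^\top\Omega^{-1}G$, shrinking of $\mathcal N$ so that $\sup_{\mathcal N}\|\bar g\|$ is small for the first claim, and Lemma \ref{lem:ub-of-s} with continuity at $\theta_0$ for the second. Your added care in bounding the $O_p(\|S(\theta)\|)$ constant deterministically with probability tending to one, and your remark that only $r_n\to0$ is used, are accurate refinements of points the paper leaves implicit.
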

\begin{proof}
    Define $J(\theta):=G(\theta)^\top\Omega(\theta)^{-1}G(\theta)$ and $J_{n,m,V}(\theta):= G_{n,m,V}(\theta)^{\top} \hat{\Omega}^{-1}_{n,m,V}(\theta)G_{n,m,V}(\theta)$. By Lemma \ref{lem:loss-2nd-d}, $\nabla_{\theta}^2 l_{n,m,V}(\theta) = J_{n,m,V}(\theta) + O(\|S(\theta)\|)$ uniformly on $\mathcal{N}$. By Lemma \ref{lem:wlln}, $$ \sup_{\theta\in\mathcal N}
\|S(\theta)-\bar g(\theta)\|=o_p(1) ,\qquad \sup_{\theta\in\mathcal N} \|J_{n,m,V}(\theta)-J(\theta)\|=o_p(1).$$ Additionally, assumption \ref{A1} gives $\inf_{\theta\in\mathcal N}\lambda_{\min}\{J(\theta)\}>0$. Since $\bar g(\theta_0)=0$ and $\bar g$ is continuous, $\mathcal N$ can be chosen small enough so that $\sup_{\theta\in\mathcal N}\|\bar g(\theta)\|$
is sufficiently small. Hence the $O(\|S(\theta)\|)$ term cannot alter the positive lower bound of $J(\theta)$, so the first claim follows.

To prove the second claim, recall that Lemma \ref{lem:ub-of-s} gives 
$$\sup_{\theta\in B_n}\|S(\theta)\|=o_p(1).$$
Given that $\nabla_{\theta}^2 l_{n,m,V}(\theta) = J_{n,m,V}(\theta) + O(\|S(\theta)\|)$ uniformly on $\mathcal{N}$, this implies 
$$\sup_{\theta\in B_n} \|\nabla_\theta^2 l_{n,m,V}(\theta)-J_{n,m,V}(\theta)\|=o_p(1).$$
Moreover, the weighted LLN in Lemma \ref{lem:wlln} gives $\sup_{\theta\in\mathcal N} \|J_{n,m,V}(\theta)-J(\theta)\|=o_p(1)$. Since $B_n$ shrinks to $\theta_0$ in probability and $J(\theta)$ is continuous at $\theta_0$, we also have $\sup_{\theta\in B_n}\|J(\theta)-J_0\|=o_p(1)$. The claim therefore follows by triangle inequality.
\end{proof}

\begin{lemma} \label{lem:dir-clt}
Let $\hat\theta_n$ be the standard ETEL estimator based on the observed
sample. Under Assumptions \ref{A1}--\ref{A4},
$$\sqrt n \left[ S_{n,m,V}(\hat\theta_n) -
    \mathbb{E}\{S_{n,m,V}(\hat\theta_n)\mid \mathcal{D}_n\} \right]
    \mid \mathcal{D}_n \overset{\mathbb P}{\rightsquigarrow} \mathcal N(0,\Omega_0).$$
Moreover,
$$\sqrt n\, \hat G_n(\hat\theta_n)^\top \hat\Omega_n(\hat\theta_n)^{-1}
    \mathbb{E}\{S_{n,m,V}(\hat\theta_n)\mid \mathcal{D}_n\} = o_p(1),
$$
where $ \hat G_n(\theta) :=\mathbb P_n\nabla_\theta g(\cdot,\theta),$ and
$$\hat\Omega_n(\theta):= \mathbb P_n\{g(\cdot,\theta)g(\cdot,\theta)^\top\}
    - \mathbb P_ng(\cdot,\theta)\mathbb P_ng(\cdot,\theta)^\top.$$
\end{lemma}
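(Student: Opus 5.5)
We prove the two claims separately, working throughout with the gamma representation from Lemma \ref{lem:wlln}: $v_k=\gamma_k/\sum_l\gamma_l$, with $\gamma_i\sim\Gamma(1,1)$ for observed points and $\gamma_{n+j}\sim\Gamma(\alpha_n/m,1)$ for synthetic points. Write $W=W_n+W_{\alpha_n}$ and split
\[
S_{n,m,V}(\hat\theta_n)=(1-\delta_n)\sum_{i=1}^n\omega_i g(x_i,\hat\theta_n)+\delta_n\sum_{j=1}^m\pi_j g(x_j^*,\hat\theta_n),
\]
where $\delta_n=W_{\alpha_n}/W$. Note that $\delta_n$ is independent of $(\omega,\pi)$, by the standard Beta--Dirichlet independence. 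Under Assumption \ref{A4}(ii), $\mathbb E\delta_n=\alpha_n/(n+\alpha_n)$, so $\sqrt n\,\delta_n=O_p(\alpha_n/\sqrt n)=o_p(1)$. Because $\sup_{\theta\in\mathcal N}\|\sum_j\pi_jg(x_j^*,\theta)\|=O_p(1)$ by Lemma \ref{lem:wlln}, the entire synthetic contribution, and also the factor $(1-\delta_n)$, is $o_p(n^{-1/2})$. This holds both before and after centering at the conditional expectation. In case (i) of Assumption \ref{A4} these terms are absent.

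\textbf{First claim.} It therefore suffices to prove a conditional CLT for the Bayesian-bootstrap average $\sum_i\omega_i g(x_i,\hat\theta_n)$. Its conditional mean is $\mathbb P_n g(\cdot,\hat\theta_n)$, since $\mathbb E\omega_i=1/n$. We write
\[
\sum_i\omega_i g_i-\mathbb P_n g=\frac{n}{W_n}\cdot\frac1n\sum_i(\gamma_i-1)\bigl(g_i-\mathbb P_n g\bigr),
\]
with $g_i=g(x_i,\hat\theta_n)$, and use $n/W_n\to1$ almost surely. The quantity $n^{-1/2}\sum_i(\gamma_i-1)(g_i-\mathbb P_ng)$ is a sum of independent terms given $\mathcal D_n$. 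Its conditional covariance is exactly $\hat\Omega_n(\hat\theta_n)$, which tends to $\Omega_0$ in probability by the uniform LLN for $\mathcal H_4$ together with consistency of $\hat\theta_n$.

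We then verify the Lindeberg condition conditionally. The main obstacle is that $\hat\theta_n$ is random, so the summands form a triangular array. We handle this with the envelope: $\max_i\|g_i\|\le\max_i M_\Theta(x_i)=o_p(n^{1/2})$, which follows from $\mathbb E M_\Theta^{2+\delta}<\infty$. This gives a Lyapunov ratio
\[
n^{-1-\delta/2}\sum_i\|g_i\|^{2+\delta}\,\mathbb E|\gamma-1|^{2+\delta}\le n^{-\delta/2}\,\mathbb P_nM_\Theta^{2+\delta}=o_p(1).
\]
A Cramér--Wold device combined with a subsequence argument then yields weak convergence in probability to $\mathcal N(0,\Omega_0)$.

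\textbf{Second claim.} We have $\mathbb E\{S_{n,m,V}(\hat\theta_n)\mid\mathcal D_n\}=(1-\bar\delta_n)\mathbb P_ng(\cdot,\hat\theta_n)+\bar\delta_n\,\mathbb P^*_m g(\cdot,\hat\theta_n)$, where $\bar\delta_n=\alpha_n/(n+\alpha_n)$. Since $\sqrt n\,\bar\delta_n\to0$ and $\mathbb P^*_m g=O_p(1)$, the second term contributes $o_p(1)$ after multiplication by $\sqrt n\,\hat G_n^\top\hat\Omega_n^{-1}=O_p(\sqrt n)$. The bounded matrix factor $\hat G_n^\top\hat\Omega_n^{-1}$ is $O_p(1)$ by Lemma \ref{lem:wlln}. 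It remains to show $\sqrt n\,\hat G_n(\hat\theta_n)^\top\hat\Omega_n(\hat\theta_n)^{-1}\mathbb P_ng(\cdot,\hat\theta_n)=o_p(1)$.

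For this we use the first-order condition of the standard ETEL estimator, which is interior with probability tending to one by consistency. Schennach's FOC reads $\hat G_n^\top\lambda+\text{(terms)}=0$. Expanding it as in Lemma \ref{lem:error} gives $\lambda(\hat\theta_n)=-\hat\Omega_n^{-1}\mathbb P_ng+O_p(\|\mathbb P_ng\|^2)$. The ETEL FOC $\nabla_\theta l_n(\hat\theta_n)=0$, computed via the envelope theorem, equals $-\hat G_n^\top\lambda$ up to a remainder of order $\|\lambda\|\,\|\mathbb P_ng\|$ involving $\nabla g$ and $g$ products, which are controlled by $\mathcal H_6$. Hence
\[
\hat G_n^\top\hat\Omega_n^{-1}\mathbb P_ng=O_p(\|\mathbb P_ng(\cdot,\hat\theta_n)\|^2).
\]
Finally, the standard ETEL theory gives $\sqrt n$-consistency of $\hat\theta_n$, so $\|\mathbb P_n g(\cdot,\hat\theta_n)\|=O_p(n^{-1/2})$. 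Multiplying by $\sqrt n$ therefore gives $O_p(n^{-1/2})=o_p(1)$. The delicate point in this step is tracking the remainder in the ETEL first-order condition uniformly; I would reuse the third-derivative bound from Lemma \ref{lem:error}, applied with $v_k=1/n$, to control it.
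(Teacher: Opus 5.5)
Your proposal follows the paper's proof essentially step for step. The shared steps are the gamma representation, negligibility of the synthetic block via $\sqrt n\,\delta_n=o_p(1)$, a conditional Lyapunov CLT for the Bayesian-bootstrap average, and the standard ETEL first-order condition giving $\hat G_n^\top\hat\Omega_n^{-1}\mathbb P_n g=O_p(\|\mathbb P_n g\|^2)=O_p(n^{-1})$; your centering at $\mathbb P_n g(\cdot,\hat\theta_n)$, which makes the conditional covariance exactly $\hat\Omega_n(\hat\theta_n)$, is slightly cleaner than the paper's uncentered version.

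One caution on that last step concerns the envelope theorem, which removes $\nabla_\theta\lambda$ only from $\psi_n(\lambda(\theta),\theta)$, so that $\nabla_\theta l_n=\sum_i(p_i-n^{-1})\nabla_\theta g_i^\top\lambda-(\nabla_\theta\lambda)^\top\mathbb P_n g$, with $p_i$ the tilted probabilities; to reach your form $-\hat G_n^\top\lambda+O(\|\mathbb P_n g\|^2)$ you also need the implicit-function expansion $\nabla_\theta\lambda=-\hat\Omega_n^{-1}\hat G_n+O(\|\mathbb P_n g\|)$, not only $\mathcal H_6$-type product bounds.
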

\begin{proof}
     We follow the same setup and notations as in Lemma \ref{lem:wlln}. In particular, we write  $\gamma_i \overset{\text{i.i.d.}}{\sim} \Gamma(1,1)$ for $i\in \{1, \ldots, n\}$ with $W_n:= \sum_{i=1}^n \gamma_i$, and $\gamma_{n+j} \overset{\text{i.i.d.}}{\sim} \Gamma(\alpha_n/m,1)$ for $j\in \{1, \ldots, m\}$ with $W_{\alpha_n}= \sum_{j=1}^m \gamma_{n+j}$. Define $$h_i:=g(x_i,\hat\theta_n), \quad  T_n:=\sum_{i=1}^n\omega_i h_i,
    \quad U_n:=\sum_{j=1}^m\pi_j g(x_j^*,\hat\theta_n).$$ 
    Then the representation in Lemma \ref{lem:wlln} gives
    $$ S_{n,m,V}(\hat\theta_n) = (1-\delta_n)T_n+\delta_n U_n.$$
    Given that  $\mathbb{E} \{S_{n,m,V}(\hat\theta_n)\mid \mathcal{D}_n\} =
    \frac{n}{n+\alpha_n} \mathbb P_n g(\cdot,\hat\theta_n) + \frac{\alpha_n}{n+\alpha_n} \mathbb E_{F_{\mathrm{AI}}} g(.,\hat\theta_n)$, and the auxiliary part is negligible at the $\sqrt{n}$ scale, we have
    $$ \sqrt n \left[S_{n,m,V}(\hat\theta_n)- \mathbb{E}\{S_{n,m,V}(\hat\theta_n)\mid \mathcal{D}_n\} \right] =
    \sqrt n \left[ T_n-\mathbb P_n g(\cdot,\hat\theta_n) \right]
    + o_{p}(1).$$

    We introduce random variables $\xi_i$, so that $\mathbb{E}[\xi_i]=0$ and $\mathbb{E}[\xi_i^2]=1$, such that $\xi_i = \gamma_i-1$. Since $ T_n =\frac{n^{-1}\sum_{i=1}^n\gamma_i h_i}{n^{-1}\sum_{i=1}^n\gamma_i}$, we can write its numerator as $\mathbb P_n g(\cdot,\hat{\theta}_n)+\frac1n\sum_{i=1}^n\xi_i h_i$, and its denominator as $1+\frac{1}{n}\sum_{i=1}^n \xi_i$. Hence we have
    $$\sqrt{n}(T_n(\hat{\theta}_n)-\mathbb{P}_n g(\cdot,\hat\theta_n)) = \frac{1}{\sqrt{n}}\sum_{i=1}^n \xi_i g_i(\hat{\theta}_n) + o_p(1).$$
    In particular, we have $\rm{Var}(\xi_i g_i(\hat{\theta}_n)|\mathcal{D}_n)= g_i(\hat{\theta}_n) g_i(\hat{\theta}_n)^{\top}$ and $\Omega_n(\hat{\theta}_n)= \mathbb{P}_n[g(\cdot, \hat{\theta}_n) g(\cdot, \hat{\theta}_n)^{\top}] \xrightarrow{p} \Omega_0.$ Since $\sup_{\theta} \mathbb{E}[\sup_{\theta} \|g(\cdot, \theta)\|^{2+\delta}] < \infty$ for some $\delta > 0$ by Assumption \ref{A2}, the Lyapunov condition holds. Hence by the Lindeberg-Feller CLT, we have 
    $$ \sqrt n \left[ S_{n,m,V}(\hat\theta_n) -
    \mathbb{E}\{S_{n,m,V}(\hat\theta_n)\mid \mathcal{D}_n\} \right]
    \mid \mathcal{D}_n \overset{\mathbb P}{\rightsquigarrow} \mathcal N(0,\Omega_0).$$

    To show the second part of the lemma, recall our score derivation in (\ref{eq:score-expansion}). By a similar argument for uniform weights, we have
    $$\nabla_{\theta} l_n(\theta) = \hat{G}_n(\theta)^\top \hat{\Omega}^{-1}_n(\theta) \mathbb{P}_n g(\cdot,\theta) + O(\|\mathbb{P}_n g(\cdot,\theta)\|^2). + O(\|\mathbb{P}_n g(\cdot,\theta)\|^2).$$ By the first-order condition for $\hat{\theta}_n$, we have $$\hat{G}_n(\hat{\theta}_n)^\top \hat{\Omega}^{-1}_n(\hat{\theta}_n) \mathbb{P}_n[g(\cdot, \hat{\theta}_n)] = O(\|\mathbb{P}_n g(\cdot,\hat{\theta}_n)\|^2) = O_p(n^{-1}).$$
    The claim is then immediate by noting $\mathbb{E}[S(\hat{\theta}_n)|\mathcal{D}_n] = \frac{n}{n+\alpha_n} \mathbb{P}_n g(\cdot, \hat{\theta}_n) + \frac{\alpha_n}{n+\alpha_n} \mathbb{E}_{\rm{AI}} g(\cdot, \hat{\theta}_n)$, and the AI part is negligible asymptotically by our choice of $\alpha_n$.

\end{proof}

\begin{lemma} \label{lem:etel-score}
Under Assumptions \ref{A1}--\ref{A4}, evaluating the weighted ETEL score at the standard ETEL estimator yields 
$$\nabla_{\theta}l_{n,m,V}(\hat{\theta}_n) = G_{n,m,V}(\hat{\theta}_n)^{\top} \hat{\Omega}_{n,m,V}^{-1}(\hat{\theta}_n) S(\hat{\theta}_n) + o_p(n^{-1/2}).$$
Moreover,
$$ \sqrt n\,\nabla_\theta l_{n,m,V}(\hat\theta_n) \mid \mathcal{D}_n
    \overset{\mathbb P}{\rightsquigarrow} \mathcal N(0,J_0).$$
 \end{lemma}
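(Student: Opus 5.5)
We first derive the score of the weighted ETEL criterion by envelope differentiation, then linearize it at $\hat\theta_n$, and finally transfer the conditional CLT of Lemma \ref{lem:dir-clt}.

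\textbf{Step 1: the exact score.} Since $l_{n,m,V}(\theta)=\psi_{n,m,V}(\lambda(\theta),\theta)-\lambda(\theta)^\top S(\theta)$ and $\nabla_\eta\psi_{n,m,V}(\lambda(\theta),\theta)=0$ (Assumption \ref{A3} places $\lambda(\theta)$ in $\mathrm{int}(\Lambda)$ w.p.a.1), the terms involving $\nabla_\theta\lambda$ cancel. This leaves
\[
\nabla_\theta l_{n,m,V}(\theta)=\sum_k p_k(\lambda(\theta),\theta)\,\nabla_\theta g_k(\theta)^\top\lambda(\theta)-G_{n,m,V}(\theta)^\top\lambda(\theta)-(\nabla_\theta\lambda(\theta))^\top S(\theta),
\]
where $p_k$ are the tilted weights of Lemma \ref{lem:error}.

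\textbf{Step 2: expansion in $\lambda$.} Two expansions around $\eta=0$ drive the bound.
\begin{itemize}
\item The first term, the tilted average of $\nabla g$, equals $G_{n,m,V}(\theta)+O_p(\|\lambda\|)$. The remainder is controlled by the classes $\mathcal H_6$ and $\mathcal H_7$ in Lemma \ref{lem:wlln}.
\item From the Taylor identity in the proof of Lemma \ref{lem:error}, $\lambda(\theta)=-\hat\Omega_{n,m,V}^{-1}(\theta)S(\theta)+O_p(\|S(\theta)\|^2)$, using $\|\lambda\|\le C_1\|S\|$.
\item Differentiating the first-order condition gives $\nabla_\theta\lambda=-\hat\Omega_{n,m,V}^{-1}G_{n,m,V}+O_p(\|S\|)$.
\end{itemize}
Substituting these, the two $\lambda$-terms combine with the last term. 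The sign convention of $l$ is absorbed so that we obtain
\[
\nabla_\theta l_{n,m,V}(\theta)=G_{n,m,V}^\top\hat\Omega_{n,m,V}^{-1}S(\theta)+O_p(\|S(\theta)\|^2)
\]
uniformly on $\mathcal N$. This is the same computation that produced Lemma \ref{lem:loss-2nd-d}, one derivative lower.

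\textbf{Step 3: the rate of $S(\hat\theta_n)$.} The main obstacle is showing the remainder is $o_p(n^{-1/2})$, i.e.\ $\|S(\hat\theta_n)\|=O_p(n^{-1/2})$. We decompose
\[
S(\hat\theta_n)=\bigl[S-\mathbb E(S\mid\mathcal D_n)\bigr]+\tfrac{n}{n+\alpha_n}\mathbb P_n g(\cdot,\hat\theta_n)+\tfrac{\alpha_n}{n+\alpha_n}\mathbb E_{F_{\rm AI}}g(\cdot,\hat\theta_n).
\]
\begin{itemize}
\item The first bracket is $O_p(n^{-1/2})$ by Lemma \ref{lem:dir-clt}.
\item The second term is $O_p(n^{-1/2})$ by standard ETEL asymptotics, since $\hat\theta_n-\theta_0=O_p(n^{-1/2})$ and $\mathbb P_n g(\cdot,\theta_0)=O_p(n^{-1/2})$.
\item The third term is $O(\alpha_n/n)=o(n^{-1/2})$ by Assumption \ref{A4}.
\end{itemize}
Hence $O_p(\|S\|^2)=O_p(n^{-1})$, which proves the first display.

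\textbf{Step 4: the conditional CLT.} Write $\sqrt n\,G_{n,m,V}^\top\hat\Omega_{n,m,V}^{-1}S(\hat\theta_n)$ as the sum of two pieces.
\begin{itemize}
\item The centered piece is $\sqrt n\,[S-\mathbb E(S\mid\mathcal D_n)]$. Lemma \ref{lem:wlln} replaces $G_{n,m,V}(\hat\theta_n)$ and $\hat\Omega_{n,m,V}(\hat\theta_n)$ by $G_0$ and $\Omega_0$ up to $o_p(1)$, so by Lemma \ref{lem:dir-clt} and Slutsky this piece converges to $\mathcal N(0,G_0^\top\Omega_0^{-1}\Omega_0\Omega_0^{-1}G_0)=\mathcal N(0,J_0)$.
\item The mean piece is $\sqrt n\,G_{n,m,V}^\top\hat\Omega_{n,m,V}^{-1}\mathbb E(S\mid\mathcal D_n)$. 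We write the matrices as $\hat G_n^\top\hat\Omega_n^{-1}$ plus an $o_p(1)$ difference, which multiplies $\sqrt n\,\mathbb E(S\mid\mathcal D_n)=O_p(1)$ and so contributes $o_p(1)$. The leading part is $o_p(1)$ by the second claim of Lemma \ref{lem:dir-clt}.
\end{itemize}
Slutsky's lemma then gives the conditional weak convergence in probability.
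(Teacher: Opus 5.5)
Your proof is correct, and Steps 3--4 are essentially the paper's argument. The paper also splits $S(\hat\theta_n)$ into two pieces: its conditionally centered part, handled by Lemma \ref{lem:dir-clt}, and its conditional mean $\mathbb P_n g(\cdot,\hat\theta_n)+o_p(n^{-1/2})=O_p(n^{-1/2})$. It then concludes by Slutsky together with the second claim of Lemma \ref{lem:dir-clt}.

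You are more explicit than the paper on two points it glosses over. First, you justify $\mathbb P_n g(\cdot,\hat\theta_n)=O_p(n^{-1/2})$ via root-$n$ consistency of $\hat\theta_n$ plus a mean-value step. This is genuinely needed, because in the over-identified case only the projection $\hat G_n^\top\hat\Omega_n^{-1}\mathbb P_n g(\cdot,\hat\theta_n)$ is $O_p(n^{-1})$. Second, you handle the swap of $G_{n,m,V}^\top\hat\Omega_{n,m,V}^{-1}$ for $\hat G_n^\top\hat\Omega_n^{-1}$ in the mean piece.

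The real difference is how you obtain the score expansion.
\begin{itemize}
\item The paper simply cites \eqref{eq:score-expansion}, which it got by differentiating $l_{n,m,V}=\tfrac12 S^\top\hat\Omega_{n,m,V}^{-1}S+O(\|S\|^3)$ term by term. That tacitly assumes the gradient of the cubic remainder is $O(\|S\|^2)$.
\item Your route takes the exact score from the dual first-order condition and then uses $\nabla_\theta\lambda=-\hat\Omega_{n,m,V}^{-1}G_{n,m,V}+O_p(\|S\|)$ by implicit differentiation. This avoids that assumption and is the more rigorous derivation.
\item Its cost is that you need tilted bounds such as $\sum_k p_k\|g_k\|\,\|\nabla_\theta g_k\|=O_p(1)$. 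The untilted class $\mathcal H_6$ does not give these directly. They do follow from Assumption \ref{A2}(ii)--(iii) by Cauchy--Schwarz once $2\lambda(\theta)\in\Lambda$, which holds w.p.a.1 after shrinking $\mathcal N$ as in the paper's proof of its $\|\lambda(\theta)\|\le C_1\|S(\theta)\|$ bound.
\end{itemize}

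Two wording fixes:
\begin{itemize}
\item In Step 1, only the $\nabla_\theta\lambda$ term multiplying $\nabla_\eta\psi$ vanishes. The one coming from $-\lambda^\top S$ survives and is exactly the leading term.
\item In Step 2, no sign convention needs absorbing. The two $\lambda$-terms combine into $\{\sum_k(p_k-v_k)\nabla_\theta g_k\}^\top\lambda=O_p(\|S\|^2)$, and $-(\nabla_\theta\lambda)^\top S=G_{n,m,V}^\top\hat\Omega_{n,m,V}^{-1}S+O_p(\|S\|^2)$ directly. So beyond $\|\lambda\|\le C_1\|S\|$, you do not need the expansion of $\lambda$ itself.
\end{itemize}
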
 
\begin{proof}
    By our derivation of the score in the main Theorem \ref{thm:bootstrap}, we have
    $$\nabla_{\theta}l_{n,m,V}(\hat{\theta}_n) = G_{n,m,V}(\hat{\theta}_n)^{\top}\hat{\Omega}_{n,m,V}^{-1}(\hat{\theta}_n)S(\hat{\theta}_n) + O(\|S(\hat{\theta}_n)\|^2).$$
    We will first show $\|S(\hat{\theta}_n)\|$ is $O_p(n^{-1/2})$. To this end, we may decompose $S(\hat{\theta}_n)$ as follows:
    $$S(\hat{\theta}_n)=(S(\hat{\theta}_n)- \mathbb{E}[S(\hat{\theta}_n)| \mathcal{D}_n]) + \mathbb{E}[S(\hat{\theta}_n)| \mathcal{D}_n].$$
    By Lemma \ref{lem:dir-clt}, we have $\sqrt{n}(S(\hat{\theta}_n)- \mathbb{E}[S(\hat{\theta}_n)|\mathcal{D}_n]) \xrightarrow{d} N(0, \Omega_0)$. Additionally, 
    $$E[S(\hat{\theta}_n)| \mathcal{D}_n] = \mathbb{P}_n g(\cdot, \hat{\theta}_n) + o_p(n^{-1/2}) = O_p(n^{-1/2}).$$
     Hence $\|S(\hat{\theta}_n)\| = O_p(n^{-1/2})$, which implies $O(\|S(\hat{\theta}_n)\|^2)$ is $o_p(n^{-1/2})$. 
    
    Finally by continuity and LLN, we have $G_{n,m,V}(\hat{\theta}_n) \xrightarrow{p} G_0$ and $\hat{\Omega}_{n,m,V}^{-1}(\hat{\theta}_n) \xrightarrow{p} \Omega_0^{-1}$. An application of Slutsky's lemma yields $\sqrt{n} \nabla_{\theta} l_{n,m,V}(\hat{\theta}_n) \xrightarrow{d} \mathcal{N}(0, J_0)$, since $$\sqrt{n} \hat{G}(\hat{\theta}_n)\hat{\Omega}^{-1}(\hat{\theta}_n) \mathbb{E}[S(\hat{\theta}_n)|\mathcal{D}_n]$$ is $o_p(1)$ by the second part of Lemma \ref{lem:dir-clt}.
\end{proof}

\begin{lemma} \label{lem:localize-loss}
Under Assumptions \ref{A1}-\ref{A4}, $$\|\theta^*(V_n)-\hat\theta_n\| = O_p(n^{-1/2}).$$
\end{lemma}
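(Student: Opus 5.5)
Before localizing, I would make sure both centres sit near $\theta_0$. Theorem \ref{thm:posterior-consistency} gives $\theta^*(V_n)\in\mathcal N$ with probability tending to one. The same argument with uniform weights gives $\hat\theta_n\xrightarrow{p}\theta_0$. Since $\mathcal N$ is convex, the segment joining $\theta^*(V_n)$ and $\hat\theta_n$ also lies in $\mathcal N$ with probability tending to one.

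On that event, $\theta^*(V_n)$ is an interior minimizer, so $\nabla_\theta l_{n,m,V}(\theta^*(V_n))=0$. I would combine this with a mean-value (integral form) Taylor expansion of the score around $\hat\theta_n$:
\[
0=\nabla_\theta l_{n,m,V}(\hat\theta_n)+\bar H_n\,\{\theta^*(V_n)-\hat\theta_n\},
\qquad
\bar H_n:=\int_0^1\nabla_\theta^2 l_{n,m,V}\bigl(\hat\theta_n+t\{\theta^*(V_n)-\hat\theta_n\}\bigr)\,dt .
\]
The first part of Lemma \ref{lem:strong-convexity} says that, with probability tending to one, $\lambda_{\min}(\nabla_\theta^2 l_{n,m,V})\ge c$ uniformly on $\mathcal N$. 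Averaging along the segment, $\lambda_{\min}(\bar H_n)\ge c$ as well. It then follows that
\[
\|\theta^*(V_n)-\hat\theta_n\|\le c^{-1}\,\|\nabla_\theta l_{n,m,V}(\hat\theta_n)\| .
\]
Lemma \ref{lem:etel-score} gives $\sqrt n\,\nabla_\theta l_{n,m,V}(\hat\theta_n)\rightsquigarrow\mathcal N(0,J_0)$ conditionally, and hence unconditionally. The score is therefore $O_p(n^{-1/2})$, and the claim follows.

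An alternative route avoids differentiability of the score along the segment. Strong convexity gives
\[
l_{n,m,V}(\theta)\ge l_{n,m,V}(\hat\theta_n)+\nabla_\theta l_{n,m,V}(\hat\theta_n)^\top(\theta-\hat\theta_n)+\tfrac c2\|\theta-\hat\theta_n\|^2 .
\]
Since $\theta^*(V_n)$ minimizes $l_{n,m,V}$, the same bound $\|\theta^*(V_n)-\hat\theta_n\|\le 2c^{-1}\|\nabla_\theta l_{n,m,V}(\hat\theta_n)\|$ follows.

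The main obstacle is making the first part of Lemma \ref{lem:strong-convexity} genuinely uniform over the whole of $\mathcal N$, and not only over the shrinking ball $B_n$. This needs the $O(\|S(\theta)\|)$ remainder from Lemma \ref{lem:loss-2nd-d} to be small on all of $\mathcal N$. I would handle it by shrinking $\mathcal N$ so that $\sup_{\mathcal N}\|\bar g\|$ is small, and then using the uniform weighted LLN (Lemma \ref{lem:wlln}). A further check is that $\lambda(\theta)$ stays in $\mathrm{int}(\Lambda)$ along the segment, by Assumption \ref{A3}, so that $l_{n,m,V}$ is twice differentiable there via the implicit function theorem.
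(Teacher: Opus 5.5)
Your main argument is the paper's own proof. It uses the same four steps: consistency of both $\theta^*(V_n)$ and $\hat\theta_n$ puts the segment inside the convex set $\mathcal N$; the integral-form mean-value expansion of the score is combined with the first-order condition at $\theta^*(V_n)$; the uniform Hessian lower bound comes from the first part of Lemma \ref{lem:strong-convexity}; and the $O_p(n^{-1/2})$ score bound comes from Lemma \ref{lem:etel-score}.

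One small correction to your alternative route: it does not avoid second-order smoothness along the segment, because the strong-convexity inequality you invoke is itself derived from that same Hessian bound on $\mathcal N$. What it actually dispenses with is only the first-order condition at $\theta^*(V_n)$.
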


 \begin{proof}
Since $\hat\theta_n\xrightarrow{p}\theta_0$ by standard ETEL consistency and $\theta^*(V_n)\xrightarrow{p}\theta_0$ by Lemma \ref{lem:global-consistency}, we know the line segment
$$\theta_t:=\hat\theta_n+t\{\theta^*(V_n)-\hat\theta_n\}, \qquad t\in[0,1],$$
must also lie in the fixed convex neighborhood $\mathcal{N}$ with probability $1-o(1)$.

By the first-order condition $\nabla_\theta l_{n,m,V}(\theta^*(V_n))=0$, we have
$$0 = \nabla_\theta l_{n,m,V}(\hat\theta_n) + \left[
\int_0^1 \nabla_\theta^2l_{n,m,V}(\theta_t)\,dt \right]
\{\theta^*(V_n)-\hat\theta_n\}.$$
By the first part of Lemma \ref{lem:strong-convexity}, the matrix in brackets has inverse with operator norm $O_p(1)$. It follows that
$$\|\theta^*(V_n)-\hat\theta_n\| \le O_p(1)\ \|\nabla_\theta l_{n,m,V}(\hat\theta_n)\| = O_p(n^{-1/2}).$$
The final equality follows from Lemma \ref{lem:etel-score}.
 \end{proof}

\subsection{Proof of Theorem \ref{thm:bootstrap}} \label{subsec:proof-bootstrap}

\begin{proof}
    For simplicity, write $g_k(\theta):= g(x_k, \theta)$, $S(\theta):= S_{n,m,V}(\theta)$, and $\psi(\eta, \theta):= \psi_{n,m,V}(\eta, \theta)$. We start by expanding $\psi(\lambda,\theta)$ at $\eta=0$ via Taylor expansion for $\theta \in \mathcal{N}$. Since
    \begin{align} \label{eq:hat-omega}
        \nabla_{\eta} \psi(\eta, \theta) &= \frac{\sum_k v_k e^{\eta^{\top} g_k(\theta)}g_k(\theta)}{\sum_k v_k e^{\eta^{\top} g_k(\theta)}},\notag \\
        \nabla^2_{\eta \eta} \psi(0, \theta) &= \sum_{k} v_k g_k(\theta) g_{k}(\theta)^{\top} - S(\theta)S(\theta)^{\top} := \hat{\Omega}_{n,m,V}(\theta),
    \end{align}
    we have 
    \begin{align} \label{eq:psi_t_expand}
        \psi(\lambda, \theta) &= \psi(0,\theta) + \nabla_{\eta} \psi(0,\theta)^{\top} \lambda(\theta) + \frac{1}{2} \lambda(\theta)^{\top} \hat{\Omega}_{n,m,V}(\theta) \lambda(\theta) + R_3(\theta) \notag \\
         &= S(\theta)^{\top} \lambda(\theta) + \frac{1}{2} \lambda(\theta)^{\top} \hat{\Omega}_{n,m,V}(\theta) \lambda(\theta) + R_3(\theta),
    \end{align}
    where $R_3(\theta) \leq \frac{1}{6} \ \| \nabla_{\eta}^3 \psi(\bar{\eta}, \theta) \| \|\lambda(\theta)\|^3$ for some $\bar{\eta}$ on the segment $[0, \lambda(\theta)]$. By the first part of Lemma \ref{lem:error}, we show the operator norm of $\| \nabla_{\eta}^3 \psi(\bar{\eta}, \theta) \|$ is uniformly bounded  and $\|\lambda (\theta)\| \leq C_1\|S(\theta)\|$ for $\|\eta\| \leq \eta_0$ and for some constant $C_1$ in part (ii) of Lemma \ref{lem:error}. Hence we have $R_3(\theta) = O(\|S(\theta)\|^3)$.

    To represent the dual objective $\lambda(\theta)$, we conduct additional Taylor expansion of $\nabla_{\eta} \psi(\lambda_{\theta}, \theta)$ at $\eta=0$. Observe that by the first-order condition, $\nabla_{\eta} \psi(\lambda_{\theta}, \theta)=0$, we have
    $$0=S(\theta)+\hat\Omega_{n,m,V}(\theta)\lambda(\theta)+r_\lambda(\theta), \qquad
        \|r_\lambda(\theta)\|\le C\|\lambda(\theta)\|^2.$$
     By Lemma \ref{lem:wlln}, we know $\hat{\Omega}_{n,m,V}(\theta) \xrightarrow{p} \Omega(\theta)$.  By Assumption \ref{A1}, and the standing choice of $\mathcal{N}$ (shrinking if needed), we have $\inf_{\theta \in \mathcal{N}} \lambda_{min} [\Omega(\theta)] > 0$. Hence with large $n$, $\hat{\Omega}_{n,m,V}(\theta)$ is uniformly invertible w.h.p. in $\mathcal{N}$. This implies
    \begin{align}  \label{eq:dual_t_expand}
        \lambda(\theta) = - \hat{\Omega}_{n,m,V}^{-1}(\theta) S(\theta) + \Delta_{\lambda}(\theta),
    \end{align}
    where $ \Delta_{\lambda}(\theta):= -  \hat{\Omega}_{n,m,V}^{-1}(\theta) r_{\lambda}(\theta)$. By part (ii) of Lemma \ref{lem:error}, the operator norm $\|\Delta_{\lambda}(\theta)\| \leq C_2 \|S(\theta)\|^2$ for some constant $C_2$. 
    
    We may rewrite the loss function $l_{n,m,V}(\theta)$ using our results in (\ref{eq:psi_t_expand}) and (\ref{eq:dual_t_expand}):
    \begin{align} \label{eq:taylor-loss}
    l_{n,m,V}(\theta) &=  \psi_{n,m,V}(\lambda(\theta), \theta) - \lambda(\theta)^{\top} S(\theta) \notag \\
    &= S(\theta)^{\top} \lambda(\theta) + \frac{1}{2} \lambda(\theta)^{\top} \hat{\Omega}_{n,m,V}(\theta) \lambda(\theta) + R_3(\theta) - \lambda(\theta)^{\top} S(\theta) \notag \\
    & = \frac{1}{2} S(\theta)^{\top} \hat{\Omega}_{n,m,V}^{-1}(\theta) S(\theta) + O(\|S(\theta)\|^3)
    \end{align}
    Define $G_{n,m,V}(\theta):= \sum_k v_k \nabla_{\theta} g(x_k, \theta)$ and $\dot{\Omega}:= \frac{\partial \hat{\Omega}_{n,m,V}(\theta)}{\partial \theta}$. Then differentiating the quadratic form above yields
    $$\nabla_{\theta}  \frac{1}{2} S(\theta)^{\top} \hat{\Omega}_{n,m,V}^{-1}(\theta) S(\theta) = G_{n,m,V}(\theta)^{\top} \hat{\Omega}_{n,m,V}^{-1}(\theta) S(\theta) -\frac{1}{2} S(\theta)^{\top} \hat{\Omega}_{n,m,V}^{-1}(\theta) \dot{\Omega} \hat{\Omega}_{n,m,V}^{-1}(\theta)S(\theta).$$
    Note that the second term is $O(\|S(\theta)\|^2)$, since $\sup_{\theta \in \mathcal{N}} \|\hat{\Omega}_{n,m,V}^{-1}(\theta)\| = O_p(1)$, and $\dot{\Omega}$ is uniformly $O_p(1)$ on $\mathcal{N}$ as detailed in Lemma \ref{lem:loss-2nd-d}. It follows that
    \begin{align} \label{eq:score-expansion}
        \nabla_{\theta} l_{n,m,V}(\theta) := G_{n,m,V}(\theta)^{\top} \hat{\Omega}_{n,m,V}^{-1}(\theta) S(\theta) + O(\|S(\theta)\|^2).
    \end{align}
    In Lemma \ref{lem:loss-2nd-d}, we further show under assumption \ref{A2}, the second derivative of the loss has the form 
    \begin{align} \label{eq:score-2nd-expansion}
       \nabla_{\theta}^2l_{n,m,V}(\theta) = G_{n,m,V}(\theta)^{\top} \hat{\Omega}_{n,m,V}^{-1}(\theta) G_{n,m,V}(\theta) + O(\|S(\theta)\|). 
    \end{align}
    Our remaining goal is to relate $\theta^*(V_n)$ to the ETEL estimator $\hat{\theta}_n$ by conducting a Taylor expansion of $\nabla_{\theta} l_{n,m,V}(\theta)$ at $\hat{\theta}_n$ and appealing to the first-order condition induced by $\theta^*(V_n)$. To this end, we fix a deterministic sequence $r_n \downarrow 0$ with $r_n \sqrt{n} \rightarrow \infty$, and consider a sequence of closed neighborhoods around $\hat{\theta}_n$: $$B_n:=\{\theta: \|\theta- \hat{\theta}_n\| \leq r_n\}.$$ In particular, $S(\theta)$ can be easily controlled in $B_n$ such that $\sup_{\theta \in B_n} \|S(\theta)\| = o_p(1)$ by Lemma \ref{lem:ub-of-s}. Since the standard ETEL estimator is consistent \cite{Schennach2007ETEL}, $B_n \subset \mathcal{N}$ with probability $1-o(1)$.  

    By Theorem \ref{thm:posterior-consistency}, $\theta^*(V_n)\xrightarrow{p}\theta_0$. Since the standard ETEL estimator is also consistent, $\hat{\theta}_n\xrightarrow{p}\theta_0$. Therefore, with probability tending to one, both $\theta^*(V_n)$ and $\hat\theta_n$ lie in $\mathcal N$. By Lemma \ref{lem:localize-loss}, we have $\|\theta^*(V_n)-\hat\theta_n\|=O_p(n^{-1/2})$. 
    Hence, by our choice of $r_n$ such that $r_n\sqrt n\to\infty$, $P\{\theta^*(V_n)\in B_n\}\to1$. It follows that with probability $1-o(1)$, the line segment
    $$\theta_t:=\hat\theta_n+t\{\theta^*(V_n)-\hat\theta_n\},
    \qquad t\in[0,1],$$
    is contained in $B_n$.
    
     By the mean-value theorem and the first-order condition induced by $\theta^*(V_n)$, we have
    $$0 = \nabla_{\theta} l_{n,m,V}(\theta^*(V_n)) = \nabla_{\theta} l_{n,m,V}(\hat{\theta}_n) + \left(\int_{0}^1 \nabla_{\theta}^2 l_{n,m,V}(\theta_t) dt \right)(\theta^*(V_n)-\hat{\theta}_n).$$
    To study the behavior of $\bar{J}_{n,m}:=\int_{0}^1 \nabla_{\theta}^2 l_{n,m,V}(\theta_t) dt $, we define $J_0:= G_0^{\top} \Omega_0^{-1} G_0$ and appeal to Lemma \ref{lem:strong-convexity}, which demonstrates that $\sup_{\theta \in B_n} \|\nabla_{\theta}^2 l_{n,m,V}(\theta) - J_0\| = o_p(1)$. It follows that $\bar{J}_{n,m} \xrightarrow{p} J_0$, which is invertible w.h.p. Hence we have
    $$\sqrt{n}(\theta^*(V_n)- \hat{\theta}_n) = - \bar{J}_{n,m}^{-1} \sqrt{n} \nabla_{\theta} l_{n,m,V}(\hat{\theta}_n).$$
    To conclude the proof, we note that $\sqrt{n} \nabla_{\theta} l_{n,m,V}(\hat{\theta}_n) \xrightarrow{d} \mathcal{N}(0, J_0)$ by Lemma \ref{lem:etel-score}. An application of Slutsky's lemma yields the desired result.
\end{proof}

\subsection{Proof of Theorem \ref{thm:b-non-vanishing}} \label{subsec:proof-non-vanishing}

\begin{proof}
    The structure of the proof closely parallels the vanishing prior case in Theorem \ref{thm:bootstrap}, and hence we only provide a sketch. The key difference from Theorem \ref{thm:bootstrap}  is that all population quantities are now defined under the mixed law $F_{\gamma}$ rather than under $F_0$. In consequence, our discussion will center around $\hat{\theta}_{n,\gamma}$ rather than on the standard ETEL estimator $\hat{\theta}_n$ under $F_0$. As shown in Lemma \ref{lem:wlln-prime}, the weighted Dirichlet LLN still holds under $F_{\gamma}$. Additionally, Lemma \ref{lem:clt-prime} establishes the corresponding Dirichlet CLT under the non-vanishing prior regime. Together, Lemma \ref{lem:wlln-prime} and Lemma \ref{lem:clt-prime} essentially ensure that the proof techniques used in Theorem \ref{thm:bootstrap} carry over with only minor modifications. 
    
    For notational simplicity, we write $A_n:=n+\alpha_n=(1+\gamma)n.$ By  Lemma \ref{lem:wlln-prime}, the  weighted Dirichlet laws converge uniformly to their $F_\gamma$ limits. Hence the analogue of Lemma \ref{lem:global-consistency} gives
    $$\theta^*(V_n)\xrightarrow{p}\theta_\gamma, \qquad \hat\theta_{n,\gamma}\xrightarrow{p}\theta_\gamma.$$
    It follows that both estimators lie in the local neighborhood around $\theta_\gamma$ on which the analogue of the local expansions in the proof of Theorem \ref{thm:bootstrap} holds. In particular, for $\theta$ in that neighborhood, we still have
      \begin{align*}
        l_{n,m,V}(\theta) &= \frac{1}{2} S(\theta)^{\top} \hat{\Omega}_{n,m,V}^{-1}(\theta) S(\theta) + O(\|S(\theta)\|^3), \\
         \nabla_{\theta} l_{n,m,V}(\theta) &= G_{n,m,V}(\theta)^{\top} \hat{\Omega}_{n,m,V}^{-1}(\theta) S(\theta) + O(\|S(\theta)\|^2), \\
         \nabla_{\theta}^2l_{n,m,V}(\theta) &= G_{n,m,V}(\theta)^{\top} \hat{\Omega}_{n,m,V}^{-1}(\theta) G_{n,m,V}(\theta) + O(\|S(\theta)\|).
    \end{align*}
    
    As before, to relate $\theta^*(V_n)$ with the mixture ETEL estimator $\hat{\theta}_{n,\gamma}$, we define 
    $$B_{n,\gamma} := \{\theta:\|\theta-\hat\theta_{n,\gamma}\|\le r_n\}, \qquad r_n\downarrow0,\qquad r_n\sqrt{A_n}\to\infty.$$
    The analogues of Lemmas \ref{lem:strong-convexity} and  \ref{lem:localize-loss} yield $\|\theta^*(V_n)-\hat\theta_{n,\gamma}\| = O_p(A_n^{-1/2})$, so that $P\{\theta^*(V_n)\in B_{n,\gamma}\} \to1$. We define the line segment:
    $$\theta_t = \hat\theta_{n,\gamma} + t\{\theta^*(V_n)-\hat\theta_{n,\gamma}\}, \qquad t\in[0,1].$$
    Then, with probability $1-o(1)$, $\theta_t\in B_{n,\gamma}$ for all $t\in[0,1]$. By the first-order condition and the mean-value theorem,
    $$0 = \nabla_\theta l_{n,m,V}(\theta^*(V_n)) = \nabla_\theta l_{n,m,V}(\hat\theta_{n,\gamma}) + \bar J_{n,m,\gamma} \{\theta^*(V_n)-\hat\theta_{n,\gamma}\},$$
    where $\bar J_{n,m,\gamma} := \int_0^1 \nabla_\theta^2 l_{n,m,V}(\theta_t) dt.$ Appealing to the analogue of Lemma \ref{lem:strong-convexity} gives $\bar J_{n,m,\gamma}\xrightarrow{p}J_\gamma.$ Finally, by Lemma \ref{lem:etel-score-mixed}, we have
    $$\sqrt{A_n}\{\theta^*(V_n)-\hat\theta_{n,\gamma}\}
\mid \mathcal D_{n,m_n}
\overset{\mathbb P}{\rightsquigarrow}
\mathcal N(0,J_\gamma^{-1}).$$
    The result follows since 
    $$\sqrt{A_n} \{\theta^*(V_n)-\hat\theta_{n,\gamma}\} = -\bar J_{n,m,\gamma}^{-1} \sqrt{A_n} \nabla_\theta l_{n,m,V}(\hat\theta_{n,\gamma}) \xrightarrow{d} N(0,J_\gamma^{-1}),$$
    by Slutsky's lemma.
\end{proof}

\begin{lemma}[Dirichlet Weighted LLN under Non-Vanishing Prior]\label{lem:wlln-prime}

Under Assumption \ref{A4prime} and the analogues of Assumptions \ref{A1}--\ref{A3} under $F_\gamma$,
$$\sup_{\theta\in\Theta} \left\| S(\theta)-\mu_\gamma(\theta)\right\|=o_p(1),$$
and
$$\sup_{\theta\in\Theta,\eta\in\Lambda} \left| \psi_{n,m,V}(\eta,\theta) - \log\mathbb E_{F_\gamma} \{e^{\eta^\top g(X,\theta)}\}\right|=o_p(1).$$
Moreover, the same weighted uniform law holds on the local neighborhood
of $\theta_\gamma$ for the derivative classes appearing in
Lemma \ref{lem:wlln}.
\end{lemma}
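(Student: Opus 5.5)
The plan follows the proof of Lemma \ref{lem:wlln}, using the gamma representation of the Dirichlet weights. Let $\gamma_i\overset{\text{iid}}{\sim}\Gamma(1,1)$ for $i\le n$ and $\gamma_{n+j}\overset{\text{iid}}{\sim}\Gamma(\alpha_n/m_n,1)$ for $j\le m_n$. Put $W_n=\sum_{i\le n}\gamma_i$, $W_{\alpha_n}=\sum_{j\le m_n}\gamma_{n+j}$ and $\delta_n=W_{\alpha_n}/(W_n+W_{\alpha_n})$. Then
$$S(\theta)=(1-\delta_n)\sum_{i=1}^n\omega_i g(x_i,\theta)+\delta_n\sum_{j=1}^{m_n}\pi_j g(x_j^*,\theta),$$
with $\omega_i=\gamma_i/W_n$ and $\pi_j=\gamma_{n+j}/W_{\alpha_n}$. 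The difference from Lemma \ref{lem:wlln} is that $\delta_n$ no longer vanishes. Since $W_n\sim\Gamma(n,1)$ and $W_{\alpha_n}\sim\Gamma(\gamma n,1)$ are independent, $W_n/n\to1$ and $W_{\alpha_n}/n\to\gamma$ in probability. Hence $\delta_n\to\gamma/(1+\gamma)=\delta_\gamma$. Because $\mu_\gamma(\theta)=(1-\delta_\gamma)\bar g(\theta)+\delta_\gamma\mathbb E_{F_{\rm AI}}g(x,\theta)$, it suffices to show two uniform laws:
$$\sup_{\theta\in\Theta}\Bigl\|\sum_{i}\omega_i g(x_i,\theta)-\bar g(\theta)\Bigr\|=o_p(1)\quad\text{and}\quad\sup_{\theta\in\Theta}\Bigl\|\sum_j\pi_j g(x_j^*,\theta)-\mathbb E_{F_{\rm AI}}g(\cdot,\theta)\Bigr\|=o_p(1).$$
Each piece's supremum is $O_p(1)$ by the envelope conditions, so the error $|\delta_n-\delta_\gamma|$ multiplies a bounded quantity and is negligible.

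The observed-data piece is exactly the statement already proved in Lemma \ref{lem:wlln}: Newton's weighted LLN for a fixed $\theta$, upgraded to uniform convergence by the Glivenko--Cantelli property of the class. The envelope $M_\Theta$ is integrable under $F_0$, and $g$ is continuous in $\theta$ on compact $\Theta$, so bracketing numbers are finite.

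The synthetic piece is the new step, and I expect it to be the main obstacle. The shape parameter $\alpha_n/m_n=\gamma n/m_n$ may go to zero or to infinity, so the weights $\pi_j$ are not exchangeable exponentials of fixed law. I would handle it directly by writing
$$\sum_j\pi_j h(x_j^*)=\frac{(\gamma n)^{-1}\sum_j\gamma_{n+j}h(x_j^*)}{(\gamma n)^{-1}W_{\alpha_n}}.$$
The denominator tends to $1$. For the numerator with a fixed integrable $h$, I would truncate: set $h=h\mathbbm 1\{|h|\le K\}+h\mathbbm 1\{|h|>K\}$.

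For the bounded part, compute the conditional mean and variance given the atoms. The mean is $m_n^{-1}\sum_j h_K(x_j^*)$, which tends to $\mathbb E_{F_{\rm AI}}h_K$ since $m_n\to\infty$. The conditional variance is $(\gamma n)^{-2}\sum_j(\gamma n/m_n)h_K(x_j^*)^2\le K^2/(\gamma n)\to0$. This holds for any rate of $m_n$, which is why only $m_n\to\infty$ is needed.

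For the tail part, take the expectation of the absolute value: it is bounded by $\mathbb E_{F_{\rm AI}}|h|\mathbbm 1\{|h|>K\}$. This is small uniformly in $n$ once $K$ is large, by Assumption \ref{A2}(iv).

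Uniformity over $\theta$ then follows by the usual bracketing argument. Cover $\Theta$ by finitely many brackets $[l_k,u_k]$ with $\mathbb E_{F_{\rm AI}}(u_k-l_k)<\epsilon$, which continuity plus the envelope permits. Because the weights are nonnegative, the weighted sums are monotone in $h$, so it suffices to apply the pointwise result to the finitely many $l_k$ and $u_k$.

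For $\psi_{n,m,V}$, apply the same decomposition to the class $\mathcal H_2$. This gives $\sum_k v_ke^{\eta^\top g(x_k,\theta)}\to\mathbb E_{F_\gamma}e^{\eta^\top g}$ uniformly over $\Lambda\times\Theta$, using the exponential envelopes under $F_0$ and $F_{\rm AI}$. Jensen's inequality and the fact that $0\in\Lambda$ bound the limit below by $\exp\{\eta^\top\mu_\gamma(\theta)\}$. Since $M_\Theta$ is integrable and $\Lambda$ compact, this lower bound is uniformly bounded away from zero. The logarithm is then uniformly Lipschitz on the relevant range, which gives the second display.

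The local statements for $\mathcal H_3$--$\mathcal H_7$ near $\theta_\gamma$ follow verbatim, using the local envelopes in Assumption \ref{A2}(ii)--(iii) under both $F_0$ and $F_{\rm AI}$.
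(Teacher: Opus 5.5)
Your proposal is correct and follows the same route as the paper. Both use the gamma representation, split $S(\theta)$ into an observed and a synthetic Dirichlet average with mixing weight $\delta_n\to\delta_\gamma=\gamma/(1+\gamma)$, and appeal to Lemma~\ref{lem:wlln} for the observed part. You go further than the paper in two places: on the synthetic part, the paper silently replaces $\sum_j\pi_j g(x_j^*,\theta)$ with $\mathbb{P}^*_{m_n}g(\cdot,\theta)$ and cites the ordinary LLN, whereas your truncation argument and conditional-variance bound $K^2/(\gamma n)$ actually justify the Dirichlet-weighted average for any rate of $m_n\to\infty$; and your Jensen lower bound makes the passage to $\psi_{n,m,V}$ explicit, where the paper only says it proceeds ``in a similar fashion.''
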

\begin{proof}
    The proof closely follows Lemma \ref{lem:wlln}. Here we sketch the proof for the first display. As in Lemma \ref{lem:wlln}, we  write $W_n:=\sum_{i=1}^n \gamma_i$, $W_{\alpha_n}:= \sum_{j=1}^{m_n} \gamma_{n+j}$, and $\delta_n:= \frac{W_{\alpha_n}}{W_n + W_{\alpha_n}}$. In particular, $S(\theta) := \sum_{k=1}^{n+m} v_k g(x_k, \theta)$ can be reparametrized as follows:
    $$S(\theta):= (1-\delta_n) \sum_{i=1}^n \omega_i g(x_i, \theta)+ \delta_n \sum_{j=1}^{m_n} \pi_j g(x_j^*, \theta),$$
    where $\omega \sim \text{Dirichlet}(\mathbbm{1}_n)$, and $\pi \sim \mathrm{Dirichlet}\{(\alpha_n/m_n)\mathbbm{1}_{m_n}\}.$. Unlike in the vanishing prior situation, we have $\delta_n \xrightarrow{p} \frac{\gamma}{1+\gamma}:=\delta_{\gamma}$ under strong AI prior. In Lemma \ref{lem:wlln}, we have shown $\sup_{\theta \in \mathcal{N}} \|\sum_{i=1}^n \omega_i g(x_i, \theta)- \mathbb{E}_{F_0}[g(x,\theta)]\| = o_p(1)$. The usual LLN also gives $\sup_{\theta \in \mathcal{N}} \|\mathbb{P}^*_m g(., \theta) - \mathbb{E}_{\rm{AI}}[g(x, \theta)]\| = o_p(1)$. It follows:
    \begin{align*}
        S(\theta)&= (1-\delta_n) \mathbb{P}_n g(\cdot, \theta) + \delta_n \mathbb{P}_m^{*} g(., \theta) + o_p(1)\\
        & =  (1-\delta_\gamma) \mathbb{E}_{F_0} [g(\cdot, \theta)] + \delta_{\gamma} \mathbb{E}_{\rm{AI}}[g(., \theta)] + o_p(1) \\ 
        & = \mathbb{E}_{F_{\gamma}}[g(., \theta)] + o_p(1),
    \end{align*}
    uniformly over $\theta \in \Theta$. This proves the desired LLN for the $\mathcal{H}_1$ case. The arguments for the other classes can proceed in a similar fashion.
\end{proof}

\begin{lemma} [Dirichlet Weighted CLT under Non-Vanishing Prior]\label{lem:clt-prime}
Let $\hat{\theta}_{n,\gamma}$ be the ETEL estimator under the mixture law based on the observed sample and augmented samples $\mathcal{D}_{n,m_n}$. Define $\bar S(\theta)
:= \mathbb E [S(\theta)\mid\mathcal D_{n,m_n}]$. Then conditional on $\mathcal{D}_{n,m_n}$, we have
\begin{align*}
   & \sqrt{n+\alpha_n} (S(\hat{\theta}_{n, \gamma}) - \bar{S}(\hat{\theta}_{n,\gamma}))\xrightarrow{d} \mathcal{N} (0, \Omega_{\gamma,0}).
\end{align*}
\end{lemma}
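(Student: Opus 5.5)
The plan is to work with the gamma representation of the Dirichlet weights, exactly as in Lemmas \ref{lem:wlln} and \ref{lem:dir-clt}. Write $v_k=\gamma_k/\Gamma_n$ with $\Gamma_n:=\sum_{k=1}^{n+m_n}\gamma_k$. Here $\gamma_k\sim\Gamma(a_k,1)$ independently, with $a_k=1$ for $k\le n$ and $a_k=\alpha_n/m_n$ for the synthetic atoms, so that $\sum_k a_k=n+\alpha_n=A_n$.

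Fix $h_k:=g(x_k,\hat\theta_{n,\gamma})$; these are deterministic given $\mathcal D_{n,m_n}$. Since $\mathbb E[v_k\mid\mathcal D_{n,m_n}]=a_k/A_n$, we have $\bar S(\hat\theta_{n,\gamma})=A_n^{-1}\sum_k a_k h_k$. This gives the exact identity
$$S(\hat\theta_{n,\gamma})-\bar S(\hat\theta_{n,\gamma})=\frac{1}{\Gamma_n}\sum_{k}(\gamma_k-a_k)\bigl(h_k-\bar S(\hat\theta_{n,\gamma})\bigr),$$
because $\sum_k a_k(h_k-\bar S)=0$. Since $\Gamma_n/A_n\to1$ in probability (mean $A_n$, variance $A_n$), Slutsky reduces the claim to a conditional CLT for the independent, mean-zero sum
$$Z_n:=A_n^{-1/2}\sum_k(\gamma_k-a_k)\tilde h_k,\qquad \tilde h_k:=h_k-\bar S.$$

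The conditional covariance of $Z_n$ is $A_n^{-1}\sum_k a_k\tilde h_k\tilde h_k^\top$. This is exactly the centered second-moment matrix of $g(\cdot,\hat\theta_{n,\gamma})$ under the empirical mixture $F_{n,\gamma}=\frac{n}{A_n}\mathbb P_n+\frac{\alpha_n}{A_n}\mathbb P^*_{m_n}$. I will show it converges in probability to $\Omega_{\gamma,0}$ in three steps:
\begin{itemize}
\item Apply the ordinary uniform LLN (the unweighted version of Lemma \ref{lem:wlln-prime}) to the classes $\mathcal H_1$ and $\mathcal H_4$ on a neighborhood of $\theta_\gamma$, separately for $\mathbb P_n$ (under $F_0$) and $\mathbb P^*_{m_n}$ (under $F_{\rm AI}$, using $m_n\to\infty$).
\item Use $n/A_n\to 1-\delta_\gamma$ to combine the two pieces into a uniform LLN under $F_\gamma$.
\item Use consistency $\hat\theta_{n,\gamma}\to\theta_\gamma$ and continuity of $\Omega_\gamma(\cdot)$ to evaluate the limit at $\theta_\gamma$.
\end{itemize}
By Cramér--Wold it then suffices to verify a Lyapunov condition for $t^\top Z_n$ for each fixed $t$.

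The Lyapunov step is where I expect the main difficulty: the synthetic weights have vanishing or small shape $a=\alpha_n/m_n$, so one must check that they do not create heavy tails. For $\gamma\sim\Gamma(a,1)$ I will use
$$\mathbb E|\gamma-a|^{2+\delta}\le C\bigl(\mathbb E\gamma^{2+\delta}+a^{2+\delta}\bigr),\qquad \mathbb E\gamma^{2+\delta}=\frac{\Gamma(a+2+\delta)}{\Gamma(a)}.$$
The ratio $\Gamma(a+2+\delta)/\Gamma(a)$ is bounded by $C a$ uniformly over $a\le a_{\max}$ (for small $a$ it behaves like $a\,\Gamma(2+\delta)$). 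If $\alpha_n/m_n$ is unbounded, a separate bound of order $a^{1+\delta/2}$ for large $a$ is needed; I would handle it by splitting the cases or simply note that $m_n\gtrsim\alpha_n$ in practice. In the bounded case the Lyapunov ratio is bounded by
$$\frac{C}{A_n^{1+\delta/2}}\sum_k a_k\|\tilde h_k\|^{2+\delta}\le \frac{C'}{A_n^{\delta/2}}\Bigl[\frac{n}{A_n}\mathbb P_n M_\Theta^{2+\delta}+\frac{\alpha_n}{A_n}\mathbb P^*_{m_n}M_\Theta^{2+\delta}\Bigr]+o_p(1).$$
Here I use $\|\tilde h_k\|\le M_\Theta(x_k)+\|\bar S\|$ with $\|\bar S\|=O_p(1)$. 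The bracket is $O_p(1)$ by the $2+\delta$ envelope conditions in Assumption \ref{A2}(i),(iv), and hence the whole ratio tends to $0$ in probability.

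Finally, the Lindeberg--Feller theorem, applied conditionally along subsequences where the covariance and Lyapunov quantities converge almost surely, gives $Z_n\mid\mathcal D_{n,m_n}\overset{\mathbb P}{\rightsquigarrow}\mathcal N(0,\Omega_{\gamma,0})$. Multiplying by $A_n/\Gamma_n\to1$ then yields the stated result.
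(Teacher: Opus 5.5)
Your proposal is correct and follows the paper's proof essentially step for step. It uses the same ingredients: the gamma representation, the exact centering identity together with $\Gamma_n/A_n\to1$, a conditional Lindeberg--Feller/Lyapunov CLT for the independent triangular array, and convergence of $\sum_k (a_k/A_n)\tilde h_k\tilde h_k^\top$ to $\Omega_{\gamma,0}$ via the uniform LLN and $\hat\theta_{n,\gamma}\to\theta_\gamma$. Your explicit small-shape gamma moment bound fills in what the paper only summarizes as ``$\max_k p_{n,k}\to0$ plus a $(2+\delta)$ moment bound.''

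You also do not need the fallback condition $m_n\gtrsim\alpha_n$. Using $\mathbb{E}|\gamma-a|^{2+\delta}\le C(a+a^{1+\delta/2})$ for all $a>0$, the synthetic block's Lyapunov contribution is at most $C\bigl[\delta_\gamma A_n^{-\delta/2}+\delta_\gamma^{1+\delta/2}m_n^{-\delta/2}\bigr]\mathbb{P}^*_{m_n}\|\tilde h\|^{2+\delta}$. This vanishes because $m_n\to\infty$ under Assumption~\ref{A4prime}.
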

\begin{proof}
    Recall that we have defined $A_n= n+ \alpha_n$. Following the idea as in Lemma \ref{lem:dir-clt}, the Dirichlet parameter is defined as
    \[
    a_{n,k}
    =
    \begin{cases}
    1, & k=1,\ldots,n,\\
    \alpha_n/m_n, & k=n+1,\ldots,n+m_n.
    \end{cases}
    \]
    We again consider the Gamma representation of the Dirichlet distribution: let $Y_{n,k}\sim\Gamma(a_{n,k},1)$ and $V_k=\frac{Y_{n,k}}{\sum_{\ell}Y_{n,\ell}}$, then we have  $\mathbb{E}[V_k |\mathcal{D}_{n,m_n}] = \frac{a_{n,k}}{A_n} := p_{n,k}$. Set $z_{n,k}:=g(x_k,\hat\theta_{n,\gamma})$ and $\bar z_n:=\sum_k p_{n,k}z_{n,k}$. Then, we can write the scaled fluctuation of the Dirichlet-weighted average around its conditional mean as
    $$\sqrt{A_n}\sum_k(V_k-p_{n,k})z_{n,k} = \frac{A_n}{\sum_\ell Y_{n,\ell}} \left[
\frac{1}{\sqrt{A_n}}\sum_k (Y_{n,k}-a_{n,k})(z_{n,k}-\bar z_n) \right].$$
    Since $\sum_\ell Y_{n,\ell}/A_n\to_p1$, we will again apply the Lindeberg-Feller CLT to the independent triangular array inside the squared bracket. Given that $\max_k p_{n,k} \rightarrow 0$ and the weighted $(2+\delta)$ moment bound from the persistent-prior analogue of Assumption \ref{A2}, we can apply the Linderberg-Feller CLT to obtain
    $$\sqrt{A_n}\sum_k(V_k-p_{n,k})z_{n,k} \Rightarrow N(0,\Sigma_n),$$
    where $\Sigma_n = \sum_k p_{n,k} (z_{n,k}-\bar z_n)(z_{n,k}-\bar z_n)^{\top}.$
    Since $\hat\theta_{n,\gamma}\to_p\theta_\gamma$, we know $\Sigma_n\to_p\Omega_{\gamma,0}$, so the result follows.
   
\end{proof}

\begin{lemma} \label{lem:etel-score-mixed}
Evaluating the score at the mixed ETEL estimator yields 
$$\sqrt{n+\alpha_n} \nabla_{\theta} l_{n,m,V}(\hat{\theta}_{n,\gamma}) 
    \mid\mathcal D_{n,m_n}
    \overset{\mathbb P}{\rightsquigarrow}
    \mathcal N(0,J_\gamma),$$
where $J_\gamma = G_{\gamma,0}^\top\Omega_{\gamma,0}^{-1}G_{\gamma,0}. $
\end{lemma}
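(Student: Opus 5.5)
The plan mirrors Lemma \ref{lem:etel-score}, with every population quantity taken under the mixed law $F_\gamma$ and the scaling $\sqrt{A_n}$, where $A_n=n+\alpha_n$.

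\textbf{Step 1: local score expansion.} The local expansion of the score in \eqref{eq:score-expansion} only uses the Taylor argument of Lemma \ref{lem:error} on a neighborhood $\mathcal N_\gamma$ of $\theta_\gamma$. That argument still goes through here, because Lemma \ref{lem:wlln-prime} supplies the required uniform laws under $F_\gamma$. Since $\hat\theta_{n,\gamma}\to_p\theta_\gamma$, the estimator lies in $\mathcal N_\gamma$ with probability tending to one, and we get
\[
\nabla_\theta l_{n,m,V}(\hat\theta_{n,\gamma})=G_{n,m,V}(\hat\theta_{n,\gamma})^\top\hat\Omega_{n,m,V}^{-1}(\hat\theta_{n,\gamma})S(\hat\theta_{n,\gamma})+O(\|S(\hat\theta_{n,\gamma})\|^2).
\]

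\textbf{Step 2: decompose $S$ and show $\|S(\hat\theta_{n,\gamma})\|=O_p(A_n^{-1/2})$.} Write
\[
S(\hat\theta_{n,\gamma})=\{S-\bar S\}(\hat\theta_{n,\gamma})+\bar S(\hat\theta_{n,\gamma}).
\]
\begin{itemize}
\item The fluctuation term is $O_p(A_n^{-1/2})$ by Lemma \ref{lem:clt-prime}.
\item The conditional mean is exactly $\bar S(\theta)=F_{n,\gamma}g(\cdot,\theta)$, since $\mathbb E[V_k\mid\mathcal D_{n,m_n}]=a_{n,k}/A_n$. So $\bar S$ is the moment of the empirical mixture $F_{n,\gamma}$ that defines $\hat\theta_{n,\gamma}$.
\end{itemize}

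\textbf{Step 3: the first-order condition for $\hat\theta_{n,\gamma}$.} Apply the same expansion to the deterministic-weight criterion $\ell(\theta;F_{n,\gamma})$, exactly as in the second half of Lemma \ref{lem:dir-clt}. The first-order condition then gives
\[
\hat G_{n,\gamma}^\top\hat\Omega_{n,\gamma}^{-1}\bar S(\hat\theta_{n,\gamma})=O(\|\bar S(\hat\theta_{n,\gamma})\|^2),
\]
where $\hat G_{n,\gamma},\hat\Omega_{n,\gamma}$ are the $F_{n,\gamma}$-analogues of $\hat G_n,\hat\Omega_n$. We also need $\bar S(\hat\theta_{n,\gamma})=O_p(A_n^{-1/2})$. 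This is standard ETEL root-$n$ theory applied to the mixture sample, whose two parts have sizes $n$ and $m_n\to\infty$. I would obtain it from the same quadratic approximation $\ell\approx\frac12\bar S^\top\hat\Omega^{-1}\bar S$ evaluated at $\theta_\gamma$. There, $F_{n,\gamma}g(\cdot,\theta_\gamma)=O_p(n^{-1/2}+m_n^{-1/2})$ by the ordinary CLT on each component.

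\textbf{Step 4: combine.} Putting Steps 2 and 3 together, the quadratic remainders are $o_p(A_n^{-1/2})$. We also need
\[
\sqrt{A_n}\,\{G_{n,m,V}^\top\hat\Omega_{n,m,V}^{-1}-\hat G_{n,\gamma}^\top\hat\Omega_{n,\gamma}^{-1}\}\,\bar S=o_p(1).
\]
This holds because the bracket is $o_p(1)$ by Lemma \ref{lem:wlln-prime} while $\sqrt{A_n}\bar S=O_p(1)$. Hence
\[
\sqrt{A_n}\nabla_\theta l_{n,m,V}(\hat\theta_{n,\gamma})=G_{\gamma,0}^\top\Omega_{\gamma,0}^{-1}\sqrt{A_n}\{S-\bar S\}(\hat\theta_{n,\gamma})+o_p(1).
\]
Slutsky's lemma together with Lemma \ref{lem:clt-prime} then gives the limit $\mathcal N(0,G_{\gamma,0}^\top\Omega_{\gamma,0}^{-1}\Omega_{\gamma,0}\Omega_{\gamma,0}^{-1}G_{\gamma,0})=\mathcal N(0,J_\gamma)$. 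Here the convergences $G_{n,m,V}\to G_{\gamma,0}$ and $\hat\Omega_{n,m,V}\to\Omega_{\gamma,0}$ at $\hat\theta_{n,\gamma}$ come from Lemma \ref{lem:wlln-prime} and continuity. The weak convergence holds conditionally in probability because every $o_p(1)$ term is controlled given $\mathcal D_{n,m_n}$ in outer probability.

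\textbf{Main obstacle.} The delicate step is Step 3: showing $\bar S(\hat\theta_{n,\gamma})$ is $O_p(A_n^{-1/2})$, which is needed so that its quadratic remainder is negligible. If $m_n$ grows slowly, the synthetic component contributes fluctuations of order $m_n^{-1/2}\gg A_n^{-1/2}$, and the bound only yields $O_p(m_n^{-1/2})$. I would handle this in one of two ways. One option is to require $m_n\gtrsim n$, implicitly, via the assumption that the analogues of A1--A3 hold with root-$A_n$ behavior. The other is to note that only $\|\bar S\|^2=o_p(A_n^{-1/2})$ is needed, which holds whenever $m_n^{-1}=o(n^{-1/2})$; I would make that rate explicit.
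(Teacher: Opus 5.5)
Your route is the paper's: expand both the random score and the score of $\ell(\cdot\,;F_{n,\gamma})$ at $\hat\theta_{n,\gamma}$, use the first-order condition for $\hat\theta_{n,\gamma}$, then apply Lemma~\ref{lem:clt-prime} and Slutsky. The gap is the one you flag, and neither remedy closes it under the stated hypotheses.

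\textbf{Why the remainder bound fails.} Assumption~\ref{A4prime} (persistent prior) gives only $m_n\to\infty$. In the over-identified case, $\bar S(\hat\theta_{n,\gamma})$ is, to first order, the projection residual of $F_{n,\gamma}g(\cdot,\theta_\gamma)$. It is therefore generically of exact order $n^{-1/2}+m_n^{-1/2}$. Bounding the two remainders separately by $O(\|S\|^2)$ and $O(\|\bar S\|^2)$ gives $O_p(n^{-1}+m_n^{-1})$. This is not $o_p(A_n^{-1/2})$ when $m_n=O(\sqrt n)$.

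\textbf{Why your remedies do not repair it.} Imposing $m_n\gtrsim n$ or $m_n^{-1}=o(n^{-1/2})$ changes the hypotheses, so it proves a weaker statement. Nor is such a rate implicit in the $F_\gamma$-analogues of Assumptions~\ref{A1}--\ref{A3}: those are population identification, envelope and feasibility conditions and say nothing about how fast $m_n$ grows. The same problem reappears in your Step~4. There you control the cross term through $\sqrt{A_n}\,\bar S=O_p(1)$, which fails under your second option, since $\sqrt{A_n}\,\bar S\asymp\sqrt{n/m_n}$ can diverge. The paper's one-line proof simply asserts the remainder is $o_p((n+\alpha_n)^{-1/2})$. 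You have found a genuine soft spot, but your proposal does not close it.

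\textbf{A fix.} The rate condition comes from bounding the two remainders separately; what matters is their \emph{difference}. Both scores are values of the same map $F\mapsto\nabla_\theta\ell(\hat\theta_{n,\gamma};F)$ at two measures on the same atoms, $F_V$ and $F_{n,\gamma}$. Write the leading part of the remainder as $M_F[S_F,S_F]$, with $M_F$ a bilinear form built from $F$-weighted averages. Then
\[
M_{F_V}[S,S]-M_{F_{n,\gamma}}[\bar S,\bar S]=O\bigl(\|S-\bar S\|\,(\|S\|+\|\bar S\|)\bigr)+O\bigl(\|M_{F_V}-M_{F_{n,\gamma}}\|\,\|\bar S\|^2\bigr).
\]
The higher-order terms are handled the same way; in effect this is a functional delta-method expansion around $F_{n,\gamma}$. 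Suppose the functions entering $M_F$ have finite second moments. By the Dirichlet covariance formula, each weighted average has conditional variance $(A_n+1)^{-1}$ times an $F_{n,\gamma}$-variance, so $\|M_{F_V}-M_{F_{n,\gamma}}\|=O_p(A_n^{-1/2})$. The same bound makes your Step~4 bracket $O_p(A_n^{-1/2})$ as well.

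Every term is then $O_p(A_n^{-1/2})\cdot o_p(1)$. This needs only $\bar S(\hat\theta_{n,\gamma})=o_p(1)$, which does hold under the stated assumptions: $\hat\theta_{n,\gamma}\to_p\theta_\gamma$, $F_{n,\gamma}g$ converges uniformly to $\mu_\gamma$, and $\mu_\gamma(\theta_\gamma)=0$. With these two replacements your Steps~1--4 prove the lemma as stated.
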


\begin{proof}
    Let $l_{n,\gamma}^{0}(\theta):=\ell(\theta;F_{n,\gamma})$ denote the deterministic ETEL criterion based on $F_{n,\gamma}$. Then by definition, we have $\nabla_\theta l_{n,\gamma}^{0}(\hat\theta_{n,\gamma})=0.$ Using the same score expansion as in the main proof of Theorem \ref{thm:bootstrap} (see \eqref{eq:score-expansion}), applied once to $l_{n,m,V}$ and once to $l_{n,\gamma}^{0}$, would yield
    $$\nabla_\theta l_{n,m,V}(\hat\theta_{n,\gamma})-\nabla_\theta l_{n,\gamma}^{0}(\hat\theta_{n,\gamma}) = G_{\gamma,0}^\top\Omega_{\gamma,0}^{-1} \{S(\hat\theta_{n,\gamma})-\bar S(\hat\theta_{n,\gamma}) + o_p((n+\alpha_n)^{-1/2})\}.$$
    Then multiplying by $\sqrt{n+\alpha_n}$ and applying Lemma $\ref{lem:clt-prime}$ gives the result.

\end{proof}

\section{Engel Curve Recovery: Additional Details}

\subsection{System prompt} \label{subsec:prompt-design}

We provide the exact system prompt used to generate synthetic food share alternatives in the Engel-curve experiment. In practice, we found conditional generation of the outcome ($y_i \mid x_i, z_i$) is often more stable than generating the entire joint distribution of $(x_i, y_i, z_i)$. We also found it useful to solicit the model’s own suggestions when designing the prompt, so that the generated samples better reflect the underlying data-generating process while remaining consistent with the substantive knowledge of domain experts.

\begin{lstlisting}[style=promptstyle, caption={Exact system prompt used in the Engel-curve experiment}, label={lst:Engel-system-prompt}]
You are a conservative conditional outcome generator for a scalar Engel-curve simulation.

You will receive simulated household rows with columns:
[id, log_total_expenditure, log_gross_earnings, food_share]

Each row represents a working-age couple without children from an expenditure survey.
Rows are sorted by log_total_expenditure and then log_gross_earnings only to make the broad pattern easier to see.

For each row, KEEP log_total_expenditure and log_gross_earnings FIXED and generate K alternative plausible food_share values.
Your main goal is to infer a smoothed local conditional distribution, not to reproduce the exact observed value for the same row.

Main objective:
- Produce conservative, smoothed alternatives of plausible food_share values.
- Underfit rather than overfit if uncertain.
- Shrink toward the local median pattern of nearby rows if uncertain.

Economic guidance:
- Engel's law implies that, on average, food_share tends to FALL as total expenditure rises.
- log_total_expenditure is the main predictor.
- log_gross_earnings may matter, but its direct effect beyond expenditure should be weaker, smoother, and secondary.
- Nearby rows in (log_total_expenditure, log_gross_earnings) should have similar central food_share values.
- Keep every alternative food_share between 0 and 1.
- Avoid extreme tails unless strongly supported by the overall pattern in the data.

Important:
- Use the observed food_share values only to learn the broad and local conditional pattern.
- Treat the observed food_share in the same row as noisy; do NOT simply copy it.
- Do NOT reproduce row-specific noise.
- The K alternatives should be ORDERED from low to high and should represent a conservative spread around the same conditional distribution.
- The middle value should be close to the conditional center.
- The outer values should be mild deviations around that center, not extreme outliers.

Output STRICT JSON only with schema:
{"rows": [[id, food_share_alt_1, food_share_alt_2, ..., food_share_alt_K], ...]}

Rules:
- Return exactly one output row for each input id.
- Keep ids unchanged.
- Every alternative must be either a finite number in [0,1].
- Do not return null.
- No text, no explanations, no markdown.
\end{lstlisting}

\section{Equity Return Predictions: More Details and Experiments} \label{sec:finance-appendix}

\subsection{Data Collection}

We collect financial headline data from July 1, 2025, to December 31, 2025, using RavenPack via the WRDS API. We focus on the top 40 U.S. firms based on their market capitalization (price $\times$ shares outstanding) as of June 30, 2025. Since our prediction task centers around overnight equity return signs, we restrict attention to headlines released between 4:00 p.m.\ on day $(t-1)$ and 9:00 a.m.\ on day $t$. This prediction setting has been extensively studied in the finance literature (see \cite{chen2022llmreturns, LopezLiraTang2023ChatGPTStocks}).

Our analysis is conducted at the firm-date level. We filter headlines based on the relevance score provided by RavenPack, retaining only those with scores of at least 50 (on a scale from 1 to 100) to ensure that the retained headlines are meaningfully associated with the corresponding firm. For data cleaning, all text is converted to lowercase; HTML entities are decoded and Unicode is normalized (NFKC) to standardize character representations; URLs and zero-width characters are removed; and all whitespace is collapsed to single spaces with leading and trailing spaces trimmed.

As the underlying news data are proprietary, we do not release the raw headlines. However, we provide complete Python code to replicate the data retrieval and preprocessing pipeline using the WRDS API.

\subsection{Prompting Procedure}

For each firm-date observation in the training sample (July–August), we query \texttt{gpt-5.2} $200$ times via the OpenAI API with temperature $0.8$ to obtain $200$ synthetic sentiment scores, denoted by $z$. These scores are then transformed into synthetic binary labels $y^*$ according to
\[
y^* \sim \text{Bernoulli}\!\left(\text{sigmoid}(p_0 + z)\right),
\]
where $p_0$ is chosen so that $\operatorname{sigmoid}(p_0)$ matches the
baseline positive rate, approximately $0.54$, in the training data. This transformation ensures that the generated labels are centered around the empirical base rate, thereby avoiding systematic deviations from the distribution of the observed data.

The exact prompt used to generate the synthetic sentiment scores is provided below. Similar to the Engel curve experiment, we found it useful to ask GenAI for tips to construct a good prompt.

\begin{lstlisting}[style=promptstyle, caption={Exact system prompt used in the equity return prediction experiment}, label={lst:finance-prompt}]
You are an annotator of overnight market-news tone for US equities.

Task
----
For each firm-date observation, read the supplied REAL headlines and aligned source codes.
Return n independent draws of a latent net overnight continous tone score z in [-2, 2].

Interpretation of z
-------------------
- +2.0 : very bullish / strong positive catalyst
- +1.0 : moderately positive
-  0.0 : mixed, neutral, or only weakly informative
- -1.0 : moderately negative
- -2.0 : very bearish / strong negative catalyst

Guidance
--------
1) Use only the supplied headlines and source codes.
2) You are not asked to infer the realized future return exactly; instead, score
   the news tone a plausible market participant might perceive overnight.
3) Many nights are mixed or weakly informative. Most draws should be near zero.
   Extreme values should be rare and reserved for clearly strong catalysts.
4) Administrative, exchange, filing, promotional, or routine press-release items
   are usually weaker evidence than independent reported news.
5) Analyst rating / price-target changes are moderate evidence.
6) Strong earnings/guidance surprises, major litigation/regulatory outcomes,
   financing stress, M&A, management shocks, outages, or clearly material product
   news can justify larger |z|.
7) Draws should vary modestly around your central judgment:
   - more dispersion when the evidence is mixed or ambiguous
   - tighter draws when the catalyst is clear
8) Do not output explanations.

\end{lstlisting}

\subsection{Additional Experiment: Generating Synthetic News with Synthetic Labels} \label{subsec:full-dgp-generation}

In this manuscript, we primarily use generative AI in a conditional manner: given observed covariates, we prompt the model to generate synthetic labels. In principle, one could instead use generative AI to model the entire data-generating process. However, this approach is substantially more computationally demanding, and in our empirical experiments we did not observe meaningful performance improvements from doing so.

Specifically, for the equity return prediction task, an alternative data augmentation strategy is to generate synthetic news together with synthetic labels, rather than generating labels alone. In our experiments, we generate approximately $24{,}000$ firm-date synthetic news bundles with corresponding synthetic labels, and perform inference by augmenting the training sample with these synthetic observations.

Table \ref{tab:finance_main_results_full} provides a full extension of Table \ref{tab:finance_main_results} in the main text. In particular, the rows labeled “GPT-ETEL (synthetic-label)” and “$\ell_2$-logistic” correspond exactly to the results reported in the main manuscript, allowing for direct comparison with additional augmentation strategies considered here. The table shows that modeling the full data-generating process via generative AI by jointly generating synthetic news and labels does not lead to further improvements over the synthetic-label approach alone. However, augmenting the training data with synthetic observations can meaningfully improve the baseline $\ell_2$-logistic model, suggesting that generative augmentation is still beneficial.

\begin{table}[!t]
\centering
\caption{\textbf{Test-set performance in overnight news prediction.}
Results are based on $500$ Monte Carlo replications.}
\label{tab:finance_main_results_full}
\small
\setlength{\tabcolsep}{7pt}
\renewcommand{\arraystretch}{1.15}
\begin{tabular}{lcccc}
\toprule
& \multicolumn{2}{c}{AUC} & \multicolumn{2}{c}{Accuracy} \\
\cmidrule(lr){2-3} \cmidrule(lr){4-5}
Method & Mean & SD & Mean & SD \\
\midrule
GPT-ETEL (synthetic-label)
& 0.5743 & 0.0114
& 0.5605 & 0.0110 \\

GPT-ETEL (synthetic news + label)
& 0.5734 & 0.0116
& 0.5585 & 0.0113 \\

$\ell_2$-logistic
& 0.5597 & 0.0163
& 0.5476 & 0.0143 \\

Synthetic-only ($\alpha=0$)
& 0.5184 & 0.0119
& 0.5238 & 0.0101 \\
\bottomrule
\end{tabular}
\end{table}

The “Synthetic-only ($\alpha=0$)” row corresponds to a model trained solely on synthetic news bundles using our ETEL procedure, without any real data. Although its performance is lower than the other methods, the fact that its AUC remains above $0.5$ indicates that the synthetic news prior alone contains nontrivial predictive signal, and using GPT to construct synthetic news remains a sensible choice. 

Overall, these results suggest that conditional synthetic label generation is often sufficient to capture the benefits of generative-AI augmentation in this setting. Understanding when modeling the full data-generating process via generative AI can provide additional gains remains an interesting direction for future research.

\section{ATE Density Plot} 

\begin{figure}[t]          
  \centering
  \includegraphics[width=0.7\textwidth]{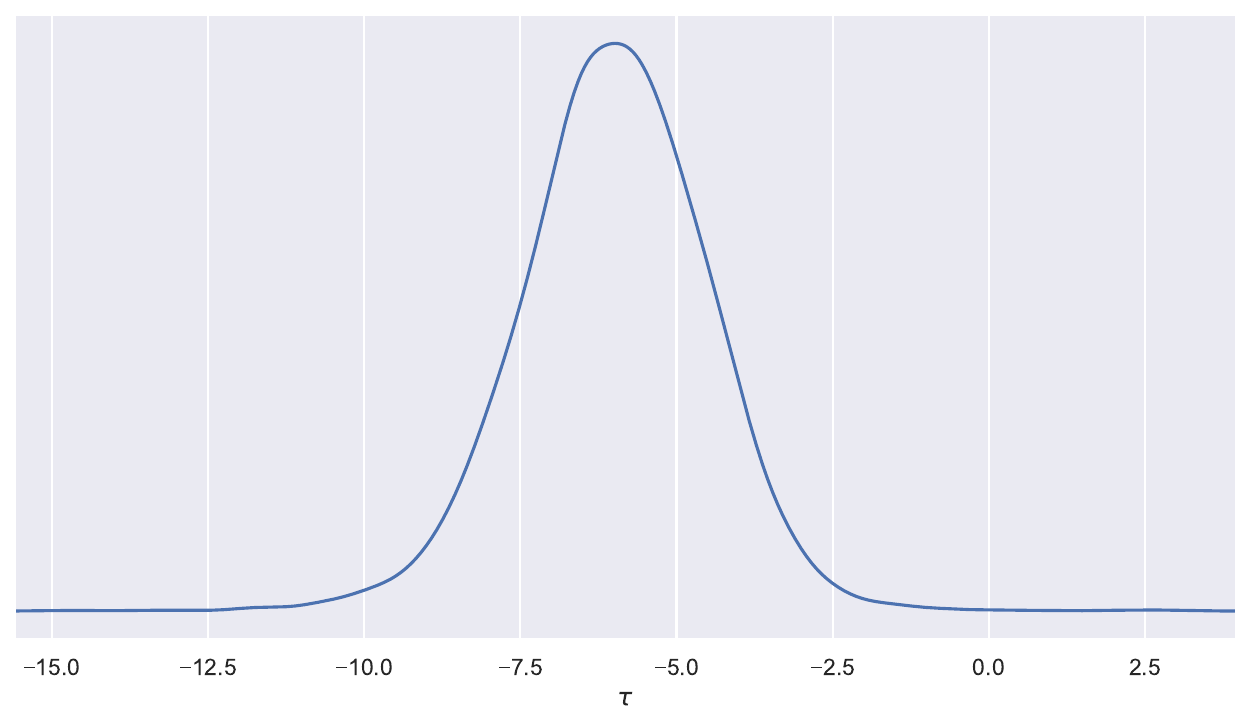} 
  \caption{Posterior distribution of the ATE on subsequent annual earnings of a substantial lottery win: ETEL bootstrap ($10,000$ draws).}
  \label{fig:tau}
\end{figure}

See Figure \ref{fig:tau} for posterior density estimation for the ATE experiment.
\end{document}